\newcolumntype{H}{>{\setbox0=\hbox\bgroup}c<{\egroup}@{}}
\def\thmhead@plain#1#2#3{%
	\thm@notefont{}
	\thmname{#1}\thmnumber{\@ifnotempty{#1}{ }\@upn{#2}}%
	\thmnote{ {\the\thm@notefont#3}}}
\let\thmhead\thmhead@plain
\newtheorem*{definition*}{Definition}
\newtheorem{definition}{Definition}
\newtheorem*{assumption*}{Assumption}
\newtheorem*{condition*}{Condition}
\newtheorem*{lemma*}{Lemma}
\newtheorem{lemma}{Lemma}
\newtheorem*{proposition*}{Proposition}
\newtheorem{proposition}{Proposition}
\newtheorem*{conjecture*}{Conjecture}
\newtheorem*{theorem*}{Theorem}
\newtheorem{theorem}{Theorem}
\newtheorem*{corollary*}{Corollary}
\newtheorem{corollary}{Corollary}
\newtheorem*{result*}{Result}
\newcommand{\indep}{\perp\!\!\!\!\perp} 
\titleformat*{\subsubsection}{\large\bfseries}
\newsavebox{\fmbox}
\title{When does IV identification not restrict outcomes?}
\author{Leonard Goff\thanks{\protect\linespread{1}\protect\selectfont Department of Economics, University of Calgary. I thank Simon Lee as well as Pat Kline, Eric Mbakop and Adam Rosen for helpful conversations about these ideas. Email: \texttt{leonard.goff@ucalgary.ca}.}}
\date{}
\begin{document}
	
	\maketitle
	
	\begin{abstract}
		Many identification results in instrumental variables (IV) models hold without requiring any restrictions on the distribution of potential outcomes, or how those outcomes are correlated with selection behavior. This enables IV models to allow for arbitrary heterogeneity in treatment effects and the possibility of selection on gains in the outcome. I provide a necessary and sufficient condition for treatment effects to be point identified in a manner that does not restrict outcomes, when the instruments take a finite number of values. The condition generalizes the well-known LATE monotonicity assumption, and unifies a wide variety of other known IV identification results. The result also yields a brute-force approach to reveal all selection models that allow for point identification of treatment effects without restricting outcomes, and then enumerate all of the identified parameters within each such selection model. The search uncovers new selection models that yield identification, provides impossibility results for others, and offers opportunities to relax assumptions on selection used in existing literature. An application considers the identification of complementarities between two cross-randomized treatments, obtaining a necessary and sufficient condition on selection for local average complementarities among compliers to be identified in a manner that does not restrict outcomes. I use this result to revisit two empirical settings, one in which the data are incompatible with this restriction on selection, and another in which the data are compatible with the restriction.
	\end{abstract}
	
	\large 
	\section{Introduction}
	
	To leverage instrumental variables (IV) with heterogeneous treatment effects, researchers often make assumptions about selection behavior, such as the ``monotonicity'' assumption of the seminal local average treatment effects (LATE) model \citep{Imbens2018}. Given this monotonicity assumption, the average treatment effect among compliers is point identified, with no restrictions imposed on the distribution of potential outcomes beyond them being independent of the instrument.
	
	This paper shows that a similar result holds broadly across IV models. If restrictions on selection behavior are sufficient to establish a particular generalization of the monotonicity assumption, then corresponding local average treatment effects are identified. Strikingly, when the treatments and instruments have finite support, this identification result also has a converse: for any treatment effect that conditions on selection behavior to be point identified without further restrictions on the distribution of potential outcomes, the selection model must permit the generalization of monotonicity to hold.
	
	Together this yields a necessary and sufficient condition for when identification of a given treatment effect or counterfactual mean is possible without ad-hoc assumptions regarding outcomes such as treatment effect homogeneity. I say that a parameter is identified in an \textit{outcome-nonrestrictive} way when it is point identified without any restrictions on the distribution of potential outcomes, beyond statistical independence between those outcomes and the instruments. This paper shows that  quite generally, the price of outcome-nonrestrictive identification is the need to make assumptions about selection into treatment. This trade-off may be quite appealing, particularly in ``design-based'' studies in which the researcher has contextual knowledge about a factor that affects treatment uptake in a given setting \citep{carddesignbased}, but may be reluctant to make assumptions about the very causal effect of interest (e.g. that it is homogeneous across individuals).
	
	Outcome-nonrestrictive identification results also have the practical benefit of paving the way for analysis to be repeated across multiple outcome variables, with maintained assumptions about selection in a given (natural) experiment aiding identification for each outcome. An outcome-nonrestrictive identification result is not fully indifferent to \textit{which} variable one uses as an outcome: one must make the standard independence and exclusion restrictions for each one. But in settings where treatment is as-good-as-randomly assigned and there are limited opportunities for the assignment to affect anything except via treatment (e.g. in certain experimental settings), such assumptions may be quite natural without much further justification specific to each outcome.
	
	When specialized to the case of a binary treatment, the main result of this paper can be stated succinctly as follows. Let $D_i(z)$ denote the counterfactual treatment of individual $i$ when the available instruments take value $z$, for each $z$ in a set $\mathcal{Z}$.
	\begin{result*}[(in the case of binary treatments)]
		Suppose $|\mathcal{Z}|$ is finite, and consider any subgroup of the population defined by their counterfactual selection behavior $D_i(\cdot)$. Then the average treatment effect among this subgroup is identified in an outcome-nonrestrictive way if and only if there exists a function $\alpha: \mathcal{Z} \rightarrow \mathbbm{R}$ such that
		\begin{equation} \label{eq:mainthmsummary}
			\left\{\sum_{z} \alpha(z) \cdot D_i(z)\right\} \in \{0,1\} \quad \textrm{ for all } i,
		\end{equation}
		where the subgroup is composed of the individuals $i$ for whom the term in brackets is equal to one. The ``if'' direction of the above further assumes that $\sum_{z} \alpha(z)=0$, and this restriction is needed for the ``only if'' direction if there are always-takers or never-takers.
	\end{result*}
	\noindent The above result is formalized in Theorem \ref{thm:suff} (which establishes sufficiency) and Theorem \ref{thm:necc} (which establishes necessity) of this paper, where each result is provided more generally for any finite set of treatments that are not necessarily binary, and is established for means of each potential outcome alone in addition to holding for treatment effects.\footnote{To see how Eq. \eqref{eq:mainthmsummary} nests the LATE identification result of \citet{Imbens2018} in the case of a single binary instrument, take the function $\alpha(z) = (-1)^{z+1}$, so that $\alpha(0)=-1$ and $\alpha(1)=1$. The quantity $\sum_{z} \alpha(z) \cdot D_i(z) = D_i(1)-D_i(0)$ is equal to either $0$ or $1$ for all individuals $i$, and is equal to unity only for the compliers.}
	
	When generalized beyond the binary-treatment case, the above result delivers a simple geometric characterization of outcome-nonrestrictive identification in IV models. Units in the population can be partitioned by which ``response type'' $G_i \in \mathcal{G}$ they belong to, and the conditioning event that defines a target parameter by functions $c:  \mathcal{G}\rightarrow \{0,1\}$ that indicates which response types are considered by that parameter. In models with a finite selection model $\mathcal{G}$, a local counterfactual mean $\mathbbm{E}[Y_i(t)|c(G_i)=1]$ is point-identified in an outcome-nonrestrictive manner if and only if a vector representation of $c$ belongs to a particular linear subspace of $\mathbbm{R}^{|\mathcal{G}|}$, where the subspace depends on what restrictions are assumed about selection behavior through the choice of $\mathcal{G}$. Intuitively, ``linearity'' arises throughout from the law of iterated expectations over the latent types $g$. While a counterfactual mean $\mathbbm{E}[Y_i(t)|c(G_i)=1]$ can be expressed as a convex linear combination of $\mathbbm{E}[Y_i(t)|G_i=g]$ over the $g$ for which $c(g)=1$, the observable distribution of $Y$ conditioned treatment realization $t$ similarly amounts to a mixture over conditional distributions of $Y_i(t)$ that condition on the various response types.\footnote{A similar structure is exploited in previous work that has used linear programming approaches to identification under LATE monotonicity (e.g. \citealt{MTS}). These authors generally focus on partial identification under specific monotonicity assumptions (see also \citet{Mogstad2020b,kamat2023identification} for related results). I characterize point identification for arbitrary selection models.}
	
	The perspective of outcome-nonrestrictive identification turns out to unify a wide variety of existing IV identification results in the literature. Theorem \ref{thm:suff} provides a simple and common proof of point identification for settings including: i) the original LATE model \citep{Imbens2018}; ii) the marginal treatment effect (\citealt{Heckman2001, Heckman2005}) and its generalization to multivalued treatments \citep{Lee2018a}; iii) unordered monotonicity \citep{Heckman2018}; iv) vector monotonicity with multiple instruments \citep{goff2024vector}; v) restrictions on choice and/or knowledge of second-best options \citep{kirkeboenleuvenmogstad}; vi) interaction effects between two treatments \citep{blackwell2017}; and vii) recent notions of monotonicity that are only required to hold between particular \textit{pairs} of instrument values \citep{sun2024pairwise,cclate, sigstad2024marginal}. I contrast the above results with other identification results from the IV literature that weaken assumptions about selection while leveraging additional assumptions about outcomes and are therefore not outcome-nonrestrictive \citep{imbensangristrubin,Kolesar2013,toleratingdefiance,comey2023supercompliers}.
	
	In the other direction, the necessary part of my result (Theorem \ref{thm:necc}) motivates a comprehensive search over all outcome-nonrestrictive identification results, which is feasible in settings in which the instruments and treatments have small support.\footnote{Further, I provide code that enables quick enumeration of identified parameters given a user-provided selection model.} For a given finite set of support points of the instrument and treatment variables, the set of possible selection models is finite and can in principle be enumerated by brute force. Within each such selection model, I show that the set of possible functions $\alpha(\cdot)$ can also be enumerated. We can therefore search systematically over the opportunities for IV identification that place no modeling restrictions on outcomes and have \textit{not} yet been revealed in the literature. Section \ref{sec:bruteforce} proposes two algorithms that implement this insight, which I apply to uncover all outcome-nonrestrictive identification results for binary or ternary instruments and treatments. In many cases the selection models underlying these results can be empirically motivated, and in some cases they offer opportunities to relax assumptions made in existing work. Section \ref{sec:empirical} presents an extended application of this search to the identification of interaction effects in cross-randomized experiments. This setting illustrates how the computational approach can distill a unified and economically meaningful model of selection that is necessary (and not just sufficient) for identification.
	
	In establishing sufficient \textit{and} necessary conditions for identification in IV models, the perspective of this paper is related to recent results by \citet*{navjeevan2023identification} (NPS). NPS begin with an IV setup that is similar to that of this paper, but consider the identification of unconditional moments of latent heterogeneity in general rather than focusing on moments of potential outcomes that condition on selection. The conditions for identification of ``conditional'' and ``unconditional'' versions of such moments are not equivalent in general, but my Theorem \ref{thm:necc} establishes that they are when the former is outcome-nonrestrictive. In Appendix \ref{sec:nps} I show that while my Theorem \ref{thm:suff} can be obtained as a corollary to results found in NPS, the same is not true of Theorem \ref{thm:necc}. Without Theorem \ref{thm:necc}, we would lack a guarantee that the search for outcome-nonrestrictive identified local average treatment effects executed in this paper is exhaustive. 
	
	The structure of the paper is as follows. Section \ref{sec:oaid} begins by formalizing a definition of the notion of ``outcome-nonrestrictive'' identification in IV models. Section \ref{sec:posi} introduces the idea of \textit{binary combinations}, a name I give to instances in which an analog of \eqref{eq:mainthmsummary} holds for general unordered discrete treatments. I show there that whether the instruments are discrete or continuous, binary combinations are sufficient for outcome-nonrestrictive identification of local counterfactual means. Further, collections of binary combinations across different treatment values (but isolating the same response types) are sufficient to identify local average treatment effects. I call such collections of binary combinations \textit{binary collections}. Appendix \ref{sec:examples} details how the notion of binary collections nests a broad range of identification results for treatment effects from the literature.
	
	Section \ref{sec:discrete} then specializes to the case of discrete and finite instruments, and shows that in such settings binary combinations and binary collections are the \textit{only} cases in which local counterfactual means or local average treatment effects can be identified in an outcome-nonrestrictive way. Building on an algebraic characterization of binary combinations, Section \ref{sec:bruteforce} presents new results on the identification of various local average treatment effect parameters after enumerating all binary collections by brute-force, with some detailed examples examined in Appendix \ref{app:examples}. A full catalog of such identification results is presented in Appendix \ref{sec:catalog}, for settings with small instrument/treatment support. 
	
	Section \ref{sec:empirical} turns to a specific application of the results to the identification of interaction effects in order to assess the ``complementarity'' between two treatments in cross-randomized designs. I find that the average interaction effect between the treatments can be identified among a complier subgroup without restricting outcomes if and only if the selection model allows just five response types, composed of the individuals who decide ``separately'' whether to select into each of the two treatments. I use this result to revisit two empirical applications: one in which the data are incompatible with observable implications of this selection model, and another in which the data are. In this application, I proceed to estimating the extent of complementarity, yielding new evidence on the interaction between pharmacotherapy and livelihoods assistance in combatting depression.
	
	\section{Defining outcome-nonrestrictive IV identification} \label{sec:oaid}
\subsection{Setup and notation}
Let treatment $t$ take values in a finite set $\mathcal{T}$. Denote potential outcomes as $Y_i(t,z)$ and potential treatments as $T_i(z)$, where $Z_i$ are instruments with support $\mathcal{Z}$. I'll refer to $Z_i$ as ``the instruments'', since in general it can be a vector of instrumental variables. The index $i$ corresponds to observational units, i.e. ``individuals''.

\subsection{IV model assumptions} \label{sec:ivmodel}
Let $D^{[t]}_i(z) = \mathbbm{1}(T_i(z) = t)$ be an indicator for $i$ taking treatment $t$ when the instruments are equal to $z$. I throughout impose the exclusion restriction that $Z_i$ only affects $Y_i$ through $T_i = T_i(Z_i)$, so that $Y_i(t,z)=Y_i(t)$ for all $i$ and $z \in \mathcal{T}$. The observed outcome $Y_i$ is then:
\begin{equation} \label{eq:outcomeeq}
	Y_i= Y_i(T_i(Z_i))=\sum_{t \in \mathcal{T}}D^{[t]}_i(Z_i) \cdot Y_i(t)
\end{equation} 
Let $G_i: \mathcal{Z} \rightarrow \mathcal{T}$ be $i$'s ``response type'', i.e. the function that yields individual $i$'s counterfactual treatment value for each possible instrument value $z$. Let $\mathcal{G}$ be the set of all admissible response types. Let us denote the full set of conceivable functions from $\mathcal{Z}$ to $\mathcal{T}$ as $\mathcal{T}^{\mathcal{Z}}$. Any $\mathcal{G} \subset \mathcal{T}^{\mathcal{Z}}$ reflects a restriction on the response types, which I refer to as a \textit{selection model} or \textit{choice model}. 

I will also assume throughout that the instruments are exogenous in the sense that
\begin{equation} \label{eq:independence}
	Z_i \indep (\tilde{Y}_i,G_i)
\end{equation}
where $\tilde{Y} = \{Y_i(t)\}_{t \in \mathcal{T}}$ is a vector of potential outcomes across all treatments $t$. Eq. \eqref{eq:independence} says that potential outcomes and potential treatments are jointly independent of the instruments. In applications, researchers often defend a \textit{conditional} version of \eqref{eq:independence}, i.e. $\{Z_i \indep (\tilde{Y}_i,G_i)\} | X_i$, where $X_i$ are observed covariates unaffected by treatment. Since my focus in this paper is on identification and not estimation of treatment effects, I suppress throughout conditioning on any such covariates for ease of exposition, and consider them in Appendix \ref{app:covs}.

Intuitively, the notion of \textit{outcome-nonrestrictive} identification amounts to identification of a causal parameter that holds without restrictions on the distribution of $(\tilde{Y}_i,G_i)$, apart from \eqref{eq:independence} and that $supp\{G_i\} \subseteq \mathcal{G}$, where $supp\{G_i\}$ is the support of the response types $G_i$. However, defining the notion of outcome-nonrestrictive identification in a formal way requires accounting for possible restrictions on observables implied by a given selection model $\mathcal{G}$. The remainder of this Section (subsections \ref{sec:observable} and \ref{sec:outcomenonrestrictive} below) develops some notation to give this formal definition, before turning to the first main result in Section \ref{sec:posi}. Note that neither the definition of outcome-nonrestrictive identification---not my results concerning it---restrict the support of $Y_i$ (e.g. that it be discrete or bounded). 

\subsection{Observable restrictions implied by the model} \label{sec:observable}
Let $\mathcal{P}$ denote the joint distribution of the model fundamentals $(G_i,\tilde{Y}_i,Z_i)$. Given \eqref{eq:independence}, we can decompose $\mathcal{P}$ as
$$\mathcal{P} = \mathcal{P}_{latent} \times \mathcal{P}_Z,$$
where $\mathcal{P}_Z$ denotes the distribution of the instruments $Z_i$ and $\mathcal{P}_{latent}$ denotes the distribution of the latent (counterfactual) variables of the model $\tilde{Y}$ and $G$.\footnote{By $\mathcal{P} = \mathcal{P}_{latent} \times \mathcal{P}_Z$, I mean that for any Borel set $\mathcal{B}_{L}$ of values for $(G_i,\tilde{Y}_i)$ and $\mathcal{B}_{Z}$ of values for $\mathcal{P}_Z$ we have $\mathcal{P}(\mathcal{B}_{L} \times \mathcal{B}_{Z}) = \mathcal{P}_{latent}(\mathcal{B}_{L}) \cdot \mathcal{P}_{Z}(\mathcal{B}_{Z})$, where $\mathcal{B}_{L} \times \mathcal{B}_{Z}$ is the Cartesian product of $\mathcal{B}_{L}$ and $\mathcal{B}_{Z}$.}

A generic causal parameter of interest $\theta$ is a functional $\theta(\mathcal{P})$ of the distribution $\mathcal{P}$ of model  variables. Let $\mathcal{P}_{obs}$ denote the distribution of observable variables $(Y_i,T_i,Z_i)$. Note that $\mathcal{P}_Z$ is a marginalization of $\mathcal{P}_{obs}$ over $Y_i$ and $T_i$. I make use of the following notational convention: for a sub-vector $W_0$ of a random vector $W$, let $\mathcal{P}_{W_0}(\mathcal{P}_{W})$ be the distribution of $W_0$ that arises after marginalizing distribution $\mathcal{P}_W$ over the components of $W$ not included in $W_0$. In this notation, for example, $\mathcal{P}_Z = \mathcal{P}_Z(\mathcal{P}_{obs})$. 

Define $\mathscr{P}_{latent}(\mathcal{G})$ to be the set of $\mathcal{P}_{latent}$ compatible with a given selection model $\mathcal{G}$ and admitting of finite moments:
\begin{equation} \label{eq:platentdef}
	\mathscr{P}_{latent}(\mathcal{G}):= \{ \mathcal{P}_{latent} \in \mathscr{P}_{\tilde{Y}G}:  supp(\mathcal{P}_{\mathcal{G}}(\mathcal{P}_{latent})) \subseteq \mathcal{G}\}
\end{equation}
where we let $\mathscr{P}_{\tilde{Y}G}$ denote the set of all distributions over $(\tilde{Y}_i,G_i)$, such that $\mathbbm{E}[Y_i(t)|G_i=g]$ exists and is finite for each $t \in \mathcal{T}$ and $g \in \mathcal{G}$. Employing a similar notation, we let $\mathscr{P}_Z$ be the set of distributions over instrument values that embed any maintained support restrictions (e.g. that $Z_i$ is binary with $P(Z_i=1) \in (0,1)$).

Note that for any $\mathcal{P} = \mathcal{P}_{latent} \times \mathcal{P}_Z$, Eq. \eqref{eq:outcomeeq} and $T_i=T_i(Z_i)$ imply a distribution of observables. Let $\phi$ denote this map so that $ \mathcal{P}_{obs} = \phi(\mathcal{P})$. The set of possible distributions of observables given a selection model $\mathcal{G}$ is $$\mathscr{P}_{obs}(\mathcal{G}) := \{\phi(\mathcal{P}_{latent} \times \mathcal{P}_Z): \mathcal{P}_{latent} \in \mathscr{P}_{latent}(\mathcal{G}), \mathcal{P}_Z \in \mathscr{P}_Z\}$$
All together, we can think of the basic IV model as the set of distributions $M=\{\mathcal{P}_{latent} \times \mathcal{P}_{Z}: \mathcal{P}_{latent} \in \mathscr{P}_{latent}(\mathcal{G}), \mathcal{P}_{Z} \in \mathscr{P}_{Z}\}$. In this notation note that $\mathscr{P}_{obs}(\mathcal{G})=\phi(M)$.


In general, $\mathscr{P}_{obs}(\mathcal{G})$ is a strict subset of the set of all joint distributions of $(Y_i,T_i,Z_i)$, i.e. restrictions on $\mathcal{G}$ coupled with Eq. (\ref{eq:independence}) imply testable implications on $\mathcal{P}_{obs}$. These testable implications have been studied in the case of the classic LATE model (see e.g. \citealt{kitagawatesting,mourifiewan,kedagniandmourifie}, see also \citealt{jiang2023testing}). Such restrictions are discussed further in Section \ref{sec:testable}. 

\subsection{Outcome nonrestrictive IV identification} \label{sec:outcomenonrestrictive}
In defining outcome-nonrestrictive identification, I focus on parameters $\theta$ that take the form of a conditional counterfactual mean $\mu_c^t := \mathbbm{E}[Y_i(t)|c(G_i)=1]$, a conditional treatment effect $\Delta^{t,t'}_c := \mathbbm{E}[Y_i(t')-Y_i(t)|c(G_i)=1]=\mu_c^{t'}-\mu_c^t$, or a probability $P(c(G_i)=1)$. In all three cases, the target parameter is defined from of a function $c: \mathcal{G} \rightarrow \{0,1\}$ that represents inclusion in some collection of response types. For example, in the LATE model, the LATE is a conditional treatment effect $\Delta^{t,t'}_c$ with $t'=1,t=0$ and $c(g) = \mathbbm{1}(g = \textrm{complier})$. 

Given such a function $c(\cdot)$, it will be useful to denote the subset of $\mathscr{P}_{latent}(\mathcal{G})$ for which $P(c(G_i)=1)>0$ given the distribution $\mathcal{P}_G$ of $G_i$ as:
\begin{equation} \label{eq:platentcdef}
	\mathscr{P}_{latent,c}(\mathcal{G}):= \{ \mathcal{P}_{latent} \in \mathscr{P}_{latent}(\mathcal{G}) \textrm{ and } P(c(G_i)=1)>0 \textrm{ according to } \mathcal{P}_{G}(\mathcal{P}_{latent})\}
\end{equation}
Similarly, let $\mathscr{P}_{obs,c}(\mathcal{G}) := \{\phi(\mathcal{P}_{latent} \times
 \mathcal{P}_Z): \mathcal{P}_{latent} \in \mathscr{P}_{latent,c}(\mathcal{G}), \mathcal{P}_Z \in \mathscr{P}_Z\}$. $\mathscr{P}_{obs,c}(\mathcal{G})$ consist of the distributions of observables that respect selection model $\mathcal{G}$ and put positive probability on the groups $g \in \mathcal{G}$ such that $c(g)=1$. These sets are used in defining outcome-nonrestrictive identification as a simple guarantee that the target parameters $\mu^t_c$ and $\Delta^{t,t'}_c$ are well-defined. But causal parameters that condition on a probability-zero event---such as the marginal treatment effect---can also be accommodated in this framework, as limiting cases of a sequence of parameters for $c_j$ satisfying $P(c_j(G_i)=1)>0$ (see Appendix \ref{sec:examples}).
 
We are now ready to give a definition of outcome-nonrestrictive identification, where the target parameter $\theta$ is expressed as a function $\theta=\theta(\mathcal{P})$ of the data generating process $\mathcal{P}$:
\begin{definition} \label{def:oai}
	Given a choice model $\mathcal{G}$, we say that parameter $\theta$ with conditioning function $c$ is \textbf{outcome-nonrestrictive} identified under $\mathcal{G}$ if the set $$\{\theta(\mathcal{P}): \phi(\mathcal{P}) = \mathcal{P}_{obs} \textrm{ and } \mathcal{P} = (\mathcal{P}_{latent} \times \mathcal{P}_{Z}) \textrm{ for some } \mathcal{P}_{latent} \in \mathscr{P}_{latent,c}(\mathcal{G}) \textit{ and } \mathcal{P}_{Z} \in \mathscr{P}_{Z}\}$$
	is a singleton for all $\mathcal{P}_{obs} \in \mathscr{P}_{obs,c}(\mathcal{G})$.
\end{definition}
\noindent Point identification in general says there is a unique value $\theta(\mathcal{P})$ compatible with Eq. \eqref{eq:independence} and $\phi(\mathcal{P})$, for all $\mathcal{P}$ in some set defined by the model. The key requirement that identification be \textit{outcome-nonrestrictive} is that this model is broad enough to include \textit{all} of $\mathscr{P}_{latent,c}(\mathcal{G})$.\footnote{\label{fn:idzoo}Definition \ref{def:oai} represents a case of point identification as defined in \citet{idzoo} (see also \citealt{matzkin2007}), where the known information ($\phi$ in Lewbel's notation) is the distribution $\mathcal{P}_{obs}$, the model value ($m \in M$ in Lewbel's notation) is $\mathcal{P} = \mathcal{P}_{latent} \times \mathcal{P}_Z$, and the model $M$ is the Cartesian product of $\mathscr{P}_{latent,c}(\mathcal{G})$ and $\mathscr{P}_Z$.} The set $\mathscr{P}_{latent,c}(\mathcal{G})$ allows what \citet{Heckman2004} call \textit{essential heterogeneity}. The only restrictions on outcomes amount to IV independence (imposed by taking the product measure $\mathcal{P}=\mathcal{P}_{latent} \times \mathcal{P}_Z$), exclusion (implicit in the notation $Y_i(t)$), and finite group-specific means of $\tilde{Y}_i$ (imposed through $\mathcal{P}_{\tilde{Y}G}$ in \eqref{eq:platentdef}).\footnote{$\mathscr{P}_{latent,c}(\mathcal{G})$ does restrict the marginal distributions of $G_i$ and $Z_i$: through $\mathcal{G}$, $P(c(G_i)=1)>0$, and $\mathscr{P}_Z$.} Thus $\mathscr{P}_{latent,c}(\mathcal{G})$ is compatible with any marginal distribution $\mathcal{P}_{\tilde{Y}}$ of $\tilde{Y} = \{Y_i(t)\}_{t \in \mathcal{T}}$ or selection-type conditioned distributions $\mathcal{P}_{\tilde{Y}|G=g}$ across various $g \in \mathcal{G}$ whatsoever (provided that they have finite means), so there is no assumption that e.g. treatment effects are homogeneous across units, or are unrelated to counterfactual selection behavior $G_i$.  

Note that identification of $\mu_c^t$ and $\mu_c^{t'}$ immediately implies identification of $\Delta^{t,t'}_c=\mu_c^{t'}-\mu_c^{t}$. With outcome-nonrestrictive identification, this implication in fact goes the other way as well: outcome nonrestrictive identification of $\Delta^{t,t'}_c$ requires outcome-nonrestrictive identification of each constituent part $\mu_c^t$, $\mu_c^{t'}$ . The intuition is that absent assumptions about the joint distribution of potential outcomes, data from individuals with $T_i=t'$ provide no information about $Y_i(t')$ for a different treatment $t'\ne t$, and vice-versa. Thus, we have:
\begin{proposition} \label{prop:ifTEthenmeans}
	$\Delta^{t,t'}_c$ for $t'\ne t$ is outcome-nonrestrictive identified if and only if $\mu_c^t$ and $\mu_c^{t'}$ are.
\end{proposition}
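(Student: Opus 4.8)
\emph{Proof strategy.} The two directions are very asymmetric. The ``if'' direction is immediate: if $\mu_c^t$ and $\mu_c^{t'}$ are each outcome-agnostic identified under $\mathcal{G}$, then for every $\mathcal{P}_{obs}\in\mathscr{P}_{obs}(\mathcal{G})$ all latent distributions compatible with $\mathcal{P}_{obs}$ and \eqref{eq:independence} yield a common value of $\mu_c^t$ and a common value of $\mu_c^{t'}$, hence a common value of $\Delta^{t,t'}_c=\mu_c^{t'}-\mu_c^t$; nothing about the IV model is used beyond the fact that a difference of two point-identified functionals is point-identified. All the content is in the ``only if'' direction. I would prove it directly: assuming $\Delta^{t,t'}_c$ is outcome-agnostic identified under $\mathcal{G}$, I argue that $\mu_c^t$ is too, and then $\mu_c^{t'}=\mu_c^t+\Delta^{t,t'}_c$ is identified as a sum of two identified objects.

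The engine is a separability observation combining the ``Observation'' stated just above the proposition with the definition of $\mu_c^t$. Fix $\mathcal{P}_{obs}$. The only links between a latent distribution $\mathcal{P}_{latent}$ and the observables are: that its marginal over response types be consistent with the observed law of $(T_i,Z_i)$ and with $\mathcal{G}$; and that, \emph{separately for each treatment value $s$}, the (sub-probability) mixture $\sum_{g:\,g(z)=s}P(G_i=g)\,F_{Y(s)\mid G_i=g}$ reproduce the observed conditional law of $Y_i$ given $T_i=s,Z_i=z$, for each $z$. No restriction at all is placed on how $\{Y_i(s)\}_s$ are coupled across arms. On the other hand, $\mu_c^t=\mathbbm{E}[Y_i(t)\mid c(G_i)=1]$ depends on $\mathcal{P}_{latent}$ only through the joint law of $(Y_i(t),G_i)$: not on $\{F_{Y(t')\mid G_i=g}\}_g$, and not on the coupling. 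Consequently, holding the response-type marginal fixed at any value $\mathcal{P}_G$ consistent with $\mathcal{P}_{obs}$ and $\mathcal{G}$, the attainable value of $\mu_c^t$ ranges over a set $M_t(\mathcal{P}_G)$ carved out by the arm-$t$ equations alone, that of $\mu_c^{t'}$ over a set $M_{t'}(\mathcal{P}_G)$ carved out by the arm-$t'$ equations alone, and --- taking $Y_i(t)\indep Y_i(t')$ conditional on $G_i$ --- every pair in the Cartesian product $M_t(\mathcal{P}_G)\times M_{t'}(\mathcal{P}_G)$ is attainable. Since $\Delta^{t,t'}_c$ is assumed identified, say equal to $\delta_0$, this product must lie inside the slope-one line $\{(x,y):y-x=\delta_0\}$ for every consistent $\mathcal{P}_G$; a Cartesian product contained in a line of slope one is a single point, so $M_t(\mathcal{P}_G)$ and $M_{t'}(\mathcal{P}_G)$ are each singletons for every consistent $\mathcal{P}_G$. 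It then remains only to show that the singleton value of $M_t(\mathcal{P}_G)$ is the \emph{same} for all consistent $\mathcal{P}_G$, which gives outcome-agnostic identification of $\mu_c^t$.

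That last step is the main obstacle: a priori, the multiplicity in the value of $\mu_c^t$ could be driven entirely by non-identification of the response-type marginal --- $\mu_c^t$ being pinned down at each admissible $\mathcal{P}_G$ separately, but at different values as $\mathcal{P}_G$ varies --- and this must be ruled out. I plan to rule it out with a convexity argument: the set of latent distributions compatible with a fixed $\mathcal{P}_{obs}$ is convex, and the maps $\mathcal{P}\mapsto\mathcal{P}_{obs}$, $\mathcal{P}\mapsto\mathbbm{E}[Y_i(t)\mathbbm{1}\{c(G_i)=1\}]$ and $\mathcal{P}\mapsto P(c(G_i)=1)$ are all affine under mixtures. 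Mixing two compatible latent distributions with distinct response-type marginals thus stays in the compatible set while keeping the numerator and denominator of $\mu_c^t$ under affine control, and, combined with the freedom to re-draw the arm-$t$ conditionals, should force a genuine non-degenerate rectangle of attainable $(\mu_c^t,\mu_c^{t'})$ pairs whenever $\mu_c^t$ is not constant across admissible $\mathcal{P}_G$ --- contradicting its containment in a slope-one line. The points I expect to require care are the ratio structure of $\mu_c^t$ (a numerator divided by $P(c(G_i)=1)$, itself possibly not identified) and verifying that the rectangle produced is non-degenerate; the rest is bookkeeping on top of the separability observation and the convexity of the model. The argument is dimension-free, so it does not need discrete instruments; under finiteness it is subsumed by the sharper geometric characterization of Theorems~\ref{thm:suff} and \ref{thm:necc}.
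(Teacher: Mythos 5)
Your ``if'' direction is fine and coincides with the paper's (a difference of two point-identified functionals is point-identified). The Cartesian-product observation for the ``only if'' direction is also correct as far as it goes: for a \emph{fixed} admissible response-type marginal $\mathcal{P}_G$, the arm-$t$ and arm-$t'$ constraints decouple, every pair in $M_t(\mathcal{P}_G)\times M_{t'}(\mathcal{P}_G)$ is attainable, and containment in a slope-one line forces both factors to be singletons. But the step you yourself flag as the main obstacle --- ruling out that these singleton values drift across admissible $\mathcal{P}_G$ while their difference stays fixed --- is a genuine gap, and the convexity repair you sketch does not close it. If $\mathcal{P}^{(1)}$ and $\mathcal{P}^{(2)}$ are both compatible with $\mathcal{P}_{obs}$ and both yield $\Delta^{t,t'}_c=\delta_0$, then for the mixture $\lambda\mathcal{P}^{(1)}+(1-\lambda)\mathcal{P}^{(2)}$ one gets $\mu_c^s = w\,\mu_c^s(\mathcal{P}^{(1)})+(1-w)\,\mu_c^s(\mathcal{P}^{(2)})$ with the \emph{same} weight $w=\lambda q_1/(\lambda q_1+(1-\lambda)q_2)$, $q_j=P_j(c(G_i)=1)$, for both $s=t$ and $s=t'$; hence the mixture again delivers $\Delta^{t,t'}_c=\delta_0$ and no non-degenerate rectangle of $(\mu_c^t,\mu_c^{t'})$ pairs ever appears. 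More fundamentally, your strategy tries to exhibit non-uniqueness of $\Delta^{t,t'}_c$ at the \emph{same} $\mathcal{P}_{obs}$ at which $\mu_c^{t'}$ is non-unique; that is stronger than what the proposition asserts, since Definition \ref{def:oai} only requires uniqueness to fail at \emph{some} element of $\mathscr{P}_{obs}(\mathcal{G})$.

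The paper's proof closes the gap precisely by exploiting that quantifier, i.e., by moving to a different observable law. Given witnesses $\mathcal{P}_1,\mathcal{P}_2$ compatible with a common $\mathcal{P}_{obs}$ with $\mu_c^{t'}(\mathcal{P}_1)\ne\mu_c^{t'}(\mathcal{P}_2)$, replace in each the law of $Y_i(t)$ by the point mass at $0$ while leaving the joint law of $(G_i,Y_i(t'))$ untouched. Because the selection model never restricts outcome distributions, both modified primitives lie in $\mathscr{P}(\mathcal{G})\times\mathscr{P}_Z$ and induce the \emph{same} new observable law $\mathcal{P}_{obs}'\in\mathscr{P}_{obs}(\mathcal{G})$ (outcomes among $T_i=t$ become degenerate at zero, everything else is unchanged); and by construction $\mu_c^t=0$ under both, so $\Delta^{t,t'}_c$ takes the two distinct values $\mu_c^{t'}(\mathcal{P}_1)$ and $\mu_c^{t'}(\mathcal{P}_2)$ at $\mathcal{P}_{obs}'$. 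Zeroing out the other arm pins $\mu_c^t$ uniformly across whatever response-type marginals the two witnesses carry, which is exactly what your within-$\mathcal{P}_{obs}$ argument cannot supply. You should either adopt this device or find another construction that is permitted to change $\mathcal{P}_{obs}$; as written, your proof does not go through.
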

\noindent See Appendix \ref{proofsec} for proofs. Although researchers are typically more interested in treatment effect parameters like $\Delta^{t,t'}_c$ than they are in counterfactual means, we can---informed by Proposition \ref{prop:ifTEthenmeans}---begin our analysis of outcome-nonrestrictive identification with the simpler counterfactual means, before later considering treatment effects.\footnote{The proof in Section \ref{proofsec} extends Proposition \ref{prop:ifTEthenmeans} to cover treatment effect parameters that involve more than two separate treatment states, which is useful for the study of interaction effects in Section \ref{sec:empirical}.} 

	\section{A generic outcome-nonrestrictive identification result} \label{sec:posi}
\subsection{Identifying counterfactual means through ``binary combinations''}
We begin with a very simple sufficient condition for outcome-nonrestrictive identification of counterfactual means of potential outcomes taking the form $\mu_c^t = \mathbbm{E}[Y_i(t)|c(G_i)=1]$. Consider any finite collection of distinct instrument values $z_k \in \mathcal{Z}$ for $k=1 \dots K$ and corresponding coefficients $\alpha_k$. The following quantity is then identified:
\begin{align*}
	\sum_{k=1}^K\alpha_k \cdot \mathbbm{E}\left[Y_i\cdot D^{[t]}_i|Z_i=z_k\right] &= \sum_{k=1}^K \alpha_k \cdot \mathbbm{E}\left[Y_i(t)\cdot D^{[t]}_i(z_k)|Z_i=z_k\right]\\
	&= \mathbbm{E}\left[Y_i(t)\cdot \left(\sum_{k=1}^K\alpha_{k}\cdot D^{[t]}_i(z_k)\right)\right]
\end{align*}
where $D_i^{[t]} = D_i^{[t]}(Z_i) = \mathbbm{1}(T_i=t)$ and the second equality follows from independence (\ref{eq:independence}).

Suppose that the $z_k$ and $\alpha_k$ could be chosen in such a way as to guarantee that the linear combination $\sum_{k}\alpha_{k}\cdot D^{[t]}_i(z_k)$ (in parentheses above) could \textit{only} take values of 0 or 1 for any given $i$. In such a case, the above simplifies to
$$ \sum_{k=1}^K \alpha_k \cdot \mathbbm{E}\left[Y_i\cdot D^{[t]}_i|Z_i=z_k\right] = P\left(\sum_{k}\alpha_{k}\cdot D^{[t]}_i(z_k)=1\right)\cdot \mathbbm{E}\left[Y_i(t)\left|\sum_{k}\alpha_{k}\cdot D^{[t]}_i(z_k)=1\right.\right]$$
Meanwhile
\begin{equation} \label{idresultp} \sum_{k=1}^K\alpha_{k}\cdot \mathbbm{E}\left[D^{[t]}_i|Z_i=z_k\right] = P\left(\sum_{k}\alpha_{k}\cdot D^{[t]}_i(z_k)=1\right)
\end{equation}
Therefore, provided that $P\left(\sum_{k}\alpha_{k}\cdot D^{[t]}_i(z_k)=1\right)>0:$
\begin{equation} \label{idresult}
	\mathbbm{E}\left[Y_i(t)\left|\sum_{k}\alpha_{k}\cdot D^{[t]}_i(z_k)=1\right.\right] = \frac{\sum_{k=1}^K\alpha_{k}\cdot \mathbbm{E}\left[Y_i\cdot D^{[t]}_i|Z_i=z_k\right]}{\sum_{k=1}^K\alpha_{k}\cdot \mathbbm{E}\left[D^{[t]}_i|Z_i=z_k\right]}
\end{equation}
Eq. (\ref{idresult}) represents a generalization of the ``Wald ratio'' form common among IV estimands, and turns out to nest a surprising variety of point identification results from the IV literature, both for counterfactual means as well as treatment effects.

To make our way from the former to the latter, let us establish some terminology to refer to situations in which result \eqref{idresult} can be applied.\\

\noindent \textbf{Definition.} Given selection model $\mathcal{G}$, a \textit{binary combination} is a treatment value $t \in \mathcal{T}$ and a function $\alpha: \mathcal{Z} \rightarrow \mathbb{R}$ of finite support $\mathcal{Z}_K=\{z_k\}_{k=1}^K$ such that $\sum_{k=1}^K\alpha(z_k)\cdot D^{[t]}_i(z_k) \in \{0,1\}$ for all $i$, according to $\mathcal{G}$.\\

\noindent In section \ref{sec:discrete}, I will restrict to discrete instruments and represent the coefficients $\alpha_k$ as a vector in $\mathbbm{R}^{|\mathcal{Z}|}$. But for generality, this section continues to allow the instruments to have arbitrary support and we can think of the coefficients $\alpha_k$ as a function $\alpha(z_k)=\alpha_k$ from $\mathcal{Z}_K\rightarrow\mathbbm{R}$ where $\mathcal{Z}_K=\{z_k\}_{k=1}^K$, or equivalently a function $\alpha$ on all of $\mathcal{Z}$ whose support (the set of points $z$ where $\alpha(z)$ differs from zero) is contained within a finite set $\mathcal{Z}_K \subseteq \mathcal{Z}$.

Binary combinations can thus be indexed by the pair $(t, \alpha)$. Note that given a binary combination, the value of $\sum_{k}\alpha_{k}\cdot D^{[t]}_i(z_k)$ depends only on the response type $G_i$ of individual $i$, so given a binary combination we can write the event that $\sum_{k}\alpha_{k}\cdot D^{[t]}_i(z_k)=1$ as $c^{[t,\alpha]}(G_i)=1$, where $c^{[t,\alpha]}: \mathcal{G} \rightarrow \{0,1\}$ is a function whose definition depends on the treatment $t$ and coefficients $\alpha$ of that binary combination.

With this terminology we can formalize the result of Equation (\ref{idresult}) as follows:

\begin{theorem} \label{thm:suff} Given independence Eq. (\ref{eq:independence}) and a binary combination $(t, \alpha)$, $P(c^{[t,\alpha]}(G_i)=1)$ is identified by the LHS of \eqref{idresultp}. If additionally $P(c^{[t,\alpha]}(G_i)=1)>0$, the conditional counterfactual mean $\mathbbm{E}[Y_i(t)|c^{[t,\alpha]}(G_i)=1]$ is identified by the RHS of (\ref{idresult}). Since the only restrictions placed on $\mathcal{P}$ in deriving \eqref{idresult} are that Eq. \eqref{eq:independence} holds, that $\mathcal{Z}_K \subseteq \mathcal{Z}$, and that $P(c^{[t,\alpha]}(G_i)=1)>0$, identification of $\mathbbm{E}[Y_i(t)|c^{[t,\alpha]}(G_i)=1]$ is outcome nonrestrictive. 
\end{theorem}

\noindent The key to the observation that $\mathbbm{E}[Y_i(t)|c^{[t,\alpha]}(G_i)=1]$ is outcome-nonrestrictive identified is that whether the requirement $P(c^{[t,\alpha]}(G_i)=1)>0$ holds depends only on the marginal distribution of $G_i$, and whether $\mathcal{Z}_K \subseteq \mathcal{Z}$ depends only on the marginal distribution of $Z_i$. This implies nothing about the distribution of potential outcomes or their relation to $G_i$.



\subsection{Identifying treatment effects through ``binary collections''} \label{sec:binarycollections}
While Theorem \ref{thm:suff} yields identification of conditional counterfactual means, we can furthermore identify \textit{treatment effects} when two treatment values $t$ and $t'$ admit of binary combinations that yield the same conditioning events.

Consider a collection of binary combinations that apply to at least two distinct values $t \in \mathcal{T}$. Let us denote set of coefficients $\alpha$ in each binary combination by $\alpha^{[t]}$, indexed by the treatment value $t$ it will be applied to. In this notation, $\alpha_k^{[t]}$ is the coefficient on $z_k$ in the binary combination corresponding to treatment $t$.\\

\noindent \textbf{Definition.} A \textit{binary collection} is a set of binary combinations $\{(t,\alpha^{[t]})\}_{t \in \psi}$ for treatment values in set $\psi \subseteq \mathcal{T}$ where $|\psi| \ge 2$, with the property that given the selection model $\mathcal{G}$, the functions $c^{[t,\alpha^{[t]}]}$ and $c^{[t',\alpha^{[t']}]}$ are identical, for any $t,t' \in \psi$.\\ 

\noindent For a given binary collection, let us for brevity denote the common function $c^{[t,\alpha^{[t]}]}$ for all $t \in \psi$ as $c$. It follows immediately from Theorem \ref{thm:suff} that treatment effects $\mathbbm{E}[Y_i(t')-Y_i(t)|c(G_i)=1] = \mathbbm{E}[Y_i(t')|c(G_i)=1] - \mathbbm{E}[Y_i(t)|c(G_i)=1]$ are identified for any pair $t, t' \in \psi$.\\

\noindent \textit{Note: }By replacing $Y_i(t)$ by $\mathbbm{1}(Y_i(t) \le y)$ we can also identify the conditional distributions $F_{Y(t)|c(G)=1}$ for all $t \in \psi$ and compute e.g. quantile treatment effects or establish bounds on the distribution of treatment effects among the $c(G_i)=1$ group \citet{fanpark}.\\

\noindent When treatment is itself binary, we can generate binary collections from any binary combination where the coefficients sum to zero:
\begin{proposition} \label{prop:binarytreatment}
	Let $\mathcal{T} = \{0,1\}$, and suppose $(t,\alpha)$ is a binary combination such that $\sum_k \alpha_k = 0$. Then there exists a binary collection with $\psi = \mathcal{T}$. In particular, the coefficients for $t=0$ are simply $-1$ times the corresponding coefficients for $t=1$.
\end{proposition}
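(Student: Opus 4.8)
The plan is to exploit the fact that with a binary treatment the two treatment indicators are complementary at every instrument value: since $T_i(z) \in \{0,1\}$ takes exactly one value, we have $D^{[0]}_i(z) = 1 - D^{[1]}_i(z)$ for all $z$ and all $i$. First I would reduce to the case $t=1$; the argument for $t=0$ is identical after swapping the roles of the two treatment values. So suppose $(1,\alpha)$ is a binary combination, with $\alpha$ denoting the pair $(\{z_k\}_{k=1}^K,\{\alpha_k\}_{k=1}^K)$, and assume $\sum_{k=1}^K \alpha_k = 0$.

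Next I would propose the coefficients for the other treatment value to be $\alpha'$, defined on the same support points $\{z_k\}$ with coefficients $\alpha'_k := -\alpha_k$, and verify directly that $(0,\alpha')$ is a binary combination generating the same event. The one computation needed is
\begin{align*}
\sum_{k=1}^K \alpha'_k \cdot D^{[0]}_i(z_k)
&= -\sum_{k=1}^K \alpha_k \bigl(1 - D^{[1]}_i(z_k)\bigr)
= -\sum_{k=1}^K \alpha_k + \sum_{k=1}^K \alpha_k \cdot D^{[1]}_i(z_k)
= \sum_{k=1}^K \alpha_k \cdot D^{[1]}_i(z_k),
\end{align*}
where the final equality uses $\sum_k \alpha_k = 0$. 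Since by hypothesis the right-hand side lies in $\{0,1\}$ for every $i$, so does the left-hand side, so $(0,\alpha')$ is a binary combination; and because the two linear combinations agree unit by unit, the events $C^{[\alpha',0]}_i$ and $C^{[\alpha,1]}_i$ coincide almost surely.

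Finally I would conclude by assembling the binary collection: take $\alpha^{[1]} := \alpha$ and $\alpha^{[0]} := \alpha' = -\alpha$, so that $\{\alpha^{[t]}\}_{t \in \phi}$ with $\phi = \mathcal{T} = \{0,1\}$ consists of binary combinations whose events $C^{[\alpha^{[t]},t]}_i$ are equivalent across $t$, which is exactly the definition of a binary collection with $\phi = \mathcal{T}$; and by construction the $t=0$ coefficients are $-1$ times the $t=1$ coefficients. I do not expect a genuine obstacle here — the proof is a one-line identity — but the point to be careful about is that the sum-to-zero condition is precisely what cancels the constant $\sum_k \alpha_k$ arising from substituting $D^{[0]} = 1 - D^{[1]}$; without it the left-hand side would equal $\sum_k \alpha_k D^{[1]}_i(z_k) - \sum_k \alpha_k$, which need not be $\{0,1\}$-valued, so the hypothesis cannot be dropped.
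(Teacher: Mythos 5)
Your proof is correct and is essentially the paper's own argument: the substitution $D^{[0]}_i(z_k)=1-D^{[1]}_i(z_k)$ together with $\sum_k\alpha_k=0$ is exactly the matrix identity $A^{[0]}=\mathbbm{1}_{|\mathcal{Z}|}\mathbbm{1}_{|\mathcal{G}|}'-A^{[1]}$ with $\alpha'\mathbbm{1}_{|\mathcal{Z}|}=0$ used in the paper's one-line proof of the alternative statement in Section \ref{sec:geom}. The only (cosmetic) difference is that you work pointwise over the finitely many support points $z_k$ rather than in the discrete-$\mathcal{Z}$ matrix notation, which if anything makes explicit that the result does not require $\mathcal{Z}$ to be finite.
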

\begin{proof} 
	See alternative statement of this result in Section \ref{sec:discrete}.
\end{proof}
\noindent The restriction that $\sum_k \alpha_k = 0$ is a natural one, in the following sense:
\begin{proposition} \label{prop:sumtozero}
	Let $\Delta_c^{t,t'}=\mathbbm{E}[Y_i(t')-Y_i(t)|c(G_i)=1]$ be outcome-nonrestrictive identified from a binary collection with $t' \ne t$. Then if $\mathcal{G}$ contains a group $g_0$ that always takes  treatment $t$, it must be the the case that $\sum_k \alpha^{[t]}_k = 0$.
\end{proposition}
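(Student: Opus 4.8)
The plan is to exploit the one structural fact packaged into the definition of a binary collection — that the events $C_i^{[\alpha^{[t]},t]}$ and $C_i^{[\alpha^{[t']},t']}$ coincide for all $t,t'\in\phi$ — and to evaluate both of these events on the single response type $g_0\in\mathcal{G}$ that always takes treatment $t$. Everything reduces to unpacking what the two linear combinations do on $g_0$.

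First I would compute the two relevant sums for a unit $i$ with $G_i=g_0$. Since $g_0(z)=t$ for every $z$, we have $D_i^{[t]}(z_k)=\mathbbm{1}(g_0(z_k)=t)=1$ for every $k$, so the combination defining the first event evaluates to $\sum_k \alpha_k^{[t]}$; because $(t,\alpha^{[t]})$ is a binary combination, this number necessarily lies in $\{0,1\}$. On the other hand, picking any $t'\in\phi$ with $t'\neq t$ (such a $t'$ exists because $|\phi|\ge 2$ and $t,t'\in\phi$), $g_0$ never takes $t'$, so $D_i^{[t']}(z_k)=0$ for every $k$, and the combination defining the second event evaluates to $0$. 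Hence a unit in group $g_0$ is always outside $C_i^{[\alpha^{[t']},t']}$.

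Next I would combine these two observations with the equivalence of the events: a unit in group $g_0$ lies in $C_i^{[\alpha^{[t]},t]}$ exactly when $\sum_k\alpha_k^{[t]}=1$, but never lies in $C_i^{[\alpha^{[t']},t']}$; equivalence therefore rules out $\sum_k\alpha_k^{[t]}=1$, and together with $\sum_k\alpha_k^{[t]}\in\{0,1\}$ this forces $\sum_k\alpha_k^{[t]}=0$, which is the claim. The only point requiring a line of care is that the equivalence of the $C_i$'s is really a statement about which elements of $\mathcal{G}$ fall in each event (equivalently, the induced $c(\cdot)$ is a well-defined function on $\mathcal{G}$), so it applies to $g_0$ irrespective of whether $g_0$ receives positive mass under the particular $\mathcal{P}$; alternatively, since outcome-agnostic identification from the collection must hold for every $\mathcal{P}_{latent}\in\mathscr{P}(\mathcal{G})$, one may take a $\mathcal{P}_{latent}$ placing positive mass on $g_0$ and run the same argument unit-by-unit.

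There is no genuine technical obstacle; the entire content is in the definitions of binary combination and binary collection. The single ingredient worth highlighting is the asymmetry created by $t'\neq t$: it is precisely the vanishing of $D_i^{[t']}(z_k)$ on $g_0$ that distinguishes the two events on $g_0$ and thereby pins $\sum_k\alpha_k^{[t]}$ to $0$ rather than leaving $1$ as an option (as it would be for a bare binary combination with no always-$t$ group present).
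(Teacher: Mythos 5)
Your proof is correct. The final step is identical to the paper's: evaluate the binary combination for treatment $t$ at $g_0$, where every indicator $D_i^{[t]}(z_k)$ equals one, so the combination collapses to $\sum_k \alpha_k^{[t]}$, which must then equal the value $c_{g_0}\in\{0,1\}$; the whole question is why $c_{g_0}=0$ rather than $1$. Here you and the paper diverge slightly. The paper argues via identification: if $g_0$ were included in $c$, then $\Delta_c^{t,t'}$ would depend on $\mathbbm{E}[Y_i(t')\mid G_i=g_0]$, about which the data carry no information since $P(T_i=t'\mid G_i=g_0)=0$, so outcome-agnostic identification would fail; hence $g_0\notin c$. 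You instead argue purely from the definition of a binary collection: since $D_i^{[t']}(z_k)=0$ for all $k$ on $g_0$, the $t'$-combination evaluates to $0$ there, so $g_0$ lies outside $C_i^{[\alpha^{[t']},t']}$, and the required equivalence of the events $C_i^{[\alpha^{[t]},t]}$ and $C_i^{[\alpha^{[t']},t']}$ then forces $g_0$ outside $C_i^{[\alpha^{[t]},t]}$ as well. Your route is more elementary and self-contained, using only the combinatorial content of the definitions and no identification argument; the paper's route makes explicit the substantive reason the restriction is unavoidable (no data on $Y(t')$ for always-$t$ takers), which is the economic point the surrounding discussion of always-takers and never-takers is driving at. Your closing remark about $g_0$ not needing positive mass under a particular $\mathcal{P}$ is a worthwhile piece of care that the paper's proof leaves implicit.
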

\begin{proof}
	Since $P(T_i=t'|G_i=g_0) = 0$, the data provide no information on $Y(t')|G_i=g_0$, so we must have $c(g_0)=0$ (see proof of Proposition \ref{prop:ifTEthenmeans}). Thus $c(g_0)=\sum_{k} \alpha^{[t]}_k \cdot \mathbbm{1}(T_{g_0}(z_k)=t) = \sum_{k} \alpha^{[t]}_k = 0$.
\end{proof}


\noindent For example, in the LATE model of \citet{Imbens2018}, allowing for ``always-takers'' (who always take treatment $t=1$, regardless of $Z_i$) implies that $\sum_z \alpha^{[1]}_z = 0$, while allowing for ``never-takers'' (who always take treatment $t=0$) implies that $\sum_z \alpha^{[0]}_z = 0$. Consistent with this, identification of the compliers LATE follows from the binary collection in which  $\alpha^{[1]}_1 = 1$, $\alpha^{[1]}_0 = -1$, $\alpha^{[0]}_1 = -1$, and $\alpha^{[0]}_0 = 1$.

\subsection{Using binary combinations and collections for testing the model} \label{sec:testable}
The existence of binary combinations with $K>1$ generally yields overidentification restrictions that can used to test the IV model (including exclusion, independence, and the choice of selection model $\mathcal{G}$). In particular, suppose that $|\mathcal{G}| < \infty$ and note that for any Borel set $\mathcal{B}$ of $\mathbbm{R}$ and binary combination $(t, \alpha)$, we have that:
\begin{equation} \label{eq:testable}
	\sum_{k=1}^K \alpha_k \cdot P(Y_i \in \mathcal{B}, T_i=t|Z_i=z_k)  = P(Y_i(t) \in \mathcal{B}, c(G_i)=1)
\end{equation}
using Eq. \eqref{eq:independence} and that $P(Y_i(t) \in \mathcal{B}, T_i(z_k)=t) = \sum_{g \in \mathcal{G}} P(G_i=g)\cdot P(Y_i(t) \in \mathcal{B}|G_i=g) \cdot A^{[t]}_{z_k,g}$. Since the RHS of Eq. \eqref{eq:testable} represents a probability, the LHS must be weakly positive. Provided that not all of the $\alpha_k$ are positive, the implication that $\sum_{k=1}^K \alpha_k \cdot P(Y_i \in \mathcal{B}, T_i=t|Z_i=z_k) \ge 0$ is not guaranteed and therefore can be used to test the model assumptions.

Furthermore, finding binary \textit{collections} may yield further overidentification restrictions that make use of the ``first stage'' data alone. Depending on the selection model, the equality $\sum_{k=1}^K\alpha^{[t]}_{k}\cdot \mathbbm{E}\left[D^{[t]}_i|Z_i=z_k\right] =\sum_{k=1}^K\alpha^{[t']}_{k}\cdot \mathbbm{E}\left[D^{[t']}_i|Z_i=z_k\right]$ may not be trivially satisfied, even in the case of a binary treatment. See Section \ref{sec:empirical} for an example of such equality restrictions in the context of an empirical application, and Appendix \ref{sec:lp} for further linear inequality constraints that are based upon first stage empirical moments. Still further testable restrictions hold if one has a binary collection and Eq. \eqref{eq:independence} holds conditional on observed covariates $X_i$. See Appendix \ref{app:covs} for details.
	\section{Outcome-nonrestrictive identification with discrete instruments} \label{sec:discrete}
The remainder of this paper now specializes to settings in which the instruments $Z_i$ are discrete and take only a finite number of values $\mathcal{Z}$. This simplification allows us to establish that binary combinations are also \textit{necessary} for outcome-nonrestrictive identification to occur. Combining with Theorem \ref{thm:suff} and Proposition \ref{prop:ifTEthenmeans}, this provides a full characterization of outcome-nonrestrictive identification which yields a simple geometric interpretation.\\

\noindent \textit{Notation: selection models with finite instrument values.} When $\mathcal{Z}$ is discrete and finite, any binary combination $\alpha$ can be associated with a vector in $\mathbbm{R}^{|\mathcal{Z}|}$(having non-zero components only for values $z \in \mathcal{Z}_K$). Across a finite set of treatments, note that $\mathcal{G}$ can then also only take finitely many values, i.e. $|\mathcal{G}| \le |\mathcal{T}|^{|\mathcal{Z}|}$. A function $c: \mathcal{G} \rightarrow \{0,1\}$ defining a causal parameter like $\mu_c^t=\mathbbm{E}[Y_i(t)|c(G_i)=1]$ can now be associated with a $|\mathcal{G}|$-component vector $c$ with components $c_g = c(g)$ for each $g \in \mathcal{G}$.

In this setting, we can also express the content of the selection model $\mathcal{G}$ through a $|\mathcal{Z}| \times |\mathcal{G}|$ matrix $A$, where component $A_{zg}$ gives the  common treatment $T_i(z)$ that all units $i$ with $G_i=g$ take, when the instruments are equal to $z$. The restrictions imposed by selection model $\mathcal{G}$ correspond to deleting columns from a $|\mathcal{Z}| \times |\mathcal{T}|^{|\mathcal{Z}|}$ matrix that would include all $|\mathcal{T}|^{|\mathcal{Z}|}$ imaginable response types given $\mathcal{T}$ and $\mathcal{Z}$.

Define for any treatment $t$ the binary matrix $A^{[t]}$ having components $[A^{[t]}]_{zg} = \mathbbm{1}(A_{zg}=t)$, which records the common value of $D^{[t]}_i(z)$ for any individual $i$ having $G_i=g$. The matrix $A^{[t]}$ simply tells us whether units take treatment $t$ versus any other treatment.\footnote{A matrix analogous to $A^{[t]}$ is used heavily in \citet{Heckman2018}.}

\subsection{A necessary condition for outcome-nonrestrictive identification}
Consider a parameter of the form $E[Y_i(t)|c(G_i)=1]$. Recall from above the representation of $c(\cdot)$ as a binary vector $c \in \mathbbm{R}^{|\mathcal{G}|}$ with $c_g \in \{0,1\}$ for each $g \in \mathcal{G}$. For any matrix $B$ let $rowspace(B)$ or $rs(B)$ denote its rowspace, and $B'$ its transpose.

\begin{theorem} \label{thm:necc}
	Suppose that $\mathcal{Z}$ and $\mathcal{T}$ are finite. Then if $\mu_c^t:=E[Y_i(t)|c(G_i)=1]$ is outcome-nonrestrictive identified, then $c' = \alpha'A^{[t]}$, for some $\alpha \in \mathbbm{R}^{|\mathcal{Z}|}$, i.e. $c \in rowspace(A^{[t]})$.
\end{theorem}
\noindent Note that given finite $|\mathcal{Z}|$ and hence a finite space of response types, $c \in rs(A^{[t]})$ occurs exactly when there exists a binary combination $(t,\alpha)$ with conditioning function $c=c^{[\alpha,t]}$. Theorem \ref{thm:necc} thus establishes that if the instruments are finite, then Theorem \ref{thm:suff} covers \textit{all} instances in which a counterfactual mean that conditions on response types can be identified in an outcome-nonrestrictive way. In other words, binary combinations are both necessary and sufficient for outcome-nonrestrictive identification of counterfactual means. Together with Proposition \ref{prop:ifTEthenmeans}, it follows that binary collections are similarly both necessary and sufficient for outcome-nonrestrictive identification of local average treatment effects.\\

\noindent \textit{Remark: } Theorems \ref{thm:suff} and \ref{thm:necc} both extend to the more general family of target parameters that can be defined by functions $c$ that depend on $Z_i$ in addition to response types $G_i$. This is useful to nest parameters like the average treatment effect on the treated, or certain parameters that can arise in settings with multiple instruments. See Appendix \ref{app:deponz} for details.

\subsection{Examples to which Theorem \ref{thm:necc} does \textit{not} apply} \label{sec:notapply}
Although Theorem \ref{thm:necc} synthesizes a wide variety of existing IV identification results (detailed in Appendix \ref{sec:examples}), the presumption that \textit{outcome-nonrestrictive} identification holds---rather than point identification in general---is important.

The structure of Theorem \ref{thm:necc} can be summarized as follows. With $M:=\{\mathcal{P}_{latent} \times \mathcal{P}_{Z}: \mathcal{P}_{latent} \in \mathscr{P}_{latent,c}(\mathcal{G}), \mathcal{P}_{Z} \in \mathscr{P}_{Z}\}$, outcome-nonrestrictive identification says that $\{\theta(\mathcal{P}): \mathcal{P} \in M \textrm{ and } \phi(\mathcal{P}) = \mathcal{P}_{obs}\}$ is a singleton for all $\mathcal{P}_{obs} \in \mathscr{P}_{obs,c}(\mathcal{G})$. This requires that there be no $\mathcal{P},\mathcal{P}' \in M$ such that $\phi(\mathcal{P})=\phi(\mathcal{P}')$ but $\theta(\mathcal{P}) \ne \theta(\mathcal{P}')$. The proof of Theorem \ref{thm:necc} uses that there always exist $\mathcal{P} \in M$ that satisfy a certain regularity condition, which enables us to construct from $\mathcal{P}$ such a $\mathcal{P}'$, provided that $c \notin rs(A^{[t]})$. But if one restricts the model space $M$ of permissible DGPs by maintaining further assumptions about the distribution of potential outcomes (or how they are correlated with response types $G_i$), it can be that the constructed distribution $\mathcal{P}'$ violates those assumptions and therefore do not belong to $M$.

For example, it is known for example that $\mathbbm{E}[Y_i(t')-Y_i(t)]$---the unconditional average treatment effect (ATE) between $t$ and $t'$---is identified under an assumption of ``no selection on gains'' (NSOG): that is that $Y_i(t)-Y_i(t')$ is mean independent of $T_i$ and $Z_i$ for all $t,t' \in \mathcal{T}$ \citep{Kolesar2013, aroragoffhjort}. Homogeneous treatment effects are a special case of NSOG.\footnote{Another stronger restriction is when potential outcomes $Y_i(t)$ alone---and not just treatment effects---are mean independent of $T_i$ and $Z_i$ for all $t$. This essentially rules out endogeneity: $\mathbbm{E}[Y_i(t)]=\mathbbm{E}[Y_i|T_i=t]$, so identification is unsurprising under this stronger restriction.} In Appendix \ref{app:nsog}, I show that the result that the ATE between $t$ and $t'$ is identified under NSOG can be extended to see that unconditional counterfactual means $\mathbbm{E}[Y_i(t)]$ are also identified under NSOG, requiring no assumptions on selection (beyond an order condition that can be verified in the data).

Without a selection model $\mathcal{G}$ that imposes substantive restrictions, the vector $c = (1,1 \dots 1)'$ will not be in the row space of $A^{[t]}$. In particular, as long as there is a ``never-takers'' group $g_0(t)$ for treatment $t$ such that $T_i(z) \ne t$ for all $z \in \mathcal{Z}$ when $G_i = g_0(t)$, then $(1,1 \dots 1)' \in rs(A^{[t]})$ cannot hold. However there is no contradiction with Theorem \ref{thm:necc}, since identification based on NSOG does not hold for \textit{all} joint distributions between $\tilde{Y}=\{Y_i(t)\}_{t \in \mathcal{T}}$ and $G_i$. Rather, the assumption of NSOG eliminates some such distributions that are compatible with the data and the basic model of Eq. \eqref{eq:independence}, shrinking $M$.\footnote{In fact, the NSOG assumption is strong enough to let the researcher \textit{impute} the value of $\mathbbm{E}[Y_i(t)|G_i = g_0(t)]$ if such a never-taker group exists, thus eliminating the dependence of the estimand on the distribution of $Y_i$ among individuals such that $G_i = g_0(t)$ and $T_i=t$ (which would not be identified from observable data).} Indeed I show explicitly in Section \ref{app:nsog} that absent a selection model $\mathcal{G}$ such that $(1,1 \dots 1)' \in rs(A^{[t]})$ for all $t \in \mathcal{T}$, the construction $\mathcal{P}'$ in the proof of Theorem \ref{thm:necc} will not satisfy the additional restriction of NSOG.

Another example of an IV identification result that is not covered by Theorem \ref{thm:necc} is the ``compliers--defiers'' result of  \citet{toleratingdefiance} that the local average treatment effect among a subset of compliers is identified in a setting with a binary treatment and instrument, if there are more compliers than defiers and a subset of the compliers have the same average treatment effect as the defiers. Again, this additional assumption places restrictions on the joint distribution of response types $G_i$ and potential outcomes $\tilde{Y}_i$. Further, the identified parameter conditions on an event (a particular subgroup of the compliers) that is less course than the groups $G_i$ that are defined simply by counterfactual selection behavior, so does not fit the form $\Delta_{c}^{t,t'}=\mu_c^{t'}-\mu_c^{t}$ that Theorem \ref{thm:necc} and Proposition \ref{prop:ifTEthenmeans} speak to. Similar considerations apply to recent results of \citep{comey2023supercompliers} that show identification of the local average treatment effect among ``supercompliers'' in a setting in which $\mathcal{Y}=\mathcal{T}=\mathcal{Z}=\{0,1\}$, where the supercompliers are defined as the subset of compliers that have a strictly positive treatment effect. This model imposes monotonicity in the outcome equation, and the conditioning event for the supercomplier LATE conditions both on selection behavior \textit{and} a property of outcomes, namely that $Y_i(1) > Y_i(0)$.

Another type of identification result that is not covered by Theorem \ref{thm:necc} above---although it is outcome-nonrestrictive---is identification of a treatment effect parameter that does not maintain two fixed treatment values $t$ and $t'$ across all units included in the parameter. An example of this kind arises in \citet{klinewalters}, in which the identified causal parameter compares the effect of Head Start to one of two next-best alternatives (either traditional pre-school or no pre-school). This estimand combines two response types for which this next-best alternative is generally different. See Section \ref{sec:klinewalters} for details.

When $c \notin rs(A^{[t]})$, Theorem \ref{thm:necc} establishes that the parameter $\mu_c^t$ is not \textit{point} identified in an outcome-nonrestrictive manner. However, the data may still provide identifying information about the value of $\mu_c^t$ if auxiliary conditions are maintained, for example that the support of $Y_i$ is bounded with known bounds. Appendix \ref{sec:partialid} considers partial identification of $\mu_c^t$ in such settings, and also relates the results of this paper to recent results by \citet{bai2024identifyingpowermonotonicityaverage}, who focus on bounding the ATE and unconditional means in particular.

\subsection{Summary: combining the necessary and sufficient conditions for identification} \label{sec:combining}
Combining Theorems \ref{thm:suff} and \ref{thm:necc}, we have in the case of finite discrete instruments that given a selection model $\mathcal{G}$, a conditional counterfactual mean of the form $\mathbbm{E}[Y_i(t)|c(G_i)=1]$ is outcome-nonrestrictive identified \textit{if and only if} the vector representation $c \in \{0,1\}^{|\mathcal{G}|}$ of $c(\cdot)$ lies in the rowspace of $A^{[t]}$, i.e. $c \in rs(A^{[t]})$. Analogously, a treatment effect parameter $\mathbbm{E}[Y_i(t')-Y_i(t)|c(G_i)=1]$ is outcome-nonrestrictive identified if and only if $c$ lies in the rowspaces of both $A^{[t]}$ \textit{and} $A^{[t']}$, i.e. $c \in (rs(A^{[t']}) \cap rs(A^{[t]}))$.

Appendix \ref{sec:nps} discusses how Theorems \ref{thm:suff} and \ref{thm:necc} relate to recent necessary and sufficient conditions for identification in IV models by \citet{navjeevan2023identification}. While Theorem \ref{thm:suff} can be seen a special case of their results, Theorem \ref{thm:necc} cannot. Further, \citet{navjeevan2023identification} do not derive the condition $c \in (rs(A^{[t']}) \cap rs(A^{[t]}))$ for the identification of treatment effect parameters (though sufficiency of this condition would represent a corollary to their results). I turn now to the analysis of this key condition, which enables an exhaustive search for outcome-nonrestrictive identification results for treatment effects when the instruments are discrete and finite.
\section{Making use of the equivalence result for identification} \label{sec:bruteforce}

Continuing our focus on settings in which the instruments have finite points of support, this Section shows how the condition $c \in rs(A^{[t]})$ characterizing identification can be useful in understanding existing identification results, generating new ones, and ruling out further opportunities for identification in a given selection model.

\subsection{A geometric characterization of identification with discrete instruments} \label{sec:geom}
Let $\mathcal{C}(t)$ be the set of $c$ in the rowspace of $A^{[t]}$ that have entries of only zero or one: i.e. 
\begin{equation} \label{eq:intersectionbinarycombo}
	\mathcal{C}(t) = rs(A^{[t]}) \cap \{0,1\}^{|\mathcal{G}|}
\end{equation} 
It is always the case that $C(t) \ne \emptyset$ provided that $P(T_i=t)>0$.\footnote{To see this, note that  $\mathbbm{E}[Y_i(t)|T_i(z)=t] = \frac{\mathbbm{E}[Y_i\cdot D^{[t]}_i|Z_i=z]}{\mathbbm{E}[D^{[t]}_i|Z_i=z]}$, which considers all units that take treatment $t$ when $Z_i=z$. This corresponds to a binary combination with $\alpha_{z'} = \mathbbm{1}(z'=z)$ and $c_g = \mathbbm{1}(T_g(z)=t)$.} A binary collection in turn occurs when $\mathcal{C}(t) \cap \mathcal{C}(t') \ne \emptyset$.\footnote{Binary combinations can occur in choice models where no binary collections exist. The choice model described in Proposition 8 of \citet{lee2023treatment} for example has this property.} When treatment is binary, this observation yields a simple proof of Proposition \ref{prop:binarytreatment}. In this discrete setting, we can rewrite Proposition \ref{prop:binarytreatment} as \begin{proposition*} [(alternative statement of Proposition \ref{prop:binarytreatment})]
	Let $\mathbbm{1}_{n}$ denote a vector of ones in $\mathbbm{R}^{n}$. If $\mathcal{T} = \{0,1\}$ and $\alpha'\mathbbm{1}_{|\mathcal{Z}|}=0$ and ${A^{[1]}}'\alpha \in \mathcal{C}(1)$, then ${A^{[1]}}'\alpha=-{A^{[0]}}'\alpha$ so that ${A^{[0]}}'(-\alpha) \in \mathcal{C}(0)$ and hence $\mathbb{E}[Y_i(1)-Y_i(0)|c_{G_i}]$ is outcome-nonrestrictive identified.
\end{proposition*}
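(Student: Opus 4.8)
The plan is to reduce everything to a single complementarity identity between the two selection matrices. With $\mathcal{T}=\{0,1\}$, every unit of type $g$ takes exactly one of the two treatments at each instrument value $z$, so $[A^{[0]}]_{zg}+[A^{[1]}]_{zg}=1$ for all $z\in\mathcal{Z}$ and $g\in\mathcal{G}$; that is, $A^{[0]}+A^{[1]}=\mathbbm{1}_{|\mathcal{Z}|}\mathbbm{1}_{|\mathcal{G}|}'$, the all-ones matrix. This one observation carries the whole argument; the remaining steps are bookkeeping with Eq. \eqref{eq:intersectionbinarycombo}.

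First I would transpose this identity and apply it to the vector $\alpha$: $({A^{[0]}}'+{A^{[1]}}')\alpha=\mathbbm{1}_{|\mathcal{G}|}\mathbbm{1}_{|\mathcal{Z}|}'\alpha=\mathbbm{1}_{|\mathcal{G}|}\,(\alpha'\mathbbm{1}_{|\mathcal{Z}|})$. The hypothesis $\alpha'\mathbbm{1}_{|\mathcal{Z}|}=0$ annihilates the right-hand side, yielding ${A^{[0]}}'\alpha=-{A^{[1]}}'\alpha$, which is exactly the displayed equality ${A^{[1]}}'\alpha=-{A^{[0]}}'\alpha$ in the statement.

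Next, set $c:={A^{[1]}}'\alpha$. By hypothesis $c\in\mathcal{C}(1)$, so $c$ is a nonzero element of $\{0,1\}^{|\mathcal{G}|}$ lying in $rs(A^{[1]})$. From the previous step, $c={A^{[0]}}'(-\alpha)$, i.e. $c'=(-\alpha)'A^{[0]}\in rs(A^{[0]})$. Hence $c$ is a nonzero vector in $rs(A^{[0]})\cap\{0,1\}^{|\mathcal{G}|}$, so $c\in\mathcal{C}(0)$ by Eq. \eqref{eq:intersectionbinarycombo}; equivalently ${A^{[0]}}'(-\alpha)\in\mathcal{C}(0)$. Thus $c\in\mathcal{C}(0)\cap\mathcal{C}(1)$: taking $\alpha^{[1]}=\alpha$ and $\alpha^{[0]}=-\alpha$ gives a binary collection for $\phi=\mathcal{T}$ whose common conditioning event is $c(G_i)=1$. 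Whenever this event has positive probability, Theorem \ref{thm:suff} identifies $\mathbbm{E}[Y_i(1)\mid c(G_i)=1]$ from the binary combination $(1,\alpha)$ and $\mathbbm{E}[Y_i(0)\mid c(G_i)=1]$ from $(0,-\alpha)$, both in an outcome-agnostic way, and subtracting identifies $\mathbbm{E}[Y_i(1)-Y_i(0)\mid c(G_i)=1]$.

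I do not anticipate any substantive obstacle. The only point that needs a moment's care is that membership in $\{0,1\}^{|\mathcal{G}|}$ and nonvanishing carry over from $\mathcal{C}(1)$ to $\mathcal{C}(0)$ for free — and they do precisely because, thanks to $\alpha'\mathbbm{1}_{|\mathcal{Z}|}=0$, the defining vector of the conditioning event is literally the \emph{same} vector $c$ for both treatment values, rather than merely an equivalent event up to measure zero. The one conceptual step worth flagging is recognizing the complementarity $A^{[0]}+A^{[1]}=\mathbbm{1}_{|\mathcal{Z}|}\mathbbm{1}_{|\mathcal{G}|}'$; once that is in hand, the rest is essentially a one-line computation.
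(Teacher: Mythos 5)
Your proof is correct and follows essentially the same route as the paper: the key step in both is the complementarity identity $A^{[0]}=\mathbbm{1}_{|\mathcal{Z}|}\mathbbm{1}_{|\mathcal{G}|}'-A^{[1]}$ combined with $\alpha'\mathbbm{1}_{|\mathcal{Z}|}=0$ to cancel the rank-one term, giving $\alpha'A^{[0]}=-\alpha'A^{[1]}$. You merely spell out the remaining bookkeeping (membership in $\mathcal{C}(0)$ via Eq. \eqref{eq:intersectionbinarycombo} and the appeal to Theorem \ref{thm:suff}) that the paper leaves implicit.
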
 
\begin{proof}
	Since $A^{[0]} = \mathbbm{1}_{|\mathcal{Z}|}\mathbbm{1}_{|\mathcal{G}|}'-A^{[1]}$, so $\alpha'A^{[0]}=\cancel{\alpha'\mathbbm{1}_{|\mathcal{Z}|}}\mathbbm{1}_{|\mathcal{G}|}'-\alpha'A^{[1]}=(-\alpha')A^{[1]}$.
\end{proof}

\subsubsection*{Example selection model: the classic LATE model with a binary instrument} Consider the model of \citet{Imbens2018} with a binary instrument, where $\mathcal{Z} = \mathcal{T} = \{0,1\}$ and we rule out ``defiers'', i.e. those who would have $T_i(0)=1, T_i(1)=0$. Then:
$$A^{[1]} = \begin{bmatrix}
	0 & 1 & 0\\  0 & 1 & 1
\end{bmatrix} \quad \quad \quad  \textrm{ and } \quad \quad \quad A^{[0]} = \begin{bmatrix}
	1 & 0 & 1\\  1 & 0 & 0
\end{bmatrix}$$
where the first row of each matrix represents $z=0$ and the second $z=1$, while the columns correspond to never-takers, always-takers, and compliers, respectively.

The rowspaces of $A^{[1]}$ and $A^{[0]}$ can be found by row-reducing each matrix, yielding:
$$rs(A^{[1]}) = span\left\{\begin{pmatrix}
	0 \\ 1 \\  0
\end{pmatrix},\begin{pmatrix}
	0 \\ 0 \\  1
\end{pmatrix}\right\} \quad \quad \quad  \textrm{ and } \quad \quad \quad rs(A^{[0]}) = span\left\{\begin{pmatrix}
	1 \\ 0 \\  0
\end{pmatrix},\begin{pmatrix}
	0 \\ 0 \\  1
\end{pmatrix}\right\}$$
\begin{figure}[ht!]
	\begin{center}
		\hspace{1cm}\includegraphics[width=4in]{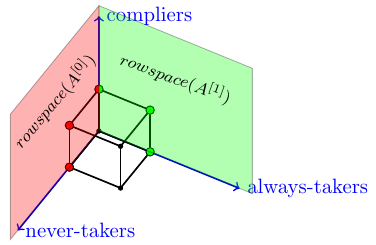}	
	\end{center}
	\caption{Identification in the LATE model with a binary instrument. The vector $c=(0,0,1)'$ belongs to both $rs(A^{[1]})$ and $rs(A^{[0]})$ and hence the LATE parameter $\mathbbm{E}[Y_i(1)-Y_i(0)|i \textrm{ is a complier}]$ is identified, and this identification is outcome-nonrestrictive. $(0,0,1)'$ is the only vertex of the unit cube that belongs to $rs(A^{[1]}) \cap rs(A^{[0]})$, and thus represents the only response type for which treatment effects can be identified. \label{fig:cube}}
\end{figure}

By Theorem \ref{thm:suff}, we can thus identify the mean of $Y_i(1)$ among always-takers or among compliers (or among both), and we can identify the mean of $Y_i(0)$ among never-takers or among compliers (or both). As depicted in Figure \ref{fig:cube}, these correspond to the non-zero vertices of the unit cube in $\mathbbm{R}^3$ that take a value of zero in the never-takers ``direction'', or a value of zero in the always-takers ``direction'', respectively.

Note that $(0,0,1)'$ the \textit{unique} non-zero vertex of the unit cube in $\mathbbm{R}^3$ that belongs to both $rs(A^{[1]})$ and to $rs(A^{[0]})$. Theorem \ref{thm:necc} demonstrates that the LATE among compliers is then in fact the \textit{only} treatment effect parameter $\Delta_c$ that is outcome-nonrestrictive identified in the LATE model. The local average treatment effect $\Delta_c$ is outcome-nonrestrictive identified for the compliers $c=(0,0,1)'$, because this this $c$ belongs to \textit{both} $rs(A^{[1]})$ and to $rs(A^{[0]})$.\\

\noindent \textit{Remark: } \citet{melowinter} study the cardinality of the intersection between the unit cube in $\mathbbm{R}^n$ and any linear subspace of $\mathbbm{R}^n$. For a matrix $A$ with $rs(A)$ of dimension $k$, their result implies that the $rs(A) \cap \{0,1\}^{n}$ has a cardinality of at most $2^k$. In the binary-binary LATE model, $k=|\mathcal{Z}|=2$ for either of $A^{[0]}$ or $A^{[1]}$, and in either case $|rs(A^{[t]}) \cap \{0,1\}^{n}|=2$, which does not meet this upper bound of $2^k=4$. However the result does imply that there can be no more than $2^{|\mathcal{Z}|}$ binary combinations, even though typically $2^{|\mathcal{Z}|} < 2^{|\mathcal{G}|}$ and there are $2^{|\mathcal{G}|}$ potential values of $c$ to consider ex-ante.

\subsubsection*{Example target parameter: the unconditional average treatment effect} With a binary treatment, a well-studied parameter of interest is the overall population average treatment effect (ATE): $\Delta^{0,1}=\mathbbm{E}[Y_i(1)-Y_i(0)]$. This can be seen as a parameter $\mu_{c}^{t',t}=\mathbbm{E}[Y_i(t')-Y_i(t)|c(G_i)=1]$ in which the function $c(g)=1$ for all $g \in \mathcal{G}$, or in vector form $c=(1,1, \dots 1)'$.

As a Corollary of Theorems \ref{thm:suff} and \ref{thm:necc} we thus have the following:
\begin{corollary} \label{corr:ate}
	Suppose that $\mathcal{Z}$ and $\mathcal{T}$ are finite. Then the unconditional counterfactual mean $\mathbbm{E}[Y_i(t)]=\mu^t_{(1,1, \dots 1)'}$ is outcome-nonrestrictive identified if and only if $(1,1, \dots 1)' \in rs(A^{[t]})$, and the unconditional average treatment effect $\mathbbm{E}[Y_i(t')-Y_i(t)]=\Delta^{t,t'}_{(1,1, \dots 1)'}$ is outcome-nonrestrictive identified if and only if $(1,1, \dots 1)' \in (rs(A^{[t']}) \cap rs(A^{[t]}))$.
\end{corollary}
\noindent Since the presence of never-takers with respect to treatment $t$ implies that $(1,1, \dots 1)' \notin rs(A^{[t]}), $\footnote{If such never-takers are alowed in $\mathcal{G}$, this introduces a column of all zeroes in the matrix $A^{[t]}$.} Corollary \ref{corr:ate} implies that ATEs and unconditional counterfactual means are never point-identified in an outcome-nonrestrictive manner absent restrictions on selection.

Corollary \ref{corr:ate} also relates my results to recent work by \citet{bai2024identifyingpowermonotonicityaverage} on the partial identification power of monotonicity for these parameters, as described in Appendix \ref{sec:bhmsv}. \citet{bai2024identifyingpowermonotonicityaverage} show that selection models can have limited additional identifying power for ATEs provided that they include a restriction that the authors call \textit{generalized monotonicity}, and the outcome is discrete and bounded. These results underscore the upside to focusing on target parameters beyond the ATE (i.e. $c \ne (1,1,\dots 1)'$) when one is willing to impose restrictions on selection, or finding restrictions on selection that do not imply generalized monotonicity but still aid in identification. Section \ref{sec:detailedexamples} shows some examples of this kind.

\subsection{Applying the characterization to search for identified treatment effect parameters} \label{sec:geomTEs}
What then can we say about the set of possible identified treatment effect parameters for a given $t' \ne t$, that is: $\mathbbm{E}[Y_i(t')-Y_i(t)|c_{G_i}=1]$ where $c \in rs(A^{[t]}) \cap rs(A^{[t']}) \cap \{0,1\}^{|\mathcal{G}|}$?

For ease of notation, let us for the moment label the treatment values of interest $t'=1$ and $t=0$, without loss of generality. Let us similarly denote $\alpha^{[t']}$ by $\alpha_1$ and $\alpha^{[t]}$ by $\alpha_0$ (each of these is a $|\mathcal{Z}|$-component vector). Then for some $c \in \{0,1\}^{|\mathcal{G}|}$ and $\alpha_0, \alpha_1 \in \mathbbm{R}^{|\mathcal{G}|}$, we have a binary collection when:
$$c' = \alpha_1'A^{[1]} = \alpha_0'A^{[0]}$$
which occurs if and only if
\begin{equation} \label{eq:Anullspace}
	(\alpha_1',-\alpha_0')\begin{bmatrix}
		A^{[1]}\\
		A^{[0]}
	\end{bmatrix}:=\alpha'A^{[1,0]} = \mathbf{0}^{|\mathcal{G}|}
\end{equation}
where we let $A^{[1,0]}$ denote a $2\cdot|\mathcal{Z}| \times |\mathcal{G}|$ matrix composed of the rows of $A^{[1]}$ followed by the rows of $A^{[0]}$, and $\alpha = (\alpha_1',-\alpha_0')'$ is a $2\cdot|\mathcal{Z}| \times 1$ vector. For any $\alpha$ in the left null-space $ns({A^{[1,0]}})$ of ${A^{[1,0]}}$, let $c(\alpha)$ denote the value $c = {A^{[1]}}'\alpha_1 = {A^{[0]}}'\alpha_0$ where $\alpha_1$ is a vector of the the first $|\mathcal{Z}|$ components of $\alpha$ and $\alpha_0$ is a vector of minus one times each of the last $|\mathcal{Z}|$ components of $\alpha$. In general then $\mathcal{C}(t) \cap \mathcal{C}(t') = \{c(\alpha): \alpha \in ns(A^{[t',t]})\} \cap \{0,1\}^{|\mathcal{G}|}$, where $A^{[t',t]}$ is composed from $A^{[t']}$ and $A^{[t]}$ as above. This characterization proves useful in the search for new IV identification results to follow.

The following result further aids in implementing a practical search for binary combinations (and hence binary collections): 
\begin{proposition} \label{prop:limitsearch}
	If $c \in rs(A^{[t]})$ for some $t$ and $c \in \{0,1\}^{|\mathcal{G}|}$, then the equation $c'=\alpha'A^{[t]}$ can be satisfied by a vector $\alpha$ having elements that are rational and belong to the set $$\mathcal{C}_{n} := \left\{\frac{a}{b}: a,b, \in \mathcal{D}_{|\mathcal{Z}|}\right\}$$
	where $\mathcal{D}_{n}:=\{det(B): B \in \{0,1\}^{n \times n}\}$ is the set of possible determinant values for an $n \times n$ matrix $B$ having entries in $\{0,1\}$.
\end{proposition}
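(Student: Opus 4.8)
The plan is to exhibit one particular solution $\alpha$ of the linear system $\alpha^{\top} A^{[t]} = c^{\top}$ whose entries have the claimed form; since the proposition only asserts a ``without loss of generality'' reduction, producing a single such solution is enough.

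First I would cut the problem down to a square one. Let $r = \mathrm{rank}(A^{[t]})$ and fix a set $R \subseteq \mathcal{Z}$ of $r$ rows of $A^{[t]}$ that are linearly independent; these necessarily span $rs(A^{[t]})$. By hypothesis $c^{\top} \in rs(A^{[t]})$, so there is a vector $\alpha$ supported on $R$ (that is, with $\alpha_z = 0$ for $z \notin R$) solving $\alpha^{\top} A^{[t]} = c^{\top}$, and it is the unique such vector because the rows indexed by $R$ are independent. The coordinates $\alpha_z$ with $z \notin R$ lie in $\mathcal{C}_{|\mathcal{Z}|}$ trivially, since $0 = 0/1$ while $0 = \det(\mathbf{0})$ and $1 = \det(I)$ are both in $\mathcal{D}_{|\mathcal{Z}|}$; it remains to handle the coordinates indexed by $R$.

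Next I would apply Cramer's rule. The $r \times |\mathcal{G}|$ block of $A^{[t]}$ on rows $R$ has full row rank $r$, so some set $S \subseteq \mathcal{G}$ of $r$ columns makes the $r \times r$ submatrix $B := [A^{[t]}]_{R,S}$ invertible. Restricting $\alpha^{\top} A^{[t]} = c^{\top}$ to the columns in $S$ and using that $\alpha$ is supported on $R$ gives $\alpha_R^{\top} B = c_S^{\top}$, i.e. $\alpha_R = (B^{\top})^{-1} c_S$. Cramer's rule then expresses each $\alpha_z$, $z \in R$, as $\det(B_z)/\det(B)$, where $B_z$ is $B^{\top}$ with one column replaced by $c_S$. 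Because $c \in \{0,1\}^{|\mathcal{G}|}$, both $B$ and $B_z$ are $r \times r$ matrices with entries in $\{0,1\}$, so $\det(B), \det(B_z) \in \mathcal{D}_r$, and $\det(B) \neq 0$ since $B$ is invertible.

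Finally I would promote $\mathcal{D}_r$ to $\mathcal{D}_{|\mathcal{Z}|}$: for any $M \in \{0,1\}^{r \times r}$ the block matrix $\mathrm{diag}(M, I_{|\mathcal{Z}| - r})$ lies in $\{0,1\}^{|\mathcal{Z}| \times |\mathcal{Z}|}$ and has determinant $\det(M)$, so $\mathcal{D}_r \subseteq \mathcal{D}_{|\mathcal{Z}|}$. Hence every $\alpha_z = \det(B_z)/\det(B)$ is a well-defined element of $\mathcal{C}_{|\mathcal{Z}|}$ (and in particular rational), which completes the argument. I do not expect a real obstacle here: the computation is routine linear algebra. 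The only points that need a little care are the first step --- verifying that a solution supported on a maximal independent set of rows exists --- and the observation, used in the Cramer's-rule step, that the numerator matrices $B_z$ inherit $0/1$ entries, which is exactly where the hypothesis that $c$ is a response-group indicator (hence a $0/1$ vector) is used.
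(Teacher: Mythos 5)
Your proposal is correct and follows essentially the same route as the paper's proof: reduce to an invertible $r\times r$ binary submatrix by selecting linearly independent rows and columns, then apply Cramer's rule to express each nonzero $\alpha_z$ as a ratio of determinants of $\{0,1\}$ matrices. Your treatment is in fact slightly cleaner in that it handles all rank configurations in one pass (the paper splits into over-determined, under-determined, and general cases) and makes explicit the inclusion $\mathcal{D}_r \subseteq \mathcal{D}_{|\mathcal{Z}|}$ via padding with an identity block, a step the paper leaves implicit.
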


\noindent Proposition \ref{prop:limitsearch} implies that when searching for binary combinations, we can always restrict the components of $\alpha$ to belong to the finite set $\mathcal{D}_{|\mathcal{Z}|}$. For $n \le 7$, the set $\mathcal{D}_{n}$ is known to consist of consecutive integers symmetric about zero \citep{craigen}:\footnote{For $n \ge 8$, $\mathcal{D}_n$ remains a bounded set of integers for any given $n$, but $\mathcal{D}_n$ generally skips some consecutive integers. For example, it is not possible for a $7 \times 7$ binary matrix to have a determinant of $28$ but one can achieve a determinant of $32$ \citep{craigen}.}
\begin{align*}
	\mathcal{D}_1 &= \mathcal{D}_2 = \{-1,0,1\}, \quad \mathcal{D}_3 = \{-2,-1,0,1,2\}, \quad 
	\mathcal{D}_4 = \{-3, \dots, -1,0,1,\dots ,3\}\\
	&\mathcal{D}_5, \quad = \{-5, \dots, -1,0,1,\dots ,5\}, \quad \mathcal{D}_6 = \{-8, \dots, -1,0,1,\dots ,8\}
\end{align*}
As an implication it for example follows that for $|\mathcal{Z}|\le 2$, all $\alpha_z$ must be in the set $\mathcal{C}_{1}=\mathcal{C}_{2}=\{-1,0,1\}$, in the set $\mathcal{C}_{3}=\{-2,-1,-1/2,0,1/2,1,2\}$ for $|\mathcal{Z}|= 3$, and for $|\mathcal{Z}|= 4$: $$\mathcal{C}_{4}=\{-3,-2,-3/2,-1,-2/3,-1/2,-1/3,0,1/3,1/2,2/3,1,3/2,2,3\}$$

\noindent \textit{Remark:} Since each $\alpha_z$ is rational, we can without loss of generality rewrite Eq. \eqref{idresult} with \textit{integer} coefficients $\alpha_k$ (by multiplying the numerator and denominator of \eqref{idresult} by the least common multiple of the denominators of all $\alpha_z$). However, these integer coefficients do not necessary need to add up to zero, as they do in the binary treatment LATE model and various extensions of it.

\subsection{Algorithms for generating binary collections}

In this section I implement a brute-force algorithm that uses the results thus far to perform an exhaustive search for binary collections in settings with $|\mathcal{Z}|, |\mathcal{T}| \le 3$, uncovering several novel identification results for treatment effects in IV models.\footnote{I am grateful to Simon Lee for suggesting this idea to me.} 

I compare two versions of the algorithm, which are laid out explicitly in Appendix \ref{app:algoone}. The first is a ``naive'' approach that iterates over all possible selection models $\mathcal{G}$ given $\mathcal{Z}$ and $\mathcal{T}$ and then finds binary collections within that selection model. For a given selection model $\mathcal{G}$, there are $2^{|\mathcal{G}|}$ possible values of the vector $c$, and a certificate of whether $c$ corresponds to a binary collection for a given $t',t$ can be verified by testing whether $c=c(\alpha)$ for some $\alpha$ in the left nullspace of matrix $A^{[t',t]}$ defined in Eq. \eqref{eq:Anullspace}. A second algorithm makes use of Proposition \ref{prop:limitsearch} to instead iterate over the possible $2 \cdot |\mathcal{Z}|$-component vectors $\alpha$, rather than over selection models $\mathcal{G}$. This comes at great computational benefit, as computations for a single $\alpha$ are useful for studying many selection models at once. Given the results of Section \ref{sec:geomTEs}, we can without loss of generality restrict the search over $\alpha$ to those having components in the discrete and finite set $\mathcal{C}_{|\mathcal{Z}|}$. Compared with Algorithm 1 above, which quickly becomes infeasible for $|\mathcal{Z}|\ge 3$, this second approach runs on $|\mathcal{Z}|=3$ within minutes. The reason is that the number of possible selection models $2^{|\mathcal{T}|^{|\mathcal{Z}|}}$ scales much more quickly with $|\mathcal{Z}|$ than the number $(\mathcal{C}_{|\mathcal{Z}|})^{2|\mathcal{Z}|}$ of possible $\alpha$ vectors, as shown in Appendix Table \ref{table:complexity}.

\subsection{Overview of computational results}

Table \ref{table:bruteforceresults} presents an overview of results of the two algorithms for settings with $|\mathcal{Z}| ,|\mathcal{T}| \le 3$.\footnotemark  While the next section highlights examples from each combination $(|\mathcal{Z}|,|\mathcal{T}|)$ in detail, a full catalog of the identification results is provided in Appendix \ref{sec:catalog}. While the settings reported in Table \ref{table:bruteforceresults} are ``small'', they turn out to contain a rich structure of identification results, which varies considerably by $\mathcal{T}$ and $\mathcal{Z}$.

\begin{table}[ht!]
	\begin{center}
		\begin{tabular}{cc|ccccc}
			$|\mathcal{T}|$ & $|\mathcal{Z}|$ & \# SMs & \# BCs  & \multicolumn{2}{c}{Algorithm 2 run-time}& Algorithm 1 run-time \\
			\cmidrule(r){5-6}&&&&Initial search & Organizing/Paring \\ \hline\hline
			2 & 2 & 2 & 4 & 0.08 seconds &.06 seconds & 0.11 seconds \\
			3 & 2 & 5  & 5 & 0.08 seconds &.12 seconds & 13.9 seconds \\
			2 & 3 & 11  & 30 & 55 seconds & .20 seconds& 4.3 seconds\\
			3 & 3 & 251 & 251 & 18 minutes & 65 minutes & N/A (estimate: 22 days)\\
		\end{tabular} \\ \vspace{.25cm}
		\begin{tablenotes}\footnotesize
			\item[*] ``\# BCs'' = number of distinct binary collections, ``\# SMs'' = number of distinct maximal selection models.
			\item[*] See text below for precise definitions.
		\end{tablenotes} \caption{\label{table:bruteforceresults}}
	\end{center}
\end{table}
\noindent The third column in Table \ref{table:bruteforceresults} counts the number of distinct selection models for a given support of the instruments and treatments, that are \textit{maximal} for some binary collection. The detailed description of Algorithm 2 in Appendix \ref{app:algoone} describes how given a binary collection indexed by $\alpha \in \mathbbm{R}^{2 \cdot |\mathcal{Z}|}$ (and a choice of $t,t'$, implicit), we can define a maximal selection model $\mathcal{G}(\alpha)$ with the property that $\alpha$ continues to deliver a binary collection for $t',t$ within any smaller selection model $\mathcal{G} \subseteq \mathcal{G}(\alpha)$ that is more restrictive that $\mathcal{G}(\alpha)$.\footnotetext{Run times are with R version 4.3.2 with a 3600MHz processor (AMD Ryzen Threadripper PRO 5975WX), 128GB RAM. While Algorithm 1 is parallelized across 31 cores, Algorithm 2 computation uses a single core. Algorithm 2 is not trivial to parallelize across processors given the need to check for redundancies, but does enable Algorithm 2 to be feasibly extended to $|\mathcal{Z}|=4$ on this computer setup.}\footnote{For example, let $\mathcal{G}$ be the choice model described in Section \ref{sec:3by3} from case ii of Proposition 2 of \citet{kirkeboenleuvenmogstad}. After removing two response types from $\mathcal{G}$, a second treatment effect parameter becomes identified, which is listed under a different selection model $\mathcal{G'} \subset \mathcal{G}$ counted in Table \ref{table:bruteforceresults}.} 


The fourth column in Table \ref{table:bruteforceresults} counts the number of distinct binary collection vectors $\alpha$ for a given support of the instruments and treatments. Although a given $\alpha$ generates a valid binary collection under any $\mathcal{G} \subseteq \mathcal{G}(\alpha)$, this column only counts a given $\alpha$ one time, to avoid double counting of the same identification result. Note that for some $|\mathcal{T}|,|\mathcal{Z}|$ there are the same number of distinct identification results for conditional average treatment effects as there are distinct selection models admitting such identification results, this does not mean that exactly one treatment effect parameter is identified in any given selection model. The reason is that the selection models may be nested as described above. The preamble to Appendix \ref{sec:catalog} provides a detailed example.

\subsection{Detailed examples and new identification results} \label{sec:detailedexamples}

Appendix \ref{app:examples} makes several illustrative observations from identification results that are summarized in Table \ref{table:bruteforceresults}, and reported in full in the catalog of Appendix \ref{sec:catalog}.
	\section{Application: interaction effects in cross-randomized designs} \label{sec:empirical}

This section applies Theorems \ref{thm:suff} and \ref{thm:suff} to study the identification of complementarities between two binary treatment variables. This represents a setting in which $|\mathcal{T}|=|\mathcal{Z}| = 4$. Appendix \ref{sec:spillovers} considers a second application for this case, which studies treatment effects when there can be spillovers between pairs of observational units.

\subsection{Background and empirical practice} \label{sec:backgroundempiricalpractice}
In many experimental settings, researchers cross randomize two treatments $A$ and $B$, and investigate whether there are interaction effects between the treatments, i.e. whether the effect of receiving both $A$ and $B$ differs from the sum of the effects of each of $A$ and $B$ alone. In some such settings instrumental variables methods are not needed, because compliance is perfect or the intent-to-treat effect is the policy-relevant effect of direct interest (see e.g. \citealt{dufloetal,mbitietal}). Given randomization, intent-to-treat (ITT) effects can be straightforwardly estimated by the regression:
\begin{equation} \label{eq:ittinteraction}
	Y_i = \gamma_0+\gamma_1 \cdot \mathbbm{1}(Z_i=A)+\gamma_2 \cdot \mathbbm{1}(Z_i=B) +\gamma_3 \cdot \mathbbm{1}(Z_i=C) + \nu_i
\end{equation} 
where $Z_i=C$ indicates the treatment arm for \textit{both} treatments $A$ and $B$. Such cross-randomized experiments are often referred to as ``factorial designs''.\footnote{See \citet{crosscuts} for a review of empirical practice in factorial designs, especially regarding the $\gamma_3$ term.}

However in many factorial designs the treatment arms $Z_i \in \{A,B,C\}$ represent \textit{offers} for treatments $A$ or $B$ or both, respectively, and researchers obtain data on whether the treatments were actually received. For example, \citet{depression} study the effects of pharmacotherapy (medication) and livelihood assistance (personalized training and support around income generation), among adults with depression in Karnataka, India. Across the three treatment arms of the cross-randomized experiment, roughly 65\% of participants actually undertake pharmacotherapy (defined as attending at least one psychiatric consultation), receive livelihoods assistance (attending at least one livelihoods workshop), or both. Further many adults assigned to receive both pharmacotherapy \textit{and} assistance undertake only one of the two treatments, although they are offered both. The population studied does not typically have access to pharmacotherapy or livelihoods assistance except through the field experiment, so the non-compliance is one-sided.

To move beyond analysis that is limited to intent-to-treat effects, let us denote the possible treatments as $\mathcal{T} = \{0,A,B,C\}$, with associated potential outcomes $Y_i(t)$ for $t \in \mathcal{T}$. For example, $Y_i(0)$ is the outcome $i$ would experience with neither of the two treatments $A$ and $B$. Meanwhile the set of instrument values is $$\mathcal{Z} = \{\textrm{offered neither},\textrm{offered just A},\textrm{offered just B},\textrm{offered both}\}$$ If subjects are offered both $A$ and $B$, they may choose to take treatment $A$ only, treatment $B$ only, or both treatments $C$. Assume that treatments $A$ and $B$ are otherwise not available to participants, so non-compliance is one-sided.

For a single individual, we can say that $A$ and $B$ exhibit \textit{complementarity} if $|Y_i(C)-Y_i(0)| > |Y_i(A)-Y_i(0)| + |Y_i(B)-Y_i(0)|$. Of course, testing for complementarity at the individual is infeasible due to the fundamental problem of causal inference. Let
\begin{equation} \label{eq:h0avg}
	H_0: \mathbbm{E}[Y_i(C)-Y_i(A)-Y_i(B)+Y_i(0)] = 0
\end{equation}
instead be the two-sided hypothesis of no interaction \textit{on average}, where the interaction effect is $\{Y_i(C)-Y_i(0)\}-\{Y_i(A)-Y_i(0)\} + \{Y_i(B)-Y_i(0)\}=Y_i(C)-Y_i(A)-Y_i(B)+Y_i(0)$. Under perfect compliance, $H_0$ is equivalent to the hypothesis $\gamma_3-\gamma_1-\gamma_2 > 0$ from the ITT regression \eqref{eq:ittinteraction}. This test is employed for example by \citet{depression}, using only data on assignment and ignoring information about compliance. However, the interpretation of this test may be misleading if compliance is not perfect:
\begin{proposition} \label{prop:misleading}
	If there is imperfect compliance, the parameter $\gamma_3-\gamma_1-\gamma_2$ in Eq. \eqref{eq:ittinteraction} may be zero even when $H_0$ does not hold, and may be non-zero even when $H_0$ holds.
\end{proposition}
\noindent The intuition behind Proposition \ref{prop:misleading} is that regression \eqref{eq:ittinteraction} tells us nothing about complementarity effects among individuals who do not align their actual treatments $T_i$ with their treatment assignment $Z_i$. The result suggests that the common empirical practice of using ITT regressions (rather than focusing on treatment effects per-se) is problematic given that compliance is often known to be far from perfect. However, there are limited identification results for researchers to make use of to estimate interaction effects with imperfect compliance and effect heterogeneity \citep{blackwell2017,interactingtreatments}.

One solution is to restrict outcomes, assuming sufficient treatment effect homogeneity to get around Proposition \ref{prop:misleading}. For example, if we assume that no selection on gains (NSOG) holds, the four unconditional counterfactual means $\mathbbm{E}[Y_i(C)]$, $\mathbbm{E}[Y_i(B)]$, $\mathbbm{E}[Y_i(A)]$, and $\mathbbm{E}[Y_i(0)]$ are identified under general conditions given in Appendix \ref{app:nsog}. Identification is constructive and corresponds to the estimand of a two-stage least squares (2SLS) regression of $Y_i$ on indicators for each of the four treatments (and no constant), instrumented by indicators for each of the four treatment assignment arms.\footnote{In particular, since there are four instrument values and four treatment values, we can use a result derived in Appendix \ref{app:nsog} under NSOG, that $\mathbbm{E}[Y_i(t)] = \sum_z \Sigma^{-1}_{tz} \cdot \mathbbm{E}[Y_i\cdot \mathbbm{1}(Z_i=z)]$, provided that the matrix with entries $\Sigma_{zt} = P(Z_i=z,T_i=t)$ is invertible. Some algebra shows that this coincides with the two-stage least squares estimand mentioned above.} We can then test $H_0$ by testing $\beta_3=0$ in the equation $Y_i = \beta_0+\beta_1 \cdot \mathbbm{1}(T_i \in \{A,C\})+\beta_2 \cdot \mathbbm{1}(T_i \in \{B,C\}) +\beta_3 \cdot \mathbbm{1}(T_i=C) + \epsilon_i$, estimated using the instruments $\mathbbm{1}(Z_i=A)$, $\mathbbm{1}(Z_i=B)$, $\mathbbm{1}(Z_i=\textrm{both})$ and a constant.

Nevertheless, NSOG is a very restrictive assumption. It suggests for example that individuals do not have some knowledge of their specific gains from the various treatments that informs their selection behavior. When NSOG does not hold, \citet{interactingtreatments} detail how the 2SLS estimand $\beta_3$ generally mixes interaction effects with terms that simply reflect treatment effect heterogeneity. It is thus desirable to pursue an alternative approach that leads to an interpretable causal estimand without restricting outcomes. 

\subsection{Identifying the local average interaction effect among compliers}
We now use Theorems \ref{thm:suff} and \ref{thm:necc} to examine to what extent NSOG can be meaningfully relaxed. Ex-ante, there are $2\times 2\times 4 = 16$ response types that respect one-sided non-compliance.\footnote{Response types correspond to the choices individuals would make across three decisions: whether to take treatment A if A only is offered, B if B only is offered, and which of the four treatment combinations to take if both are offered.} However, assuming that the weak-axiom of revealed preference (WARP) holds, we obtain the additional restrictions that $\{T_{i}(\textrm{offered both})=A \implies T_{i}(\textrm{offered A})=A\}$,  $\{T_{i}(\textrm{offered both})=B \implies T_{i}(\textrm{offered B})=B\}$, $\{T_{i}(\textrm{offered both})=0 \implies T_{i}(\textrm{offered A})=T_{i}(\textrm{offered B})=0\}$, $\{T_{i}(\textrm{offered A})=A \implies T_{i}(\textrm{offered both}) \ne 0\}$, and $\{T_{i}(\textrm{offered B})=B \implies T_{i}(\textrm{offered both}) \ne 0\}$. These restrictions eliminate seven response types in total. The nine that remain are enumerated in Table \ref{table:cross}. I refer to the nine remaining response types as $\mathcal{G}^{WARP}$. $\mathcal{G}^{WARP}$ represents the weakest selection model consistent with rational choice and one-sided non-compliance in a factorial design. 

\begin{table}[h!]
	\centering \small
	\begin{tabular}{c|cccccccccc}
		\textbf{offered} $\downarrow$ &n.t. & complier & A only & B only & \color{gray}{only both} & \color{gray}{A+} & \color{gray}{B+} & \color{gray}{favor A} & \color{gray}{favor B}\\
		\hline
		neither & 0 & 0 & 0 & 0& \color{gray}{0} & \color{gray}{0} & \color{gray}{0} & \color{gray}{0} & \color{gray}{0}\\
		just A & 0 &A& A & 0 & \color{gray}{0} & \color{gray}{A} & \color{gray}{0} & \color{gray}{A}& \color{gray}{A}\\
		just B & 0 & B & 0 & B& \color{gray}{0} & \color{gray}{0} & \color{gray}{B}& \color{gray}{B} & \color{gray}{B}\\
		both & 0 & C & A & B& \color{gray}{C} & \color{gray}{C} & \color{gray}{C} & \color{gray}{A} & \color{gray}{B}\\
		\hline
	\end{tabular} \vspace{.25cm}
	\caption{Response types that satisfy WARP in the cross-randomized offer design. The columns in black correpond to the response types allowed by Proposition \ref{prop:interactionid}, while the gray columns correspond to the remaining response types that are compatible with WARP. \label{table:cross}}
\end{table}

Given a selection model $\mathcal{G}$ and a function $c: \mathcal{G} \rightarrow \{0,1\}$, let us refer to $LAIE(c):=\mathbbm{E}[Y_i(C)-Y_i(A)-Y_i(B)+Y_i(0)|c(G_i)=1]$ as the local \textit{average interaction effect} among the subgroup of $g \in \mathcal{G}$ such that $c(g)=1$. LAIEs are causal quantities like the local treatment effect parameters introduced in Section \ref{sec:oaid}, except that they involve the potential outcomes for all four treatments rather than just two. The following Proposition uses Theorems \ref{thm:suff} and \ref{thm:necc} to establish when $LAIE(c)$ is identified in a manner that does not restrict outcomes:
\begin{proposition} \label{prop:interactionid}
	Given one-sided noncompliance and WARP, a local average interaction effect parameter $LAIE(c)$ is outcome-nonrestrictive identified if and only if $c(g)=\mathbbm{1}(g=\textrm{complier})$ and $\mathcal{G} \subseteq \{\textrm{n.t.}, \textrm{complier}, \textrm{A only}, \textrm{B only}\}$.
\end{proposition}
\noindent Proposition \ref{prop:interactionid} follows from a brute-force enumeration over all of the 511 selection models $\mathcal{G} \subseteq \mathcal{G}^{WARP}$, and the $c \in \{0,1\}^{|\mathcal{G}|}$ within each of them. Theorems \ref{thm:suff} and \ref{thm:necc} along with an extension of Proposition \ref{prop:ifTEthenmeans} to parameters that involve more than two treatment states (proved in Appendix \ref{proofsec}) shows that $LAIE(c)$ is outcome non-restrictive identified iff $c \in rs(A^{[0]}) \cap rs(A^{[A]}) \cap rs(A^{[B]}) \cap rs(A^{[C]})$. Thus it is sufficient to enumerate all binary collections with $\psi = \{0,A,B,C\}$.\footnote{The search also shows that average interaction effects among compliers can also be identified in a selection model in which the ``A only'' group is traded for a group that takes treatment $A$ only when offered both, and takes neither treatment under all other treatment assignments. I omit an extended discussion of this novel result here for brevity since this final group violates WARP and the model is asymmetric with respect to the treatments (details are available upon request).}

Proposition \ref{prop:interactionid} establishes that identifying complementarities in a cross-randomized design without outcome restrictions requires substantive restrictions on selection: many of the response types in $\mathcal{G}^{WARP}$ not included in $\{\textrm{n.t.}, \textrm{complier}, \textrm{A only}, \textrm{B only}\}$ are ex-ante plausible. For example, let $U_i(t)$ denote the interpret $U_i(t)$ as the net utility of treatment $t \in \mathcal{T}$ relative to no treatment for individual $i$ (thus normalizing $U_i(0)=0$). Without loss of generality, consider a random coefficients form for the utility function: $U_i(t) = \pi_{Ai} \cdot \mathbbm{1}(t=A)+\pi_{Bi} \cdot \mathbbm{1}(t=B)+\pi_{Ci} \cdot \mathbbm{1}(t=C)$. If the vector $\pi_i = (\pi_{Ai},\pi_{Bi},\pi_{Ci})'$ has support in an open neighborhood of the origin in $\mathbbm{R}^3$, all nine groups from Table \ref{table:cross} will be present in the population. 

Appendix \ref{sec:sepchoice} shows that we can rationalize the restriction made in Proposition \ref{prop:interactionid} by supposing that individuals choose \textit{separately} whether to receive treatment $A$ or $B$, rather than as a single joint decision. That is, individuals choose as if they evaluate the costs and benefits of each treatment $A$ or $B$ separately, and choose all treatments offered to them for which benefits outweigh costs. For this reason, let us denote the largest selection model in which the local average interaction effect among compliers is identified as $\mathcal{G}^{sep}:=\{\textrm{n.t.}, \textrm{complier}, \textrm{A only}, \textrm{B only}\}$. Given one-sided noncompliance, the selection model $\mathcal{G}^{sep}$ is also equivalent to what \citet{blackwell2017} calls a ``treatment exclusion'' restriction that the instrument for treatment $A$ does not affect uptake of treatment $B$ (and vice versa). \citet{blackwell2017} shows that in this case the interaction coefficient $\beta_3$ from a 2SLS regression identifies the local average interaction effect among compliers. Proposition \ref{prop:interactionid} shows that treatment exclusion is furthermore \textit{necessary} to identify this parameter without restricting outcomes. Given WARP and one-sided noncompliance, treatment exclusion cannot be relaxed without restricting outcomes.\\


\textbf{Estimating treatment effects and interaction effects among compliers:}  In the language of Section \ref{sec:posi}, the positive side of Proposition \ref{prop:interactionid} can be understood as the existence of a binary collection $\{(t,\alpha^{[t]})\}_{t \in \psi}$ in which $\psi = \mathcal{T} = \{0,A,B,C\}$. The function $c$ describing this binary collection is $c(g)=\mathbbm{1}(g=\textrm{complier})$, or in vector notation $c' = (0,1,0,0)'$. Let us denote the functions $\alpha^{[t]}(z)$ in vector form as $\alpha_t$, in which case
\begin{align}
	\alpha_0 = (1, -1, -1, 1 )', \quad \alpha_A = (0, 1, 0, -1 )', \quad \alpha_B = (0, 0, 1, -1 )', \quad \alpha_C = (0, 0, 0, 1 )' \label{eq:bcs} 
\end{align}
One can verify directly that for each $t \in \mathcal{R}$, $\alpha_t'A^{[t]} = (0,1,0,0)$ where the matrix $A$ is defined from the first four columns of Table \ref{table:cross}. For brevity, let us denote $LAIE((0,1,0,0)') = \mathbbm{E}[Y_i(C)-Y_i(A)-Y_i(B)+Y_i(0)|g = \textrm{complier}]$ as simply LAIE (with this $c$ implicit).

The binary collection \eqref{eq:bcs} implies a cumbersome expression for $LAIE$, but some simplification shows that $LAIE=\theta^{ITT}/p$, where $p=P(G_i=\textrm{complier})$ and $\theta^{ITT}:=\gamma_3-\gamma_1-\gamma_2$ is the measure of average complimentary from the intent-to-treat regression Eq. \eqref{eq:ittinteraction}. This delivers the following important consequence of Proposition \ref{prop:interactionid}:
\begin{corollary} \label{corr:itt}
	Given $\mathcal{G} \subseteq \mathcal{G}^{sep}$, the sign of the local average interaction effect among compliers $LAIE$ is the same as $\gamma_3-\gamma_1-\gamma_2$ from the  intent-to-treat regression \eqref{eq:ittinteraction}.
\end{corollary}
\noindent The algebra that leads to $LAIE=\theta^{ITT}/p$ is given in Appendix \ref{sec:ittexpression}, where it is also extended to the case in which covariates are included in Eq. \eqref{eq:ittinteraction}.

While the ITT condition $\gamma_3-\gamma_1-\gamma_2$ cannot be used to test the hypothesis of overall unconditional complementarity (without outcome restrictions), it can by Propisition \ref{corr:itt} be used to test for the sign of local average interaction effect among compliers (the only group for whom interaction effects can be point identified in any way without restricting outcomes). This latter interpretation requires no outcome restrictions, but instead the non-trivial selection model $\mathcal{G} \subseteq \mathcal{G}^{sep}$. Corollary \ref{corr:itt} formally justifies the test for complementarity used by \citet{depression} within this selection model. 

Note finally that the binary collection \eqref{eq:bcs} implied by the selection model $\mathcal{G} \subseteq \mathcal{G}^{sep}$ yields three overidentification restrictions for the share of compliers:
\begin{align}
	p&:=P(T_i=C|Z_i=\textrm{both}) \label{eq:fourps} \\
	&=P(T_i=A|Z_i=\textrm{just A})-P(T_i=A|Z_i=\textrm{both}) \nonumber \\
	&=P(T_i=B|Z_i=\textrm{just B})-P(T_i=B|Z_i=\textrm{both}) \nonumber \\
	&=1+P(T_i=0|Z_i=\textrm{both})-P(T_i=0|Z_i=\textrm{just A})-P(T_i=0|Z_i=\textrm{just B}) \nonumber
\end{align}
for some value $p \in [0,1]$ which identifies $P(G_i = \textrm{complier})$. This testable implication is new to the literature and can be used to assess the substantive assumption $\mathcal{G} \subseteq \mathcal{G}^{sep}$.\footnote{These are stronger than testable implications mentioned by \citet{blackwell2017}, which give $P(A\textrm{ or }C|\textrm{both})=P(A|\textrm{just A})$ and $P(B\textrm{ or }C|\textrm{both})=P(B|\textrm{just B})$ in the case of one-sided noncompliance. Those do not imply the last line of Eq. \eqref{eq:fourps}.} 

\subsection{Empirical application}
I use the replication data from \citet{depression} to implement the above findings empirically. First, we assess the testable implications of $\mathcal{G} \subseteq \mathcal{G}^{sep}$. Unable to reject the over-identifying restrictions, I then estimate $LAIE$ following Proposition \ref{prop:interactionid}. 

In Appendix \ref{sec:star} I consider a second empirical setting, from \citet{angristlangoreopoulos}, in which students were cross-randomized into academic support and financial incentives for good grades. In that setting, I find that the testable implications of $\mathcal{G} \subseteq \mathcal{G}^{sep}$ are rejected, and therefore local average interaction effect parameters cannot be identified absent restrictions on outcomes. This illustrates that the general over-identifying restrictions highlighted in Section \ref{sec:testable} have power in an empirically relevant way.

Since the experiment reported in \citet{depression} stratifies randomization into nine strata (defined by district and terciles of a village poverty index), the implementation below requires some extensions to the basic results of this paper that allow randomization to hold based on observed covariates $X_i$. Implementation is described in Appendix \ref{app:additionalinteraction}. For simplicity, conditional expectations that need to be estimated are assumed to be additively separable between instruments $Z_i$ and indicators for the strata $X_i$.\\

\noindent \textbf{Testing the overidentification restrictions:}
Each of the four expressions the proportion $p$ of compliers in Eq. \eqref{eq:fourps} can be estimated using regressions of the various $D_i^{[t]}$ on instrument indicators as well as indicators for strata $X_i$. The extension of \eqref{eq:fourps} to the case with strata fixed effects is given in Appendix \ref{sec:covsempirical}. Following \citet{depression}, I use cluster robust inference by village (the level of treatment assignment).

The point estimates for $p:=P(G_i=\textrm{complier})$ are $36.7\%$, $40.2\%$, $39.7\%$, and $43.7\%$, respectively. A chi-squared test for equality of all four estimates of $p$ can be implemented using standard seemingly unrelated regression routines, and yields a p-value of 65\%. This indicates that we cannot reject these overidentification restrictions at all conventional levels. This provides some initial evidence in favor of the choice model $\mathcal{G} \subseteq \mathcal{G}^{sep}$. 

However, the equality restrictions \eqref{eq:fourps} are not the only observable implications of $\mathcal{G} \subset \mathcal{G}^{sep}$. In Appendix \ref{sec:lp}, I describe how all of the observable first-stage information can be aggregated into a system of linear equations $\mathcal{A}x = \beta$, where $\mathcal{A}$ is a known matrix defined from the $A^{[t]}$, $\beta$ is a vector of observed treatment choice probabilities, and $x$ is a vector of the (non-negative) unobserved occupancies $x_g = P(G_i=g)$ of each response type. Maintaining the weaker assumption of $\mathcal{G} \subseteq \mathcal{G}^{WARP}$, we can test whether $\mathcal{G}$ is furthermore a subset of $\mathcal{G}^{set}$ by computing a lower bound on the sum of the components of $x_g$ for $g \in \mathcal{G}^{WARP}-\mathcal{G}^{sep}$, subject to the constraints that $\mathcal{A}x = \beta$, each $x_g \ge 0$ and the $x_g$ sum to unity. In principle, this exercise could be implemented by strata to test $\mathcal{G} \subseteq \mathcal{G}^{sep}$ among the individuals within each. To increase statistical power given the small sample, I pool the data across all strata for this exercise. This is valid under the assumption that the response-type distribution is common across strata.\footnote{Note that this resrtiction does not require \textit{potential outcomes} to be uncorrelated with stratum.}

Ignoring sampling uncertainty in the observed treatment choice probabilities, the data suggest that $P(G_i \in \mathcal{G}^{WARP}-\mathcal{G}^{sep})$ is at least $6.3\%$ (and is no more than $80.8\%$). Since $6.3\% > 0$, this provides some evidence against the restriction $\mathcal{G} \subseteq \mathcal{G}^{sep}$.\footnote{Point estimates further yield $p_{oboth} \in [0,3\%]$, $p_{A+} \in [0,3\%]$, $p_{B+} \in [0,34\%]$, $p_{favor A} \in [3\%,6\%]$, $p_{favor B} \in [3\%,38\%]$.} However, accounting for uncertainty in $\beta$ suggests that this lower bound for $P(G_i \in \mathcal{G}^{WARP}-\mathcal{G}^{sep})$ is not statistically significant. \citet{fsst} provide a method for testing whether there exists componentwise non-negative solutions $x$ to systems of the form $\mathcal{A}x=\beta$ like the above, when $\beta$ is estimated from the data. This method yields a 95\% confidence interval of $[0, 0.83]$ for the share of offending response types. This confidence interval includes zero (up to machine-precision), indicating that we cannot reject that $\mathcal{G} \subseteq \mathcal{G}^{sep}$ within the weaker assumption $\mathcal{G} \subseteq \mathcal{G}^{WARP}$, even using the full observable information on treatment uptake. Appendix \ref{sec:lp} provides details on this procedure.\footnote{I implement the FSST method using the $\mathtt{R}$ package $\mathtt{lpinfer}$. This method does not involve any clustering and is designed for $i.i.d.$ data, so the confidence interval reported above may undercover the parameter $P(G_i \in \mathcal{G}^{WARP}-\mathcal{G}^{sep})$ if one considers uncertainty as arising from treatment assignment as well. Since the proportion of each cluster (in this case village) that is sampled is small (on average about two individuals), results for OLS suggest that the influence of clustering in treatment assignment may be minimal, even considering both uncertainty arising from clustered treatment assignment as well as sampling \citep{abadieetalclustering}. Similar results are provided using alternative methods for inference on linear systems introduced by \citet{romanoshaikh} and \citet{chorussell}.}\\ 

\noindent \textbf{Estimates of local average interaction among compliers:} 
The data from \citet{depression} follow 1,000 respondents over five survey waves. I use their main outcome variable, which is a standardized version of the PHQ-9 score for depression, with higher values indicating more severe depression. I focus on longer-run outcomes in the fourth and fifth waves, which occured between one and two years after treatment. In these longer waves, the authors estimate $\gamma_3-\gamma_1-\gamma_2$ from the ITT regression to be marginally significant at the 10\% level with a p-value of $.10$. Meanwhile, they find that the combination  (treatment ``C'') of pharmacotherapy (treatment ``A'') and livelihoods assistance (treatment ``B'') reduces depression symptoms even after the intervention that is significant at the 95\% level, while the effects of treatments A or B alone are insignificant (cf. their Table 2, panel B). However, these estimates come from an ITT regression that ignores actual treatment uptake, and may be attenuated or otherwise distorted when interpreted as effects of the treatments themselves rather than as effects of assignment.

\begin{table}[h!]
	\centering
	{
\def\sym#1{\ifmmode^{#1}\else\(^{#1}\)\fi}
\begin{tabular}{l*{4}{c}}
\hline\hline
                                        &\multicolumn{1}{c}{(1)}&\multicolumn{1}{c}{(2)}&\multicolumn{1}{c}{(3)}&\multicolumn{1}{c}{(4)}\\
                                        &\multicolumn{1}{c}{ITT}&\multicolumn{1}{c}{2SLS}&\multicolumn{1}{c}{Binary collections}&\multicolumn{1}{c}{GMM}\\
\hline
$\mathbbm{E}[Y(C)-Y(0)|c(G)=1]$         &   -0.277\sym{**} &   -0.676\sym{**} &   -0.489         &   -0.291\sym{*}  \\
                                        & (0.0846)         &  (0.224)         &  (0.253)         &  (0.115)         \\
$\mathbbm{E}[Y(A)-Y(0)|c(G)=1]$         &  -0.0468         &  -0.0946         &  -0.0699         &  -0.0922         \\
                                        & (0.0759)         &  (0.153)         &  (0.211)         &  (0.213)         \\
$\mathbbm{E}[Y(B)-Y(0)|c(G)=1]$         &  -0.0282         &  -0.0355         &    0.134         &    0.247         \\
                                        & (0.0786)         &  (0.100)         &  (0.243)         &  (0.167)         \\
\hline
Local avg. interaction effect           &   -0.202         &   -0.545         &   -0.553         &   -0.504         \\
H0: no interaction (p-val)          &    0.101         &   0.0814         &    0.101         &    0.146         \\
c(G)                                    &compliers/all         &all individuals         &compliers         &compliers         \\
p(c(G)=1)                               &        1         &        1         &       .4         &       .4         \\
Identifying assumption                  &perfect compliance         &     NSOG         &$\mathcal{G} \subseteq \mathcal{G}^{sep}$         &$\mathcal{G} \subseteq \mathcal{G}^{sep}$         \\
\hline
Sample size                             &     1650         &     1650         &     1650         &     1650         \\
\hline\hline
\multicolumn{5}{l}{\footnotesize Standard errors in parentheses}\\
\multicolumn{5}{l}{\footnotesize \sym{*} \(p<0.05\), \sym{**} \(p<0.01\), \sym{***} \(p<0.001\)}\\
\end{tabular}
}

	\caption{Treatment effects and interaction effect estimates, where $A$ is pharmacotherapty (``PC''), treatment $B$ is livelihoods assistance (``LA''), and treatment C is receiving both. Outcome variable is the PHQ-9 depression score, expressed in units of its sample standard deviation. ITT estimates could be interpreted as effects of receiving treatment under perfect compliance, though this is rejected by the data. 2SLS estimates assume no-selection-on-gains (NSOG). Estimators for local effects among compliers via binary collections are described in Appendix \ref{sec:covsempirical}, and the GMM refinement is described in Appendix \ref{sec:gmm}. All columns include strata controls and cluster standard errors by village.} \label{table:interaction}
\end{table}

Column (1) of Table \ref{table:interaction} implements this ITT regression of the outcome on instrument indicators (and strata fixed effects). Departing slightly from \citet{depression}, I focus on a minimal specification and do not control for baseline values of the outcome as they do. However, the findings are qualitatively the same and quantitatively similar. In line with \citet{depression} only the effect of treatment $C$ (pharmacotherapy and livelihoods assistance) is statistically significant. Column (2) uses data on treatment uptake and implements a 2SLS regression as described in Section \ref{sec:backgroundempiricalpractice}. Consistent with intuition given imperfect compliance, these treatment effects estimates are larger in magnitude and have the same pattern of significance. However, the main treatment effect estimates (besides the LAIE) invoke the strong assumption of no-selection-on-gains (NSOG) to be interpreted causally and as averaging over the same response types.\footnote{Thm. 2 of \citet{blackwell2017} shows how the various 2SLS coefficients average effects over different groups of response types even given $\mathcal{G} \subseteq \mathcal{G}^{sep}$, making them not comparable to one another without outcome restrictions like NSOG.}

By contrast, none of the estimates reported in Columns (3) and (4) require restrictions on outcomes to be causally interpreted and compared. Column (3) uses simple sample estimators of the expectations from Eq. \eqref{idresult} (extended for strata fixed effects) along with the binary combinations Eq. \eqref{eq:bcs} that isolate compliers. See Appendix \ref{sec:covsempirical} for details. Column (4) re-estimates Column (3) while further imposing the overidentification restrictions \eqref{eq:fourps} for the share of compliers, using a generalized method of moments (GMM) estimator (see Appendix \ref{sec:gmm} for details). This estimator penalizes the differing numerical estimates of $p=P(i \textit{ is complier})$ in each of the four terms that make up $LAIE$. While the GMM estimator does not end up reducing the standard error of $LAIE$ in this setting, it does restore statistical significance to $\mathbbm{E}[Y_i(C)-Y_i(0)|i \textit{ is complier}]$ at the 95\% level.

The three estimates of $LAIE$ in columns (3)-(5) are valid under the same assumption that $\mathcal{G} \subseteq \mathcal{G}^{sep}$, and suggest that pharmacotherapy and livelihoods assistance are complementary: they have an interaction effect of about half of a standard deviation of PHQ-9 among compliers. 2SLS provides the most precise estimate of this parameter, which is signficant at the 10\% level. The relatively large positive magnitude of the effect of livelihoods assistance (treatment B) in columns (3) and (4) raises the question of whether this intervention may in fact exacerbate depression symptoms among compliers, when it is not accompanied by pharmacotherapy (treatment A). This economically meaningful magnitude is not evident in the ITT estimates from \citet{depression} that do not adjust for non-compliance. However, the estimate is not quite significant at the 10\% level even with the GMM estimator (t-statistic $2.47/1.67=1.48$), and thus should be interpreted with caution. Otherwise, the  estimates reported in Table \ref{table:interaction} confirm the qualitative findings of \citet{depression}, while offering quantitative treatment effects that account for the partial compliance. 
	
	\section{Conclusion}
	This paper has formalized the notion of ``outcome-nonrestrictive'' identification in IV models, and shown that it is equivalent (with discrete instruments) to the existence of linear combinations of counterfactual treatment indicators that add up to zero or one for all response types in the assumed selection model. A selection model only allows for treatment effects to be identified in an outcome-nonrestrictive way when a particular matrix that summarizes the selection behavior allowed by the model with respect by two treatment values has a non-trivial null-space that intersects the unit cube in the space of types allowed by the model. This insight yields a systematic approach to enumerating all selection models that afford identification of treatment effects in a manner that does not restrict outcomes. The search delivers a multiplicity of new identification results, despite its computational complexity scaling rapidly with the size of support of the instruments and treatment. Future work could leverage the algorithms proposed in this paper to construct a searchable database of identification results for still more complex settings.
	
	\printbibliography
	\nocite{depressiondata}
	
	\begin{appendices}

		\section{Proofs} \label{proofsec}

\subsection{Proof of Proposition \ref{prop:ifTEthenmeans}}
To ease notation, write $\Delta_c^{t,t'}$ as $\Delta$, $\mu_c^t$ as $\mu(t)$, and $\mu_c^{t'}$ as $\mu(t')$, with $c$ fixed. It is apparent that if $\mu(t')$ and $\mu(t)$ are outcome-nonrestrictive identified, then $\Delta = \mu(t')-\mu(t)$ is too.

Now let us consider the other direction. Suppose that $\mu(t)$ is not outcome-nonrestrictive identified (an analogous argument holds if $\mu(t')$ is not outcome-nonrestrictive identified). Then for some $\mathcal{P}_{obs} \in \mathscr{P}_{obs,c}(\mathcal{G})$, the set $\{\theta_{\mu(t)}(\mathcal{P}): \mathcal{P} \in M \textrm{ and } \phi(\mathcal{P}) = \mathcal{P}_{obs}\}$ has at least two elements, where $M:=\{\mathcal{P}_{latent} \times \mathcal{P}_{Z}: \mathcal{P}_{latent} \in \mathscr{P}_{latent,c}(\mathcal{G}), \mathcal{P}_{Z} \in \mathscr{P}_{Z}\}$ and we let $\theta_{\mu(t)}(\cdot)$ be the map that yields the value of $\mu(t)$ as a function of $\mathcal{P}$.\footnote{Note that the set $M$ will vary with $c$, but since we are considering a fixed $c$ this is left implicit to ease notation.} Accordingly, let $\mathcal{P}_1,\mathcal{P}_2 \in M$ where $\theta_{\mu(t)}(\mathcal{P}_1)=a$ and $\theta_{\mu(t)}(\mathcal{P}_2)=b$ where $a \ne b$ despite $\phi(\mathcal{P}_1)=\phi(\mathcal{P}_2)=\mathcal{P}_{obs}$. 

Let us decompose $\mathcal{P}_1$ as $\left( \left\{\mathcal{P}_{Y(s)|G=g}\right\}_{\substack{s \in \mathcal{T}\\ g \in \mathcal{G}}}, \mathcal{P}_G, \mathcal{P}_Z\right)$, which is possible because $\mathcal{P}_1$ satisfies independence Eq. \eqref{eq:independence} between the instruments and the latent variables. Let $\mathcal{P}(0)$ denote a degenerate distribution at zero in $\mathbb{R}$. Now consider the distribution $\tilde{\mathcal{P}}_1=\left( \{\mathcal{P}(0)\}_{g \in \mathcal{G}}, \left\{\mathcal{P}_{Y(t)|G=g}\right\}_{\substack{s \in \mathcal{T}, s \ne t'\\ g \in \mathcal{G}}}, \mathcal{P}_G, \mathcal{P}_Z\right)$. That is, $Y_i(t')=0$ with probability one according to $\tilde{\mathcal{P}}_1$, but the joint distribution of $Z_i$, $G_i$ and all of the other potential outcomes $s \ne t'$ are the same under $\tilde{\mathcal{P}}_1$ as they are under $\mathcal{P}_1$. Note that given this construction: $\theta_{\mu(t)}(\tilde{\mathcal{P}}_1)=\theta_{\mu(t)}(\mathcal{P}_1)=a$, since $\mu(t)$ only depends on the distributions $\mathcal{P}_{Y(t)|G=g}$ and $\mathcal{P}_G$, and $t \ne t'$. Note as well that from $\mathcal{P}_1 \in M$ we know that $\mathcal{P}_{latent}(\mathcal{P}_1) \in \mathscr{P}_{latent,c}(\mathcal{G})$. Since $\mathcal{P}_G$ has not been changed in defining $\tilde{\mathcal{P}}_1$ from $\mathcal{P}_1$, and a degenerate random variable at zero has a finite expectation, it follows that $\mathcal{P}_{latent}(\tilde{\mathcal{P}}_1) \in \mathscr{P}_{latent,c}(\mathcal{G})$ as well. Since $\mathcal{P}_Z$ has also not been changed, we further have that $\tilde{\mathcal{P}}_1 \in M$. 

Define $\tilde{\mathcal{P}}_2$ analogously from $\mathcal{P}_2$, and observe that similarly $\theta_{\mu(t)}(\tilde{\mathcal{P}}_2)=\theta_{\mu(t)}(\mathcal{P}_2)=b$ and again that $\tilde{\mathcal{P}}_2 \in M$.

Observe furthermore that $\theta_\Delta(\tilde{\mathcal{P}}_1)=\theta_{\mu(t')}(\tilde{\mathcal{P}}_1)-\theta_{\mu(t)}(\tilde{\mathcal{P}}_1)=0-a=-a$, and similarly $\theta_\Delta(\tilde{\mathcal{P}}_2)=\theta_{\mu(t')}(\tilde{\mathcal{P}}_2)-\theta_{\mu(t)}(\tilde{\mathcal{P}}_2)=0-b=-b$. Thus since since $b \ne a$:
\begin{equation} \label{eq:noteq}
	\theta_\Delta(\tilde{\mathcal{P}}_1) \ne \theta_\Delta(\tilde{\mathcal{P}}_2)
\end{equation}
I now show that this contradicts $\Delta$ being outcome-nonrestrictive identified.

To see this, decompose $\mathcal{P}_{obs}$ as $\left( \left\{\mathcal{P}_{Y|T=s,Z=z}\right\}_{\substack{s \in \mathcal{T}\\ z \in \mathcal{Z}}}, \left\{\mathcal{P}_{T|Z=z}\right\}_{z \in \mathcal{Z}}, \mathcal{P}_Z\right)$ and define $\tilde{\mathcal{P}}_{obs}=\left(\{\mathcal{P}(0)\}_{z \in \mathcal{Z}},\left\{\mathcal{P}_{Y|T=s,Z=z}\right\}_{\substack{s \in \mathcal{T}, s \ne t'\\ z \in \mathcal{Z}}}, \left\{\mathcal{P}_{T|Z=z}\right\}_{z \in \mathcal{Z}}, \mathcal{P}_Z\right)$ where the $\{\mathcal{P}(0)\}_{z \in \mathcal{Z}}$ indicate that $P(Y_i=0|T_i=t',Z_i=z)=1$ for all $z \in \mathcal{Z}$ according to $\tilde{\mathcal{P}}_{obs}$. That is, the marginal distribution $\mathcal{P}_{TZ}$ and the conditional distributions $\mathcal{P}_{Y|T=s,Z=z}$ for all $s \ne t'$ and $z$ are unchanged from $\mathcal{P}_{obs}$, but $Y_i=0$ with probability one conditional on $T_i=t'$.

The next step is to observe that $\phi(\tilde{\mathcal{P}}_1)=\tilde{\mathcal{P}}_{obs}$ and $\phi(\tilde{\mathcal{P}}_2)=\tilde{\mathcal{P}}_{obs}$. To see this, note that $Y_i(t')=0$ with probability one implies that $Y_i=0$ with probability one conditional on $T_i=t'$ (provided that $P(T_i=t)>0$). Now since $\mathcal{P}_1$ and $\tilde{\mathcal{P}}_1$ only differ in $\mathcal{P}_{Y(t')|G=g}$ (leaving $\mathcal{P}_{TZ}$ and and $\mathcal{P}_{Y|T_i=s,Z=z}$ for all $s \ne t'$ and $z$ unchanged), it follows from $\phi(\mathcal{P}_1)=\mathcal{P}_{obs}$ that $\phi(\tilde{\mathcal{P}}_1)=\tilde{\mathcal{P}}_{obs}$, and analogously for $\tilde{\mathcal{P}}_2$. This further implies that $\tilde{\mathcal{P}}_{obs} \in \mathscr{P}_{obs,c}(\mathcal{G})$. 

Since $\Delta$ is outcome-nonrestrictive identified and $\tilde{\mathcal{P}}_{obs} \in \mathscr{P}_{obs,c}(\mathcal{G})$, the set $\{\theta_\Delta(\mathcal{P}): \mathcal{P} \in M \textrm{ and } \phi(\mathcal{P})=\tilde{\mathcal{P}}_{obs}\}$ must be a singleton. Given that $\phi(\tilde{\mathcal{P}}_1)=\phi(\tilde{\mathcal{P}}_2)=\tilde{\mathcal{P}}_{obs}$ and $\tilde{\mathcal{P}}_1, \tilde{\mathcal{P}}_2 \in M$ we must then have $\theta_{\Delta}(\tilde{\mathcal{P}}_1)=\theta_{\Delta}(\tilde{\mathcal{P}}_2)$. This yields a contradiction with \eqref{eq:noteq}. 

We can generalize Proposition \ref{prop:ifTEthenmeans} as follows. For any vector of coefficients $\rho_t$ for each $t \in \mathcal{T}$, define $\theta^\rho_c:=\sum_t \rho_t \cdot \mu^t_c $. Clearly $\mu^{t}_c$ is a special case of $\theta_\rho^c$ in which $\rho_t$ is equal to one for a single treatment, and zero for all others. Similarly, $\Delta^{t,t'}_c$ is a case of $\theta^\rho_c$ in which $\rho_{t'} = 1$, $\rho_t = -1$, and all other components of $\rho$ are equal to zero. In Section \ref{sec:empirical}, the local average complimentarity parameter $\lambda_c = \mu_{c}^C-\mu_{c}^A-\mu_{c}^B+\mu_{c}^0$ is an example of $\theta_c^\rho$ where $\rho_C=\rho_0 = 1$ and $\rho_A=\rho_B = -1$.

In general, let $\psi(\rho) \subseteq \mathcal{T}$ be the set of treatments for which $\rho_t \ne 0$. Clearly $\theta^\rho_c$ is outcome-nonrestrictive identified if $\mu_c^t$ is for each $t \in \psi(\rho)$. The above argument articulated for treatment effects extends immediately to show that $\theta^\rho_c$ is also outcome-nonrestrictive identified \textit{only }if $\mu_c^t$ is for each $t \in \psi(\rho)$. To see this, we again begin with a value $t \in \psi(\rho)$ such that $\mu(t)$ is not outcome-nonrestrictive identified, i.e. $\theta_{\mu(t)}(\mathcal{P}_1)=a$ and $\theta_{\mu(t)}(\mathcal{P}_2)=b$ with $a \ne b$, where $\mathcal{P}_1$ and $\mathcal{P}_2$ are the corresponding latent variable distributions in $M$ such that $\phi(\mathcal{P}_1)=\phi(\mathcal{P}_2)=\mathcal{P}_{obs}$. Let $d_1 = \sum_{s\ne t} \rho_s \cdot \theta_{\mu(s)}(\mathcal{P}_1)$ and $d_2 = \sum_{s\ne t} \rho_s \cdot \theta_{\mu(s)}(\mathcal{P}_2)$ such that $\theta^\rho_c = \rho_t \cdot a + d_1$ under $\mathcal{P}_1$ and $\theta^\rho_c = \rho_t \cdot b + d_2$ under $\mathcal{P}_2$.

Suppose that $\theta^\rho_c$ is outcome-nonrestrictive identified. In this case, we must have that $d_2 = d_1+ \rho_t \cdot (a-b)$. Now consider the distributions $\tilde{P}_1$, $\tilde{P}_2$ and $\tilde{P}_{obs}$ defined above, where we take $t' \ne t$ to be any other treatment in $\psi(\rho)$ other than $t$. We have already seen above that $\tilde{\mathcal{P}}_{obs} \in \mathscr{P}_{obs,c}(\mathcal{G})$, $\tilde{\mathcal{P}}_1,\tilde{\mathcal{P}}_2 \in M$ and  $\phi(\tilde{\mathcal{P}}_1)=\phi(\tilde{\mathcal{P}}_2)= \tilde{\mathcal{P}}_{obs}$. Thus we must have that $\theta^\rho_c$ is the same under both $\tilde{\mathcal{P}}_1$ and $\tilde{\mathcal{P}}_2$. Instead, we have that under $\tilde{\mathcal{P}}_1$, $\theta^\rho_c$ is equal to $\rho_t \cdot a + d_1 - \rho_{t'} \cdot \theta_{\mu(t')}(\mathcal{P}_1)$, and under $\tilde{\mathcal{P}}_1$, $\theta^\rho_c$ is equal to
$$\rho_t \cdot b + d_2 - \rho_{t'} \cdot \theta_{\mu(t')}(\mathcal{P}_2) = \rho_t \cdot b + d_1+ \rho_t \cdot (a-b) - \rho_{t'} \cdot \theta_{\mu(t')}(\mathcal{P}_2) = \{\rho_t \cdot a + d_1\}- \rho_{t'} \cdot \theta_{\mu(t')}(\mathcal{P}_2)$$
Thus we must have that $\theta_{\mu(t')}(\mathcal{P}_2) = \theta_{\mu(t')}(\mathcal{P}_1)$. This argument can be repeated for every $t' \in \psi(\rho)$, $t' \ne t$, and we then have that $d_1=d_2$. This in turn implies that $\rho_t \cdot (a-b) = 0$, which contradicts $a \ne b$ with $\rho_t \ne 0$. We have thus arrived at a contradiction.

\subsection{Proof of Theorem \ref{thm:necc}}
\subsubsection*{Setup and notation}

Let $\mathcal{Y} \subseteq \mathbbm{R}$ be the support of $Y$. For any $y \in \mathcal{Y}, z \in \mathcal{Z}$ and $t \in \mathcal{T}$, define $F_{(YD)|Z=z}(y,t) := \mathbbm{E}[\mathbbm{1}(Y_i \le y)\mathbbm{1}(T_i=t)|Z_i=z]$. This function acts like a CDF for $Y_i$ and a probability mass function for $T_i$, conditional on $Z_i=z$. We begin with the observation that knowing the distribution $\mathcal{P}_{obs}$ of $(Y_i,T_i,Z_i)$ is equivalent to knowing the value of $F_{(YD)|Z=z}(y,t)$ for all $(y,t,z)$ along with the observable distribution of the instruments $\mathcal{P}_{Z}$.

By the law of iterated expectations over $G_i$ and using independence \eqref{eq:independence}: \begin{align}&F_{(YD)|Z=z}(y,t) = \mathbbm{E}\left\{\mathbbm{E}[\mathbbm{1}(Y_i(t) \le y)\mathbbm{1}(T_i(z)=t)|Z_i=z,G_i]\right\} \nonumber \\&= \sum_{g: A^{[t]}_{zg}=1} P(G_i=g)\cdot\mathbbm{E}[\mathbbm{1}(Y_i(t) \le y)|G_i=g] \nonumber\\
	&= \sum_{g: A^{[t]}_{zg}=1} P(G_i=g)\cdot F_{Y(t)|G=g}(y): = \sum_{g \in \mathcal{G}} A^{[t]}_{zg} \cdot P(G_i=g) \cdot F_{Y(t)|G=g}(y) \label{Aeq}  \end{align}

\noindent I use the following Lemma to assume that $A^{[t]}$ has full row rank, without loss of generality:
\begin{lemma} \label{lemma:fullrowrank}
	If $\mu_g^t$ is outcome-nonrestrictive identified given instrument support $\mathcal{Z}$, it remains outcome-nonrestrictive identified using data from $Z_i \in \mathcal{Z}_0$, where $\mathcal{Z}_0 \subseteq \mathcal{Z}$ is a subset of instrument values for which the rows of $A^{[t]}$ for $z \in \mathcal{Z}_0$ are linearly independent of one another.
\end{lemma}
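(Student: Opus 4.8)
The plan is to show that the instrument values in $\mathcal{Z}\setminus\mathcal{Z}_0$ carry no information relevant to $\mu^t_c$, so that retaining only $\mathcal{Z}_0$ cannot destroy identification. I would take $\mathcal{Z}_0$ to be \emph{maximal} with the stated property (this is the case the lemma is applied with), so that $rs(A^{[t]})$ is spanned by the rows indexed by $\mathcal{Z}_0$ and each row indexed by $z\in\mathcal{Z}\setminus\mathcal{Z}_0$ equals a fixed combination $\sum_{z'\in\mathcal{Z}_0}\lambda^z_{z'}\cdot(\text{row }z')$. Two ingredients then drive the argument. First, substituting these coefficients into \eqref{Aeq} gives $F_{(YD)|Z}(y,t\mid z)=\sum_{z'\in\mathcal{Z}_0}\lambda^z_{z'}\,F_{(YD)|Z}(y,t\mid z')$ for every $y$ and every $z\in\mathcal{Z}\setminus\mathcal{Z}_0$, and this holds for \emph{every} admissible latent law: the ``treatment-$t$ slice'' $\{F_{(YD)|Z}(\cdot,t\mid z)\}_{z\in\mathcal{Z}}$ of the data is a deterministic function of its restriction to $\mathcal{Z}_0$. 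Second, exactly as in the proof of Proposition \ref{prop:ifTEthenmeans}, replacing in any latent law the conditional laws of $Y_i(t')$, $t'\ne t$, by a point mass at $0$ leaves both $\mu^t_c$ and the treatment-$t$ slice unchanged while collapsing every other slice to $\mathbbm{1}(y\ge 0)\cdot P(T_i=t'\mid Z_i=z)$; hence $\mu^t_c$, when outcome-agnostic identified over $\mathcal{Z}$, is a function of the treatment-$t$ slice together with $\mathcal{P}_{TZ}$ and $\mathcal{P}_Z$ alone.

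Combining these, I would take two admissible laws $\mathcal{P},\mathcal{P}'$ whose observable distributions agree on $\{Z_i\in\mathcal{Z}_0\}$ (and with $P(c(G_i)=1)>0$, so that $\mu^t_c$ is defined), extend their instrument laws to a common distribution with support $\mathcal{Z}$, and perform the ``zero-out'' substitution in both. Their treatment-$t$ slices then coincide on $\mathcal{Z}_0$ by hypothesis, hence on all of $\mathcal{Z}$ by the first ingredient, and every other slice has been forced equal to $\mathbbm{1}(y\ge 0)\,P(T_i=t'\mid Z_i=z)$ in both. The only piece that does not automatically match is the treatment-$t'$ part of $\mathcal{P}_{TZ}$ for $t'\ne t$ at the discarded instrument values: agreement on $\{Z_i\in\mathcal{Z}_0\}$ pins down $P(T_i=t'\mid Z_i=z)$ only for $z\in\mathcal{Z}_0$, and the rows of $A^{[t']}$ over $\mathcal{Z}\setminus\mathcal{Z}_0$ need not lie in the span of those over $\mathcal{Z}_0$, so $\mathcal{P}$ and $\mathcal{P}'$ may have response-type distributions that differ on $\mathcal{Z}\setminus\mathcal{Z}_0$. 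Reconciling this is the step I expect to be the main obstacle.

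I would resolve it via the auxiliary fact that, if $\mu^t_c$ is outcome-agnostic identified over $\mathcal{Z}$, then $c$ must assign the same value to any two response types $g_1,g_2$ that induce identical columns of $A^{[t]}$: if $c_{g_1}\ne c_{g_2}$ for such a pair, then starting from any latent law with positive mass on both one can re-allocate probability between the conditional laws $F_{Y(t)|G=g_1}$ and $F_{Y(t)|G=g_2}$ while holding their $P(G=\cdot)$-weighted sum fixed — since columns $g_1,g_2$ of $A^{[t]}$ coincide this leaves every slice of $\mathcal{P}_{obs}$ and $\mathcal{P}_{TZ}$ over all of $\mathcal{Z}$ untouched, yet it moves $\mu^t_c$, contradicting identification. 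Because $\mathcal{Z}_0$ is maximal, two types inducing identical $A^{[t]}$-columns over $\mathcal{Z}_0$ induce identical $A^{[t]}$-columns over all of $\mathcal{Z}$; merging response types into these $A^{[t]}$-column classes leaves $c$ and the treatment-$t$ slice unchanged, so $\mu^t_c$ depends on the latent law only through class-level quantities already controlled by the $\mathcal{Z}_0$-data, which is what lets the comparison of the previous paragraph conclude $\mu^t_c(\mathcal{P})=\mu^t_c(\mathcal{P}')$ and prove the lemma. It is worth noting that in the sole place the lemma is used — the proof of Theorem \ref{thm:necc} — the competing latent laws differ only in $\{F_{Y(t)|G=g}\}$, with $\mathcal{P}_G$ and $\{F_{Y(t')|G=g}\}_{t'\ne t}$ held fixed, so there the discarded data is automatically consistent and this subtlety does not arise.
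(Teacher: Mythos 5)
Your core argument is the same as the paper's: because each row of $A^{[t]}$ indexed by $z\in\mathcal{Z}\setminus\mathcal{Z}_0$ is a linear combination of the rows indexed by $\mathcal{Z}_0$, Eq.~\eqref{Aeq} forces $F_{(YD)|Z}(\cdot,t\mid z)$ at the discarded instrument values to be the same linear combination of the retained ones, so the treatment-$t$ slice of the data over $\mathcal{Z}\setminus\mathcal{Z}_0$ is redundant. The paper's proof essentially stops there. You go further, and correctly so: you combine this with a zero-out argument (in the spirit of the proof of Proposition~\ref{prop:ifTEthenmeans}) to show $\mu^t_c$ can depend only on the treatment-$t$ slice and $\mathcal{P}_{TZ}$, and you then isolate the one piece of discarded data that is \emph{not} redundant, namely $P(T_i=t'\mid Z_i=z)$ for $t'\ne t$ and $z\in\mathcal{Z}\setminus\mathcal{Z}_0$. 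The paper's own proof is silent on this point, so identifying it is a genuine improvement in care.

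The gap is in your resolution of that last point. Your auxiliary fact (that $c$ must be constant on $A^{[t]}$-column classes) is correct and cleanly proved by the mass-reallocation argument. But the concluding sentence --- that merging types into column classes makes $\mu^t_c$ depend ``only through class-level quantities already controlled by the $\mathcal{Z}_0$-data'' --- does not hold: the class aggregates $\sum_{g\in[j]}P(G_i=g)$ and $\sum_{g\in[j]}P(G_i=g)F_{Y(t)|G}(\cdot\mid g)$ satisfy only $|\mathcal{Z}_0|$ linear restrictions from the retained data, and the number of distinct column classes can exceed $|\mathcal{Z}_0|$, so two laws agreeing on the $\mathcal{Z}_0$-data can have different class aggregates and hence, a priori, different $\mu^t_c$. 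To finish your comparison of $\mathcal{P}$ and $\mathcal{P}'$ you would need to construct a third admissible law matching $\mathcal{P}$'s full observable distribution while reproducing $\mathcal{P}'$'s value of $\mu^t_c$, which requires redistributing response-type probability \emph{within} column classes so as to hit the discarded marginals $\sum_g A^{[t']}_{zg}P(G_i=g)$ for all $t'$ and $z\in\mathcal{Z}\setminus\mathcal{Z}_0$; the feasibility of that redistribution (respecting non-negativity) is exactly what is missing, and it is not obvious. Your closing observation is the practically relevant one: in the lemma's sole application inside the proof of Theorem~\ref{thm:necc}, the competing latent laws share $\mathcal{P}_G$ and the $t'\ne t$ outcome distributions, so the discarded data automatically coincides and the first two of your ingredients already suffice --- which is also, implicitly, the only sense in which the paper's own one-paragraph proof is adequate.
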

\noindent A special case of Lemma \ref{lemma:fullrowrank} is an observation by \citet{Heckman2018} that one can remove any rows of $A^{[t]}$ that is an exact copy of another row (i.e. there are two instrument values for which all response types behave the same regarding whether they choose treatment $t$ or not), and there is hence a direct redundancy over instrument values.

\subsubsection*{Outcome-nonrestrictive identification}
Now define $\mathbf{F}_{(YD)|Z}(y)$ to be a $|\mathcal{T}|\cdot|\mathcal{Z}| \times 1$ vector of $F_{(YD)|Z=z}(y,t)$ over $z$ and $t$ and $\mathbf{G}^*(y)$ to be the unknown $|\mathcal{T}|\cdot |\mathcal{G}|$-component vector of $P(G_i=g)\cdot F_{Y(t)|G=g}(y)$ over $g$ and $t$, for a fixed $y$. Now let $\mathbf{G}^*$ represent the whole vector-valued function $\mathbf{G}^*: \mathcal{Y} \rightarrow \mathbbm{R}^{|\mathcal{T}|\cdot|\mathcal{G}|}$, and define $\mathbf{F}_{(YD)|Z}$ similarly as the function $\mathcal{Y} \rightarrow \mathbbm{R}^{|\mathcal{T}|\cdot|\mathcal{Z}|}$ yielding the vector $\mathbf{F}_{(YD)|Z}(y) $. Note that $\mathcal{P}_Z$ and $\mathbf{F}_{(YD)|Z}$ encode the entire distribution $\mathcal{P}_{obs}$ of observables $(Y,T,Z)$ while $\mathcal{P}_Z$ and $\mathbf{G}^*$ encode the entire distribution $\mathcal{P}$ of model primitives $(\tilde{Y},G,Z)$.

The relationship between the two can be characterized by writing Eq. (\ref{Aeq}) as: 
\begin{equation} \label{Aeq2} 
	\mathbf{F}_{(YD)|Z} = \mathcal{A} \circ \mathbf{G}^*
\end{equation}
where $\mathcal{A}$ is the linear map of functions  $\mathcal{Y} \rightarrow \mathbbm{R}^{|\mathcal{T}|\cdot |\mathcal{G}|}$ to functions $\mathcal{Y} \rightarrow\mathbbm{R}^{|\mathcal{T}|\cdot|\mathcal{Z}|}$ defined by:
$$ \left[\mathcal{A} \circ \bm{\mu}(y)\right]_{tz} = \sum_{g} A^{[t]}_{z,g} \cdot \bm{\mu}(y)_{t g} $$
holding for each $y$, for any vector-valued function $\bm{\mu}: \mathcal{Y} \rightarrow\mathbbm{R}^{|\mathcal{T}|\cdot|\mathcal{G}|}$.

Let $\theta=\mathbbm{E}[Y_i(t)|c(G_i)=1]$ be the parameter of interest. Note that similar to \eqref{Aeq2}, $\theta$ can also be written as a linear map applied to the function $\mathbf{G}^*$. In particular $\theta = \Theta \circ \mathbf{G}^*$, where for any function $\bm{\mu}: \mathcal{Y}$ to $\mathbbm{R}^{|\mathcal{T}|\cdot|\mathcal{G}|}$, $\Theta \circ \bm{\mu}$ is the scalar:
\begin{equation} \label{thetaeq} \sum_{g \in \mathcal{G}} \frac{c_g}{P(c(G_i)=1)} \cdot \int_\mathcal{Y} y\cdot  d\bm{\mu}(y)_{t,g}
\end{equation}
The set of such $\bm{\mu}$ that recover the distribution of observables can be written as: $$\mathcal{S}:= \{\bm{\mu}: \mathcal{A} \circ \bm{\mu} = \mathbf{F}_{(YD)|Z}\}$$
However, some such candidate values $\bm{\mu} \in \mathcal{S}$ for $\mathbf{G}^*$ may correspond to $F_{Y(t)|G=g}(\cdot)$ that do not represent valid CDFs. Accordingly, let us define
$$\mathcal{R}:= \{\bm{\mu}: [\bm{\mu}(y)]_{tg}/P(G_i=g) \textrm{ is a proper CDF  for each }t \in \mathcal{T} \textrm{ and }g \in \mathcal{G} \textrm{ s.t. } P(G_i=g)>0\}$$
The remainder of this section establishes that for $\theta$ to be outcome-nonrestrictive identified, the set $\mathcal{S} \cap \mathcal{R}$ must map to a singleton under $\Theta$.

Note that the sets $\mathcal{R}$ and $\mathcal{S}$ as well as the map $\Theta$ depend on the distribution $\mathcal{P}_{latent}$ (through $\mathbf{F}_{(YD)|Z}$ for $\mathcal{S}$ and through the $P(G_i=g)$ for $\mathcal{R}$ and $\Theta$).\footnote{Note that the map $\Theta$ depends on $t$ and the vector $c$ as well, also left implicit for ease of exposition.} Let us denote this dependence by $\mathcal{S}(\mathcal{P}_{latent})$, $\mathcal{R}(\mathcal{P}_{latent})$ and $\Theta(\mathcal{P}_{latent})$, though I will later leave this dependence implicit to ease notation. 

Definition \ref{def:oai} of outcome-nonrestrictive identification, translated into this notation, says that \begin{equation} \label{eq:oaidthm2}
	\left\{\Theta(\mathcal{P}_{latent}) \circ \bm{\mu}: \bm{\mu} \in \mathcal{R}(\mathcal{P}_{latent}) \textrm{ and } \bm{\mu} \in \mathcal{S}(\mathcal{P}_{latent})\right\} \textrm{ is a singleton } \forall \mathcal{P}_{latent} \in \mathscr{P}_{latent,c}(\mathcal{G})
\end{equation}
The following regularity condition will prove to be useful later in the proof:
\begin{condition*}[REG] \label{ass:reg}
	Fix a $t \in \mathcal{T}$. For some $g^* \in \mathcal{G}$, there exists a $\underline{L}>0$ and $\bar{L}< \infty$ such that for any $g' \in \mathcal{G}$ and $y'>y$:
	$$\underline{L} \le\frac{F_{Y(t)|G=g'}(y')-F_{Y(t)|G=g'}(y)}{F_{Y(t)|G=g^*}(y')-F_{Y(t)|G=g^*}(y)} \le \bar{L}$$
\end{condition*}
\noindent Note that whether or not Condition REG holds is a property of $\mathcal{P}_{latent}$. A sufficient condition is that $Y$ is discrete and finite and the support of $Y(t)|G=g$ is the same for all $g$. Another sufficient condition is that i) $Y$ is continuously distributed with the support of the density $f_{Y(t)|G=g}(y)$ the same for all $g$ and $t$; ii) the density on this set $\mathcal{Y}$ is bounded from below by $\underline{M}>0$ for all $g$, and iii) similarly $\sup_{y \in \mathcal{Y}} f_{Y(t)|G=g}(y)\le \bar{M}$ for some $\bar{M} < \infty$, for all $g$.\footnote{In the discrete case, let $\underline{L} = \min_{y \in \mathcal{Y},g \in \mathcal{G}} P(Y(t)=y|G=g) P(Y(t)=y|G=g^*)$ and $\bar{L} = 1/\min_{y \in \mathcal{Y}} P(Y(t)=y|G=g^*)$. In the continuous case let $\bar{L} = \frac{\max{g \in \mathcal{G}} \sup_{y \in \mathcal{Y}} f_{Y(t)|G=g}(y)}{\min{g \in \mathcal{G}} \inf_{y \in \mathcal{Y}} f_{Y(t)|G=g}(y)} \le \bar{M}/\underline{M}$ and $\underline{L} = \frac{\min{g \in \mathcal{G}} \inf_{y \in \mathcal{Y}} f_{Y(t)|G=g}(y)}{\max{g \in \mathcal{G}} \sup_{y \in \mathcal{Y}} f_{Y(t)|G=g}(y)} \ge \underline{M}/\bar{M}$.} A mixture of distributions satisfying the above will also satisfy REG.

Let $\bar{\mathscr{P}}_{latent,c}(\mathcal{G})$ denote the set of distributions $\mathcal{P}_{latent} \in \mathscr{P}_{latent,c}(\mathcal{G})$ that satisfy Condition REG. $\bar{\mathscr{P}}_{latent,c}(\mathcal{G})$ is never empty (given $\mathcal{G} \neq \emptyset$), since we have seen above that for any $|\mathcal{G}|>0$ there are always distributions that satisfy REG (with examples for each of discrete, continuous or mixed $Y$). Further, $\mathscr{P}_{latent,c}(\mathcal{G})$ only limits the support of $G$ and places no constraint on the distribution of $\tilde{Y}|G$. Note from \eqref{eq:oaidthm2} that if $\theta$ is outcome-nonrestrictive identified, $\left\{\Theta \cdot \bm{\mu}\right\}_{\bm{\mu} \in (\mathcal{S}(\mathcal{P}_{latent}) \cap \mathcal{R}(\mathcal{P}_{latent})}$ must be a singleton for all $\mathcal{P}_{latent}$ such that $supp\{\mathcal{P}_{G}(\mathcal{P}_{latent})\} \subseteq \mathcal{G}$, including any $\mathcal{P}_{latent} \in \bar{\mathscr{P}}_{latent,c}(\mathcal{G})$. 

The remainder of the proof of Theorem \ref{thm:necc} shows that if $c \notin rs(A^{[t]})$, it is always possible to find $\mathcal{P}_{latent} \in \bar{\mathscr{P}}_{latent,c}(\mathcal{G})$ such that $\left\{\Theta(\mathcal{P}_{latent}) \cdot \bm{\mu}\right\}_{\bm{\mu} \in (\mathcal{S}(\mathcal{P}_{latent}) \cap \mathcal{R}(\mathcal{P}_{latent})}$ is not in fact a singleton.

\subsubsection*{A candidate for $\mathbf{G}^*$ that recovers observables}
To see this, we will explicitly construct a functional $\mathbf{G}$ of $\mathcal{P}_{latent}$, that generally differs from $\mathbf{G^*}$ and lets us define an ``alternative'' to $\mathcal{P}_{latent}$ but still recovers observables.

Consider the vector-valued function $\mathbf{G}$, where the $t,g$ component of $\mathbf{G}(y)$ is:
$$\left[\mathbf{G}(y)\right]_{t,g}:= \begin{cases}
	P(G_i=g)\cdot F_{Y(t)|G}(y|g) & \textrm{ if } \max_{z \in \mathcal{Z}} \mathbbm{1}(T_g(z)=t)=0\\
	\sum_z [(A^{[t]})^+]_{g,z} ]\cdot F_{(YD)|Z}(y,t|z) & \textrm{ if }  \max_{z \in \mathcal{Z}} \mathbbm{1}(T_g(z)=t)=1
\end{cases} $$
and $(A^{[t]})^+$ indicates the Moore-Penrose pseudoinverse of the matrix $A^{[t]}$.

The reason for separating out the two cases in the definition of $\mathbf{G}$ is that if there exists a group $g$ that acts as a ``never-taker'' with respect to treatment $t$ such that $\max_{z \in \mathcal{Z}} \mathbbm{1}(T_g(z)=t)=0$, then this corresponds to a column of all zeros in $A^{[t]}$. A property of the Moore-Penrose inverse is that if column $g$ of $A^{[t]}$ is all zeros, then the corresponding row $g$ of $(A^{[t]})^+$ is also all zeros (see e.g. \citealt{mppartitioned}) which would leave $\left[\mathbf{G}(y)\right]_{t,g}=0$ for all $y$ if we did not separate out this case. This would make it impossible for $\mathbf{G}$ to represent a possible candidate for $\mathbf{G}^*$ (i.e. $\mathbf{G} \in \mathcal{R}$). The above construction avoids this problem by simply replacing such problematic combinations of $(g,t)$ by using the actual $[\mathbf{G}^*(y)]_{t,g}$ (which are unknown). Note that if the first case holds for \textit{all} $g \in \mathcal{G}$, then the matrix $A^{[t]}$ is simply the zero matrix, and outcome-nonrestrictive identification cannot hold, by Lemma \ref{lemma:fullrowrank}. Thus, we can continue under the assumption that the second case holds for at least some $g \in \mathcal{G}$. 

Let use see now that $\mathbf{G}$ ``recovers observables'', by which I mean that $\mathcal{A} \circ \bm{\mu} = \mathbf{F}_{(YD)|Z}$ and hence $\mathbf{G} \in \mathcal{S}$. Indeed:
\begin{align*}[\mathcal{A} \circ \mathbf{G}(y)]_{t,z} &= \sum_{g} A^{[t]}_{z,g}\left[\mathbf{G}(y)\right]_{t,g}\\
	&=\cancel{\sum_{g: \max_{z \in \mathcal{Z}} \mathbbm{1}(T_g(z)=t)=0} A^{[t]}_{z,g}\cdot  P(G_i=g)\cdot F_{Y(t)|G}(y|g)}\\
	&\hspace{1in} + \sum_{g: \max_{z \in \mathcal{Z}} \mathbbm{1}(T_g(z)=t)=1}\sum_{z'} A^{[t]}_{z,g}[(A^{[t]})^+]_{g,z'} F_{(YD)|Z}(y,t|z')\\
	&=\sum_{g,z'} A^{[t]}_{z,g}[(A^{[t]})^+]_{g,z'} F_{(YD)|Z}(y,t|z')\\
	&= \sum_{z'} [A^{[t]}(A^{[t]})^+]_{z,z'} F_{(YD)|Z}(y,t|z')= [F_{(YD)|Z}(y)]_{tz} \end{align*}
where the second and third equalities use that $A^{[t]}_{z,g} = 0$ for all $z$, if $g$ is such that $\max_{z \in \mathcal{Z}} \mathbbm{1}(T_g(z)=t)=0$. The final equality follows from $A^{[t]}(A^{[t]})^+ = I_{|\mathcal{Z}|}$, which in turn follows from $(A^{[t]})^+ = {A^{[t]}}'({A^{[t]}}{A^{[t]}}')^{-1}$ since we can by Lemma \ref{lemma:fullrowrank} assume that $A^{[t]}$ has full row rank.

$\mathbf{G}$ may still however not be in $\mathcal{R}$, as its definition above does not ensure that each $F_{Y(t)|G}(y|g)$ is necessarily weakly increasing in $y$ with a limit of unity as $y \uparrow \infty$. Note that $[\mathbf{G}]_{t,g}/P(G_i=g)$ does have the final two properties of a CDF: right-continuity and a left limit of zero. To see this, substitute (\ref{Aeq2}) into the definition of $\mathbf{G}$, to rewrite as:
\begin{equation} \label{eq:Fstarnew}
	\left[\mathbf{G}(y)\right]_{t,g}:= \begin{cases}
		P(G_i=g)\cdot F_{Y(t)|G}(y|g) & \textrm{ if } \max_{z \in \mathcal{Z}} \mathbbm{1}(T_g(z)=t)=0\\
		\sum_{g'} [(A^{[t]})^+ A^{[t]}]_{g,g'}\cdot P(G_i=g')\cdot F_{Y(t)|G}(y|g') & \textrm{ if } \max_{z \in \mathcal{Z}} \mathbbm{1}(T_g(z)=t)=1
	\end{cases}
\end{equation}
Right continuity of each element of $\mathbf{G}(y)$ in $y$ follows from right-continuity of the $F_{Y(t)|G}(y|g')$. Note that $\lim_{y \downarrow -\infty} \left[\mathbf{G}(y)\right]_{t,g} = 0$ follows from each of the CDFs $F_{(YD)|Z}$ approaching zero as $y \downarrow -\infty$, given that the components of $A^{[t]}$ and $P(G_i=g)$ are finite.

Let $\beta_{t,g} := \lim_{y \uparrow \infty} \left[\mathbf{G}(y)\right]_{t,g}$. For any $t,g$ such that $\max_{z \in \mathcal{Z}} \mathbbm{1}(T_g(z)=t)=0$, it follows from the definition of
$\mathbf{G}$ that $\beta_{t,g}=P(G_i=g)$, since each of the $F_{Y(t)|G}(y|g)$ are valid CDFs. For the other $t,g$, use \eqref{eq:Fstarnew} to see that 
\begin{align*}
	\beta_{t,g}&=\lim_{y \uparrow \infty} \sum_{g'} [(A^{[t]})^+ A^{[t]}]_{g,g'} \cdot P(G_i=g')\cdot F_{Y(t)|G}(y|g') = \sum_{g'} [(A^{[t]})^+ A^{[t]}]_{g,g'} \cdot P(G_i=g')\\
	&= [(A^{[t]})^+ A^{[t]}P]_{g}
\end{align*}
where $P$ is a vector of $P(G_i=g)$ for all $g \in \mathcal{G}$. 

Unless $[(A^{[t]})^+ A^{[t]}P]_{g}=P_g$ for all $g \in \mathcal{G}$, the functions $\left[\mathbf{G}(y)\right]_{t,g}$ may thus not represent properly normalized CDFs. In fact, they may not even be monotonic in $y$. However, we can still use $\mathbf{G}$ as a building block to construct another set of functions that satisfy all of the properties of a CDF.

\subsubsection*{A broader class of candidates that also recover observables but represent CDFs}
Given some fixed $g^* \in \mathcal{G}$, let us define a vector valued function $\mathbf{D}: \mathcal{Y}\rightarrow \mathbbm{R}^{|\mathcal{T}|\cdot |\mathcal{G}|}$ with components:
\begin{equation} \label{eq:D}
	[\mathbf{D}(y)]_{t,g} := (P_g-\beta_{t,g})\cdot F_{Y(t)|G}(y|g^*)= [\left\{I-(A^{[t]})^+ A^{[t]}\right\}P]_{g}\cdot F_{Y(t)|G}(y|g^*)
\end{equation}
Now let us define for any $\lambda \in [0,1]$ the convex combination of $\mathbf{G}+\mathbf{D}$ and $\mathbf{G}^*$:
\begin{equation} \label{eq:Galpha}
	\mathbf{G}^{\lambda} := \lambda\left(\mathbf{G} + \mathbf{D} \right) + (1-\lambda)\mathbf{G}^* = \mathbf{G}^*+\lambda\left\{\mathbf{G}-\mathbf{G}^*+\mathbf{D}\right\}
\end{equation}
Our first observation will be that $\mathcal{A} \circ \mathbf{G}^{\lambda} = \mathbf{F}_{(YD)|Z}$, i.e. $\mathbf{G}^{\lambda}$ still recovers observables and thus $\mathbf{G}^{\lambda} \in \mathcal{S}$. To see this, note that:
\begin{align*}[&\mathcal{A} \circ \mathbf{G}^\lambda(y)]_{t,z} = [\mathcal{A} \circ \mathbf{G}(y)]_{t,z} + \lambda \cdot [\mathcal{A} \circ \left\{\mathbf{G}-\mathbf{G}^*+\mathbf{D}\right\}(y)]_{t,g}\\
	&=[F_{(YD)|Z}(y)]_{t,z} + \cancel{\lambda \cdot [\mathcal{A} \circ\mathbf{G}(y)]_{t,g}-\lambda \cdot [\mathcal{A} \circ\mathbf{G}^*(y)]_{t,g}} + \lambda \cdot [\mathcal{A} \circ\mathbf{D}(y)]_{t,g}\\
	&=[F_{(YD)|Z}(y)]_{t,z}+\lambda \cdot \sum_{g,g'} A^{[t]}_{z,g} \cdot [(I-(A^{[t]})^+ A^{[t]})]_{g,g'}\cdot P(G_i=g')\cdot  F_{Y(t)|G}(y|g^*)\\
	&=[F_{(YD)|Z}(y)]_{t,z}+\lambda \cdot \sum_{g'} [\cancel{A^{[t]}(I-(A^{[t]})^+ A^{[t]})}]_{z,g'}\cdot P(G_i=g')\cdot  F_{Y(t)|G}(y|g^*)\\
	&=[F_{(YD)|Z}(y)]_{t,z}
\end{align*}
since $\mathcal{A} \circ\mathbf{G}^*=\mathcal{A} \circ\mathbf{G}$ and $A^{[t]}(A^{[t]})^+ A^{[t]}=A^{[t]}$. 

Now, we verify that for a small enough $\lambda$, $\mathbf{G}^{\lambda}$ yields $F_{Y(t)|G}(y|g)$ that satisfy the properties of a CDF and hence $\mathbf{G}^{\lambda} \in \mathcal{R}$. First, note that $\left[\mathbf{G}^{\lambda}(y)\right]_{t,g}$ is right-continuous in $y$, since each of $[\mathbf{G}(y)]_{t,g}$, $[\mathbf{G}^*(y)]_{t,g}$, and $[\mathbf{D}(y)]_{t,g}$ are. We also have that $\lim_{y \downarrow -\infty} \left[\mathbf{G}^{\lambda}(y)\right]_{t,g}=0$, since $$\lim_{y \downarrow -\infty} \left[\mathbf{G}(y)\right]_{t,g}=\lim_{y \downarrow -\infty} \left[\mathbf{G}^{*}(y)\right]_{t,g}=\lim_{y \downarrow -\infty} \left[\mathbf{D}(y)\right]_{t,g}=0$$
Note as well that
\begin{align*}
	\lim_{y \uparrow \infty} \left[\mathbf{G}^{\lambda}(y)\right]_{t,g} &= \lim_{y \uparrow \infty} \left[\mathbf{G}^{*}(y)\right]_{t,g}+\lambda \cdot \lim_{y \uparrow \infty} \left[\left\{\mathbf{G}-\mathbf{G}^*+\mathbf{D}\right\}(y)\right]_{t,g} \\
	&=P_g+\lambda \cdot \left\{\lim_{y \uparrow \infty} \left[\mathbf{G}(y)\right]_{t,g}-\lim_{y \uparrow \infty} \left[\mathbf{G}^{*}(y)\right]_{t,g}+\lim_{y \uparrow \infty} \left[\mathbf{D}(y)\right]_{t,g}\right\}\\
	&=P_g+\lambda \cdot \left\{\beta_{t,g}-P_g+(P_g-\beta_{t,g})\cdot 1\right\} = P_g
\end{align*}
matching the correct normalization, i.e. $\lim_{y \uparrow \infty} \left[\mathbf{G}^{*}(y)\right]_{t,g} = P_g \cdot \lim_{y \uparrow \infty} F_{Y(t)|G=g}(y) = P_g$.

It only remains to be seen that for a small enough value of $\lambda$, $ \left[\mathbf{G}^{\lambda}(y)\right]_{t,g}$ is weakly increasing in $y$. This is always possible given that $\mathcal{P}_{latent}$ satisfies Condition REG:
\begin{proposition} \label{prop:mono}
	Given Condition REG, $\left[\mathbf{G}^{\lambda}(y)\right]_{t,g}$ is non-decreasing in $y$ for any $\lambda \in (0,\bar{\lambda}]$, where $\bar{\lambda} = \frac{\underbar{L}}{2|\mathcal{G}|\cdot \bar{L}} >0$.
\end{proposition}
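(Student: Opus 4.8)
The plan is to write $[\mathbf{G}^{\lambda}(y)]_{t,g}$ as an explicit affine combination of the conditional CDFs $F_{Y(t)|G}(\cdot\mid g')$, discard the degenerate (never‑taker) coordinates for free, and on the remaining coordinates show that the perturbation part of $\mathbf{G}^{\lambda}$ has increments uniformly dominated by the increments of its leading term once $\lambda\le\bar\lambda$. Throughout, $t$ is fixed and I abbreviate $P_{g'}:=P(G_i=g')$.

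First, if $g$ is a never‑taker for $t$ (i.e. $\max_{z}\mathbbm{1}(T_g(z)=t)=0$) the construction gives $[\mathbf{G}(y)]_{t,g}=[\mathbf{G}^{*}(y)]_{t,g}$ and $[\mathbf{D}(y)]_{t,g}=0$, so $[\mathbf{G}^{\lambda}(y)]_{t,g}=P_g F_{Y(t)|G}(y\mid g)$, which is non‑decreasing. So fix a $g$ that is not a never‑taker for $t$. Let $\Pi:=(A^{[t]})^{+}A^{[t]}$, the orthogonal projector onto $rs(A^{[t]})$ — symmetric, idempotent, and with $|\Pi_{gg'}|=|(\Pi e_g)'(\Pi e_{g'})|\le\|\Pi e_g\|\,\|\Pi e_{g'}\|\le 1$. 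Using $F_{(YD)|Z}(y,t\mid z)=[A^{[t]}\mathbf p(y)]_z$ with $\mathbf p(y)_{g'}:=P_{g'}F_{Y(t)|G}(y\mid g')$, the definition of $\mathbf G$ gives $[\mathbf{G}(y)]_{t,g}=[\Pi\mathbf p(y)]_g$, and since $P_g-\beta_{t,g}=[(I-\Pi)\mathbf P]_g$ (with $\mathbf P$ the vector of $P_{g'}$),
$$[\mathbf{G}^{\lambda}(y)]_{t,g}=(1-\lambda)\,\mathbf p(y)_g+\lambda[\Pi\mathbf p(y)]_g+\lambda[(I-\Pi)\mathbf P]_g\,F_{Y(t)|G}(y\mid g^{*}),$$
where $g^{*}$ is the fixed type in the definition of $\mathbf D$.

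Now fix $y<y'$ and set $\Delta_{g'}:=F_{Y(t)|G}(y'\mid g')-F_{Y(t)|G}(y\mid g')\ge 0$. Assumption REG (in particular the uniform density‑ratio consequence noted after Theorem~\ref{thm:necc}) implies that, for this fixed $t$, the laws of $Y_i(t)\mid G_i=g'$ are pairwise mutually absolutely continuous with Radon–Nikodym derivatives in $[\underline{L},\bar{L}]$; integrating over $(y,y']$ gives $\Delta_{g'}\in[\underline{L}\,\Delta_{g^{*}},\,\bar{L}\,\Delta_{g^{*}}]$ for every $g'$, and $\Delta_{g^{*}}=0$ forces every $\Delta_{g'}=0$. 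In the latter case the increment of $[\mathbf{G}^{\lambda}]_{t,g}$ over $(y,y']$ is $0$, so assume $\Delta_{g^{*}}>0$ and put $\gamma_{g'}:=\Delta_{g'}/\Delta_{g^{*}}\in[\underline{L},\bar{L}]$, where without loss $\underline{L}\le 1\le\bar{L}$ (take $g'=g^{*}$). Then the increment equals $\Delta_{g^{*}}\cdot B$ with, after using $[(I-\Pi)\mathbf P]_g=P_g-[\Pi\mathbf P]_g$,
$$B=P_g\big[(1-\lambda)\gamma_g+\lambda\big]+\lambda\big[\Pi\big(\mathbf P\odot(\gamma-\mathbf 1)\big)\big]_g.$$
The leading term is $\ge P_g\underline{L}$ because $(1-\lambda)\gamma_g+\lambda\ge(1-\lambda)\underline{L}+\lambda\underline{L}=\underline{L}$, while $\big|[\Pi(\mathbf P\odot(\gamma-\mathbf 1))]_g\big|\le\sum_{g'}|\Pi_{gg'}|\,P_{g'}|\gamma_{g'}-1|\le\bar{L}\sum_{g'}P_{g'}=\bar{L}$ (using $|\Pi_{gg'}|\le 1$, $|\gamma_{g'}-1|\le\bar{L}$). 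Hence $B\ge P_g\underline{L}-\lambda\bar{L}$. Recalling that the construction in the proof of Theorem~\ref{thm:necc} may be — and, to obtain the stated constant, is — taken with $P_g=1/|\mathcal G|$ for every $g\in\mathcal G$, we get $B\ge \underline{L}/|\mathcal G|-\lambda\bar{L}\ge\underline{L}/(2|\mathcal G|)>0$ for all $\lambda\in(0,\bar\lambda]$, $\bar\lambda=\underline{L}/(2|\mathcal G|\bar{L})$. Since $y<y'$ were arbitrary, $[\mathbf{G}^{\lambda}(\cdot)]_{t,g}$ is non‑decreasing.

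The main obstacle is twofold. The first part is the careful extraction of the interval‑wise bounds $\Delta_{g'}\in[\underline{L}\Delta_{g^{*}},\bar{L}\Delta_{g^{*}}]$ — together with the mutual‑absolute‑continuity statement $\Delta_{g^{*}}=0\Rightarrow\Delta_{g'}=0$ — from the purely pointwise (one‑sided limit) formulation of Assumption REG; this is exactly what the CDF‑monotonicity argument needs and is the place REG does its work. The second part is tracking where the factor $1/|\mathcal G|$ in $\bar\lambda$ must enter: the perturbation in coordinate $g$ mixes the \emph{unit‑mass} CDFs of all response types through $\Pi$, so its increments are of order $\bar{L}\,\Delta_{g^{*}}$ regardless of $P_g$, whereas the leading increment is of order $P_g\,\Delta_{g^{*}}$; thus the perturbation can be dominated only if $P_g$ is bounded below, which is why the counterexample distribution is normalized to be uniform on $\mathcal G$.
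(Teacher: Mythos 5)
Your algebra checks out and your overall strategy is the same as the paper's: isolate the perturbation $\lambda\{\mathbf{G}-\mathbf{G}^*+\mathbf{D}\}$, use the fact that $(A^{[t]})^+A^{[t]}$ is an orthogonal projector to bound the perturbation's increments by a multiple of the increment of the reference CDF $F_{Y(t)|G}(\cdot|g^*)$, and invoke the lower bound in REG to show the leading term dominates once $\lambda\le\bar\lambda$. Two of your departures are actually improvements in rigor. You verify non-negativity of increments over arbitrary intervals $(y,y']$, whereas the paper only checks $\lim_{y'\downarrow y}\{[\mathbf{G}^\lambda(y')]_{t,g}-[\mathbf{G}^\lambda(y)]_{t,g}\}\ge 0$, which by itself does not imply monotonicity (any decreasing continuous function satisfies it with equality). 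The price is that you must upgrade REG from its pointwise, one-sided-limit form to the interval-wise bounds $\Delta_{g'}\in[\underline{L}\Delta_{g^*},\bar{L}\Delta_{g^*}]$; you assert this via mutual absolute continuity rather than prove it, but it matches the paper's own gloss on REG in the footnote following Theorem \ref{thm:necc}, so it is a fair reading of the assumption. Your bookkeeping of the perturbation (using $\sum_{g'}P_{g'}=1$ rather than Cauchy--Schwarz plus $|\mathcal{G}|^{1/2}$) is also tighter than the paper's and recovers the same constant.

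The one substantive divergence is the normalization $P_g=1/|\mathcal{G}|$. You are right that the leading increment scales as $P_g\,\underline{L}\,\Delta_{g^*}$ while the perturbation's increment is of order $\bar{L}\,\Delta_{g^*}$ with no factor of $P_g$, so a constant $\bar\lambda$ that is uniform over $\mathcal{P}_G$ cannot dominate the perturbation when $P_g$ is very small. The paper's proof obtains its constant by writing $\lim_{y'\downarrow y}\{[\mathbf{G}^*(y')]_{t,g}-[\mathbf{G}^*(y)]_{t,g}\}\ge\delta_t^*(y)\,\underline{L}$, i.e.\ by silently dropping the factor $P_g$ from $[\mathbf{G}^*(y)]_{t,g}=P_g\,F_{Y(t)|G}(y|g)$; the correct bound is $P_g\,\delta_t^*(y)\,\underline{L}$, so Proposition \ref{prop:mono} as stated really does need either a $\min_g P_g$ in the constant or a restriction on $\mathcal{P}_G$. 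Your fix proves a weaker statement than the proposition as written, but it suffices for the only place it is used: the proof of Theorem \ref{thm:necc} needs a single $\mathcal{P}_{latent}$ witnessing non-identification, and that witness can be taken uniform on $\mathcal{G}$. So this should be read not as a flaw in your argument but as a correct diagnosis of a slip in the paper's; just state explicitly that you are proving the proposition under the added hypothesis on $\mathcal{P}_G$ (or, equivalently, with $\bar\lambda$ replaced by $\min_g P_g\cdot\underline{L}/(2\bar{L})$).
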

\noindent Given Proposition \ref{prop:mono}, we have shown that for $\lambda \le \bar{\lambda}$, $\mathbf{G}^\lambda \in \mathcal{R}$ and hence $\mathbf{G}^\lambda \in (\mathcal{S} \cap \mathcal{R})$.

\subsubsection*{Outcome-nonrestrictive identification implies $c \in rs(A^{[t]})$}
Consider now any $\mathcal{P}_{latent} \in \bar{\mathscr{P},c}(\mathcal{G})$ and choose the $g^* \in \mathcal{G}$ in the definition of $\mathbf{D}$ so that REG holds for that $g^*$. We know that there exist $\lambda>0$ small enough that $\mathbf{G}^\lambda \in (\mathcal{S} \cap \mathcal{R}$). For any such $\lambda$, outcome-nonrestrictive identification of $\theta$ now requires that $\Theta \circ \mathbf{G}^\lambda = \Theta \circ \mathbf{G}^*$. This in turn requires, by Eq. \eqref{eq:Galpha}, that $\Theta \circ  \left\{\mathbf{G}-\mathbf{G}^*+\mathbf{D}\right\}=0$. Now:
\begin{align}
	\Theta &\circ  \left\{\mathbf{G}-\mathbf{G}^*+\mathbf{D}\right\} \nonumber \\
	&=\frac{1}{P(c(G_i)=1)} \sum_g c_g
	\cdot \left\{\int_\mathcal{Y} y\cdot  d\mathbf{G}(y)_{t,g}-\int_\mathcal{Y} y\cdot  d\mathbf{G}^*(y)_{t,g}+\int_\mathcal{Y} y\cdot  d\mathbf{D}(y)_{t,g}\right\} \nonumber\\
	&=\frac{1}{P(c(G_i)=1)} \sum_g c_g \sum_{g'} [I-(A^{[t]})^+ A^{[t]}]_{g,g'} \cdot P(G_i=g')\cdot \mathbbm{E}[Y_i(t)|G_i=g'] \nonumber\\
	&\hspace{.5in}+\frac{1}{P(c(G_i)=1)} \sum_g c_g \sum_{g'} [(I-(A^{[t]})^+ A^{[t]})P]_{g,g'}\cdot P(G_i=g') \cdot \mathbbm{E}[Y_i(t)|G_i=g^*] \nonumber\\
	&=\frac{1}{P(c(G_i)=1)} \sum_{g'} [c'(I-(A^{[t]})^+ A^{[t]})]_{g'} \cdot P(G_i=g')\cdot \mathbbm{E}[Y_i(t)|G_i=g'] \nonumber\\
	&\hspace{1in}+\frac{1}{P(c(G_i)=1)} \sum_{g'} [c'(I-(A^{[t]})^+ A^{[t]})]_{g'}\cdot P(G_i=g')\cdot \mathbbm{E}[Y_i(t)|G_i=g^*] \nonumber\\
	&=\frac{1}{P(c(G_i)=1)} \sum_{g'} [c'(I-(A^{[t]})^+ A^{[t]})]_{g'} \cdot P(G_i=g')\cdot \left\{ \mathbbm{E}[Y_i(t)|g']-\mathbbm{E}[Y_i(t)|g^*]\right\} \label{eq:finalcondtion}
\end{align}
Note that although the map $\Theta$ depends on the distribution $\mathcal{P}_G$, the constructions $\mathbf{G}$, $\mathbf{D}$ and $\mathbf{G}^\lambda$ all use the same distribution $\mathcal{P}_G$ from the actual distribution $\mathcal{P}_{latent}$. It is for this reason that $P(c(G_i)=1)$ factors out in Eq. \eqref{eq:finalcondtion}, and the RHS can only be non-zero if the sum over $g'$ appearing in it evaluates to zero.

Suppose that $c \notin rs(A^{[t]})$ so that $c'(I-(A^{[t]})^+ A^{[t]}) = \tilde{c}'$ for some non-zero vector $\tilde{c}$. Provided that $P(G_i=g')\cdot \left\{ \mathbbm{E}[Y_i(t)|G_i=g']-\mathbbm{E}[Y_i(t)|G_i=g^*]\right\}$, thought of as a vector across $g' \in \mathcal{G}$, is not perfectly orthogonal in $\mathbbm{R}^{|\mathcal{G}|}$ to $\tilde{c}$, we will have that $$\sum_{g'} \tilde{c}'_{g'} \cdot P(G_i=g')\cdot \left\{ \mathbbm{E}[Y_i(t)|G_i=g']-\mathbbm{E}[Y_i(t)|G_i=g^*]\right\} \ne 0$$
There is always a $\mathcal{P}_{latent} \in \bar{\mathscr{P}}_{latent,c}(\mathcal{G})$ such that this non-orthogonality holds, because the relative magnitudes of $P(G_i=g)$ and level-differences $\mathbbm{E}[Y_i(t)|G_i=g']-\mathbbm{E}[Y_i(t)|G_i=g^*]$ in $Y_i(t)$ can be varied without violating REG or changing the support of $G_i$. Thus if $c \notin rs(A^{[t]})$, we can obtain $\Theta \circ  \left\{\mathbf{G}-\mathbf{G}^*+\mathbf{D}\right\} \ne 0$ for some $\mathcal{P}_{latent} \in \mathscr{P}_{c,latent}(\mathcal{G})$, and $\theta$ is not outcome-nonrestrictive identified.

\subsubsection{Proof of Proposition \ref{prop:mono}}
The key to ensuring monotonicity will be to choose $\lambda$ small enough that any decreases with $y$ in the components of $\mathbf{G}^{\lambda}$ are dominated by increases in the corresponding components of $\mathbf{G}^*$, so that each $\left[\mathbf{G}^{\lambda}\right]_{t,g}$ is monotonically increasing. For $\left[\mathbf{G}^\lambda(y)\right]_{t,g}$ to be monotonically increasing in $y$ we need that for any $y'>y$: $\left[\mathbf{G}^\lambda(y')\right]_{t,g} - \left[\mathbf{G}^\lambda(y)\right]_{t,g}\ge 0$, i.e. that
\begin{equation} \label{eq:ineq}
	\left[\mathbf{G}^*(y')\right]_{t,g} - \left[\mathbf{G}^*\right]_{t,g}\ge \lambda\cdot  \left[(\mathbf{G}^*-\mathbf{G})(y')-(\mathbf{G}^*-\mathbf{G})(y)\right]_{t,g} - (\left[\mathbf{D}(y')\right]_{t,g} - \left[\mathbf{D}\right]_{t,g})\}
\end{equation}
Let us turn first to $\left[(\mathbf{G}^*-\mathbf{G})(y)\right]_{t,g}$. Fix a $g$ and $t$, and any $y' > y$. Then, by \eqref{eq:Fstarnew}:
\begin{equation} \label{eq:Fstarnew2}
	\left[\mathbf{G}(y')\right]_{t,g}-\left[\mathbf{G}^*(y)\right]_{t,g}= \begin{cases}
		P(G_i=g)\cdot\{F_{Y(t)|G}(y'|g)-F_{Y(t)|G}(y|g)\}\\
		\sum_{g'} [(A^{[t]})^+ A^{[t]}]_{g,g'}\cdot  P(G_i=g')\cdot \left\{F_{Y(t)|G}(y'|g')-F_{Y(t)|G}(y|g')\right\}
	\end{cases}
\end{equation}
where the first line indicates the case that $g$ is such that $\max_{z \in \mathcal{Z}} \mathbbm{1}(T_g(z)=t)=0$, and the second that $\max_{z \in \mathcal{Z}} \mathbbm{1}(T_g(z)=t)=1$. Thus $\left[(\mathbf{G}^*-\mathbf{G})(y')\right]_{t,g}-\left[(\mathbf{G}^*-\mathbf{G})(y)\right]_{t,g}$ is equal to $0$ if $\max_{z \in \mathcal{Z}} \mathbbm{1}(T_g(z)=t)=0$, and 
$$\sum_{g'} [I-(A^{[t]})^+ A^{[t]}]_{g,g'} \cdot P(G_i=g')\cdot \left\{F_{Y(t)|G}(y'|g')-F_{Y(t)|G}(y|g')\right\}$$
if $\max_{z \in \mathcal{Z}} \mathbbm{1}(T_g(z)=t)=1$.

Thus we have by REG that \small
\begin{align*}
	&\left|\left[(\mathbf{G}^*-\mathbf{G})(y')\right]_{t,g}-\left[(\mathbf{G}^*-\mathbf{G})(y)\right]_{t,g}\right|\\
	&= \left|\sum_{g'} [I-(A^{[t]})^+ A^{[t]}]_{g,g'} \cdot P(G_i=g')\cdot \left\{F_{Y(t)|G}(y'|g')- F_{Y(t)|G}(y|g')\right\}\right|\\
	&= \left\{F_{Y(t)|G}(y'|g^*)- F_{Y(t)|G}(y|g^*)\right\}\cdot \left|\sum_{g'} [I-(A^{[t]})^+ A^{[t]}]_{g,g'} \cdot P(G_i=g') \cdot \frac{F_{Y(t)|G}(y'|g')- F_{Y(t)|G}(y|g')}{F_{Y(t)|G}(y'|g^*)- F_{Y(t)|G}(y|g^*)}\right|\\	
	& \le \left\{F_{Y(t)|G}(y'|g^*)- F_{Y(t)|G}(y|g^*)\right\}\cdot  |\mathcal{G}|^{1/2} \cdot \sqrt{\sum_{g'} P(G_i=g')^2\cdot  \left(\frac{F_{Y(t)|G}(y'|g')- F_{Y(t)|G}(y|g')}{F_{Y(t)|G}(y'|g^*)- F_{Y(t)|G}(y|g^*)}\right)^2}\\
	& \le \left\{F_{Y(t)|G}(y'|g^*)- F_{Y(t)|G}(y|g^*)\right\}\cdot |\mathcal{G}| \cdot \max_{g'} P(G_i=g') \cdot \max_{g'} \left|\frac{F_{Y(t)|G}(y'|g')- F_{Y(t)|G}(y|g')}{F_{Y(t)|G}(y'|g^*)- F_{Y(t)|G}(y|g^*)}\right|\\
	& \le \left\{F_{Y(t)|G}(y'|g^*)- F_{Y(t)|G}(y|g^*)\right\}\cdot |\mathcal{G}| \cdot \max_{g'} \left|\frac{F_{Y(t)|G}(y'|g')- F_{Y(t)|G}(y|g')}{F_{Y(t)|G}(y'|g^*)- F_{Y(t)|G}(y|g^*)}\right|
\end{align*} \large
using that $[I-(A^{[t]})^+ A^{[t]}]$ is a projection (so that $|[I-(A^{[t]})^+ A^{[t]}]v| \le |v|$ for any vector $v \in \mathbbm{R}^{|\mathcal{G}|}$) and by the Cauchy-Schwarz inequality. Let $\delta_t^*(y',y):=F_{Y(t)|G}(y'|g^*)- F_{Y(t)|G}(y|g^*)$. Then, by REG:
$$ \left|\left[(\mathbf{G}^*-\mathbf{G})(y')\right]_{t,g}-\left[(\mathbf{G}^*-\mathbf{G})(y)\right]_{t,g}\right| \le \delta_t^*(y',y) \cdot |\mathcal{G}| \cdot \bar{L}$$	
Now consider $\left[\mathbf{D}(y)\right]_{t,g}$. Fix a $g$ and $t$, and any $y' > y$. Similarly, we have that 
\small
\begin{align*}
	\left|\left[\mathbf{D}(y')\right]_{t,g}-\left[\mathbf{D}(y)\right]_{t,g}\right|&= \left|\sum_{g'} [I-(A^{[t]})^+ A^{[t]}]_{g,g'} \cdot P(G_i=g')\cdot \{F_{Y(t)|G}(y'|g^*)-F_{Y(t)|G}(y|g^*)\}\right|\\
	&\le \{F_{Y(t)|G}(y'|g^*)-F_{Y(t)|G}(y|g^*)\} \cdot |P|\\
	&\le \{F_{Y(t)|G}(y'|g^*)-F_{Y(t)|G}(y|g^*)\} \cdot |\mathcal{G}|
\end{align*} \large
So, using Condition REG:
$$ \left|\left[\mathbf{D}(y')\right]_{t,g}-\left[\mathbf{D}(y)\right]_{t,g}\right| \le \delta_t^*(y',y) \cdot |\mathcal{G}|\cdot \bar{L}$$
We can thus put an upper bound on the RHS of \eqref{eq:ineq}
$$\lambda\cdot \{\left[(\mathbf{G}^*-\mathbf{G})(y')-(\mathbf{G}^*-\mathbf{G})(y)\right]_{t,g} - (\lim_{y' \downarrow y} \left[\mathbf{D}(y')\right]_{t,g} - \left[\mathbf{D}\right]_{t,g})\} \le 2\lambda \cdot \delta_t^*(y',y) \cdot |\mathcal{G}|\cdot \bar{L}$$
Meanwhile, by REG:
\begin{align*}
 &\left\{\left[\mathbf{G}^*(y')\right]_{t,g} - \left[\mathbf{G}^*(y)\right]_{t,g}\right\}\\  & \hspace{.4in}= \{F_{Y(t)|G}(y'|g^*)-F_{Y(t)|G}(y|g^*)\} \cdot \frac{F_{Y(t)|G}(y'|g')- F_{Y(t)|G}(y|g')}{F_{Y(t)|G}(y'|g^*)- F_{Y(t)|G}(y|g^*)}\ge \delta_t^*(y',y) \cdot \underbar{L}
\end{align*}
Thus inequality \eqref{eq:ineq} then holds provided that $\delta_t^*(y',y) \cdot \underbar{L} \ge  2\lambda \cdot \delta_t^*(y',y) \cdot |\mathcal{G}|\cdot \bar{L}$, which holds trivially if $\delta_t^*(y',y)=0$ and if and only if $\lambda \le \frac{\underbar{L}}{2|\mathcal{G}|\cdot \bar{L}}$ if $\delta_t^*(y',y)>0$.

A visualization of the intuition behind this result is depicted in Figure \ref{fig:mono}.
\begin{figure}[h!]
	\begin{center}
		\includegraphics[width=4.5in]{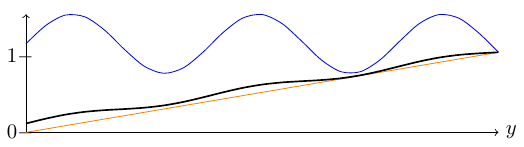}
		\caption{Depiction of Proposition \ref{prop:mono}. The blue sinusoidal function depicts an example of a $\left[(\mathbf{G}+\mathbf{D})^\lambda(y)\right]_{t,g}$ that is not weakly increasing. The orange curve depicts $\left[\mathbf{G}^*(y)\right]_{t,g}$ which is weakly increasing. The black curve depicts $\left[\mathbf{G}^\lambda(y)\right]_{t,g}$, which is a linear combination of the blue and orange functions with weights $\lambda=0.1$ and $1-\lambda = 0.9$, respectively. This value of $\lambda$ is small enough that the black curve is weakly increasing everywhere. \label{fig:mono}}
	\end{center}
\end{figure}

\subsection{Proof of Lemma \ref{lemma:fullrowrank}}
Suppose that $A^{[t]}$ does not have full row rank. This implies that for some $\mathcal{Z}_0 \subset \mathcal{Z}$, each of the remaining rows of $A^{[t]}$ for $z \notin \mathcal{Z}_0$ can be written as a linear combination of the rows of $A^{[t]}$ for $z \in \mathcal{Z}_0$. Take such a $z^* \notin \mathcal{Z}_0$, and accordingly let 
$$A^{[t]}_{z^*,g} = \sum_{z \in \mathcal{Z}_0} \gamma_z \cdot A^{[t]}_{z,g} \quad \textrm{ for all } g \in \mathcal{G}$$
Note then that Eq. \eqref{Aeq} implies that 
\begin{align*}
	F_{(YD)|Z=z^*}&(y,t) = \sum_{g \in \mathcal{G}} A^{[t]}_{z^*,g} \cdot P(G_i=g) \cdot F_{Y(t)|G=g}(y)\\
	&= \sum_{g \in \mathcal{G}} \left(\sum_{z \in \mathcal{Z}_0} \gamma_z \cdot  A^{[t]}_{z,g}\right) \cdot P(G_i=g) \cdot F_{Y(t)|G=g}(y)\\
	&= \sum_{z \in \mathcal{Z}_0} \gamma_z \cdot \sum_{g \in \mathcal{G}} A^{[t]}_{z,g} \cdot P(G_i=g) \cdot F_{Y(t)|G=g}(y) = \sum_{z \in \mathcal{Z}_0} \gamma_z \cdot F_{(YD)|Z=z}(y,t) 
\end{align*}
where the RHS on the last line does not depend on the distribution of observables for $i$ such that $Z_i=z^*$. Thus, $F_{(YD)|Z=z^*}(y,t)$ adds no information that is not contained in $F_{(YD)|Z=z}(y,t)$ for $z \in \mathcal{Z}_0$. If $\mu_g^t$ is outcome-nonrestrictive identified, it must be using the distribution $\mathcal{P}_{YTZ|Z \in \mathcal{Z}_0}$ rather than the full unconditional distribution $\mathcal{P}_{obs}=\mathcal{P}_{YTZ}$.

\subsection{Proof of Proposition \ref{prop:limitsearch}}
Suppose first that $|\mathcal{G}| > |\mathcal{Z}|$ and $A=A^{[t]}$ has full row rank of $|\mathcal{Z}|$. Then since $A$ has full row-rank of $|\mathcal{Z}|$, there exists a subset of $|\mathcal{Z}|$ columns that are linearly independent from one another. Write $A = [\tilde{A}, \tilde{A}^c]$ where $\tilde{A}$ is an invertible $|\mathcal{Z}| \times |\mathcal{Z}|$ matrix of these columns, and $\tilde{A}_c$ are the others. Write the system $A'\alpha = c$ in this notation as 
$$ \begin{bmatrix}
	\tilde{A}' \\ {\tilde{A}_c}'
\end{bmatrix} \alpha = \begin{pmatrix} \tilde{c}\\ \tilde{c}_c \end{pmatrix}$$
where $\tilde{c}$ denotes the $|\mathcal{Z}|$ components of $c$ corresponding to the columns of $A$ put into in $\tilde{A}$, and $\tilde{c}_c$ are the remaining entries $c_g$ of $c$. Then $\alpha = {\tilde{A}}^{'-1}\tilde{c}$, which can be seen by left-multiplying the above equation by the $|\mathcal{Z}| \times |\mathcal{G}|$ matrix $[{\tilde{A}}^{'-1},\mathbf{0}^{|\mathcal{Z}| \times |\mathcal{G}|-|\mathcal{Z}|}]$. Intuitively, the system $A'\alpha = c$ is over-determined, so we only only need the components $\tilde{c}$ of $c$ to uniquely determine the vector $\alpha$.

Now consider the case in which $|\mathcal{G}| < |\mathcal{Z}|$, so that the system $A'\alpha = c$ is now undetermined. Suppose for now that the rank of $A$ is $|\mathcal{G}|$ so that it has full column rank. One solution $\alpha$ can then be obtained by  writing $A=\begin{bmatrix}
	\tilde{A} \\ {\tilde{A}_c}
\end{bmatrix} $ where $\tilde{A}$ is an invertible $|\mathcal{G}| \times |\mathcal{G}|$ matrix representing $|\mathcal{G}|$ linearly independent \textit{rows} of $A$. Now consider $\alpha = \begin{pmatrix} \tilde{A}^{-1}c\\ \mathbf{0}^{(|\mathcal{Z}|-|\mathcal{G}|) \times 1} \end{pmatrix}$ where note that $\tilde{A}^{-1}c$ is $|\mathcal{G}|-$component vector. This represents a solution to $A'\alpha = c$ since
$$A'\begin{pmatrix} \tilde{A}^{'-1}c\\ \mathbf{0}^{(|\mathcal{Z}|-|\mathcal{G}|) \times 1} \end{pmatrix} = [\tilde{A}, \tilde{A}_c] \begin{pmatrix} \tilde{A}^{-1}c\\ \mathbf{0}^{(|\mathcal{Z}|-|\mathcal{G}|) \times 1} \end{pmatrix} = c $$

We can combine the constructions in the two special cases considered above to relax any assumptions about the cardinality of $\mathcal{Z}$ and $\mathcal{G}$ or the rank of $A$. Let the rank of $A$ be $k \le \min\{|\mathcal{Z}|, |\mathcal{G}|\}$. Write $A = A_k[I_k,M]$ where $A_k$ is a $k \times |\mathcal{G}|$ matrix composed of $k$ linearly independent columns of $A$, and $M$ is $(|\mathcal{G}|-k) \times k$ matrix that expresses the remaining $(|\mathcal{G}|-k)$ columns of $A$ as linear combinations of the columns of $A$ represented in $A_k$. Write $c = \begin{pmatrix} \tilde{c}_k\\ \tilde{c}_c \end{pmatrix}$ where $\tilde{c}_k$ collects the corresponding $k$ components of $c$. Note that if $c' = \alpha'A$ has a solution, then $c' = \tilde{c}_k'[I_k,M]$, since $c' = (\alpha_k' A_k)[I,M]$ where the $k$ components of $c'$ corresponding to the columns in $A_k$ are $\alpha_k' A_k$, so $\tilde{c}_k'=\alpha_k' A_k$. Now split the rows of $A_k$ as
$A_k=\begin{bmatrix}
	\tilde{A} \\ {\tilde{A}_c}
\end{bmatrix}$ where $\tilde{A}$ is a square invertible $k \times k$ matrix representing $k$ linearly independent rows of $A_k$ and $\tilde{A}_c$ is $(|\mathcal{Z}|-k) \times k$. Now $\alpha = \begin{pmatrix} \tilde{c}_k'\tilde{A}^{-1}\\ \mathbf{0}^{(|\mathcal{Z}|-k)\times 1} \end{pmatrix}$ represents a solution to $c'=\alpha'A$ because $[\tilde{c}_k' \tilde{A}^{-1},  \mathbf{0}^{1 \times (|\mathcal{Z}|-k)}] A = [\tilde{c}_k' \tilde{A}^{-1},  \mathbf{0}^{1 \times (|\mathcal{Z}|-k)}] \begin{bmatrix}
	\tilde{A} \\ {\tilde{A}_c}
\end{bmatrix}[I_k,M] = \tilde{c}_k'[I_k,M] = c'$.

In all of the three cases considered above, we can write any non-zero elements $\alpha_z$ of a $\alpha$ yielding a binary combination as components $x_z$ of $x=M^{-1}b$, where $M$ is an invertible $n \times n$ binary matrix (i.e. having entries of $0$ or $1$), and $b$ an n-component binary vector. Equivalently, $x$ represents the unique solution to $Mx=b$. Cramer's rule for such a solution establishes that the $x_z$ can be written as $x_z = \frac{det(M_z)}{det(M)}$, where $M_z$ is a matrix that replaces the column $z$ of the matrix $M$ with the vector $b$. Since both $M$ and $b$ are composed of binary entries, the matrix $M_z$ is always binary as well. The result now follows as stated in Proposition \ref{prop:limitsearch} since $0$ is always a possible value of $det(M_z)$.

\subsection{Proof of Proposition \ref{prop:misleading}}
Given $\mathbbm{E}[\nu_i|Z_i=0]$, the parameter $\gamma_3-\gamma_1-\gamma_2$ is given by
\begin{align*}
	\gamma_3-\gamma_1-\gamma_2 &= \mathbbm{E}[Y_i|Z_i=C]-\mathbbm{E}[Y_i|Z_i=A]-\mathbbm{E}[Y_i|Z_i=B]+\mathbbm{E}[Y_i|Z_i=0]\\
	&=\mathbbm{E}[Y_i(T_i(C))-Y_i(T_i(A))-Y_i(T_i(B))+Y_i(T_i(0))]\\
	&=\mathbbm{E}[Y_i(C)-Y_i(A)-Y_i(B)+Y_i(0)] + \mathbbm{E}[Y_i(T_i(C))-Y_i(C)]\\
	&\hspace{1cm}-\mathbbm{E}[Y_i(T_i(A))-Y_i(A)]-\mathbbm{E}[Y_i(T_i(B))-Y_i(B)]+\mathbbm{E}[Y_i(T_i(0))-Y_i(0)]
\end{align*}
Each of the last three terms in the final line can differ from zero in ways that do not offset one another, provided that imperfect compliance is allowed, i.e. $P(T_i(z) \ne z) > 0$ for some $z \in \mathcal{Z}$.

		\section{Relationship to \citet*{navjeevan2023identification}} \label{sec:nps}
This section discusses how the results of this paper relate to recent results characterizing identification in IV models by \citet*{navjeevan2023identification} (NPS).

NPS consider \textit{unconditional} expectations of functions taking the form $\mathbbm{E}[\ell(\tilde{Y}_i,G_i)]$, which in general are allowed to mix potential outcomes and potential treatments, as well as covariates. NPS do not define or explore in depth a notion of ``outcome-nonrestrictive'' identification, as their framework allows the researcher to impose restrictions on outcomes of the types discussed in Section \ref{sec:notapply}.

NPS do mention conditional average treatment effects as a motivation for specializing their general result to cases in which $\ell$ takes the separable form $Y_i(t)\cdot c(G_i)$, for some $t \in \mathcal{T}$ (see their Section 4.4). In these separable cases, NPS derive results that are related to but distinct from my Theorems \ref{thm:suff} and \ref{thm:necc} (which were obtained independently). 

In particular, Corollary 4.4 of NPS assumes discrete instruments and supposes that no additional restrictions are placed on the distribution of unobservables aside from the existence of finite first moments. This model is thus essentially the same as the model $M$ (see Footnote \ref{fn:idzoo}) I use to define outcome-nonrestrictive identification. From Corollary 4.4, NPS derive two important implications. Firstly, they find that the conditions on function $c(\cdot)$ for identification of $\mathbbm{E}[Y_i(t)\cdot c(G_i)]$ are equivalent to those for identification of $\mathbbm{E}[f(Y_i(t))\cdot c(G_i)]$ for any bounded function $f(\cdot)$. This implies that $\mathbbm{E}[f(Y_i(t))\cdot c(G_i)]$ is identified if and only if $P(c(G_i)=1)$ is (take $f(\cdot)=1$). Second, NPS find that a moment of the form $\mathbbm{E}[f(Y_i(t))\cdot c(G_i)]$ is identified if and only if the function $c(g)$ can be written as $\mathbbm{E}[\kappa(Z_i)\cdot \mathbbm{1}(T_i=t)|G_i=g]$ for some function $\kappa$. 
Though NPS do not characterize it in this way, one can see that this is equivalent to $c \in rs(A^{[t]})$ by applying the law of iterated expectations over $Z_i$.\footnote{The closest way in NPS of stating this condition to $c \in rs(A^{[t]})$ seems to be Eq. (28) from their discussion of the selection model of \citet{klinewalters}. In my notation their Eq. (28) reads as $\min_{\alpha \in \mathbbm{R}^{|\mathcal{Z}|}} \left(c(g) - \sum_z \alpha_z A^{[t]}_{zg}\right)^2 = 0$, which is equivalent to $c \in rs(A^{[t]})$.}

\begin{figure}[h!]
	\centering
	\begin{tikzpicture}
		
		\node (F1) at (1,-1.5) {$c \in rowspace(A^{[t]})$};
		\node (F2) at (6,-.5) [yshift=1.5ex] {$\mathbbm{E}[Y_i(t)\cdot c(G_i)]$ identified};
		\node (F3) at (6,-2.5) [yshift=-1.5ex] {$P(c(G_i)=1)$ identified};
		\node (F4) at (12.5,-1.5) {$\mathbbm{E}[Y_i(t)|c(G_i)=1]$ identified};
		\node (F5) at (9.5,-1.5) {$\implies$};
		
		\draw[implies-implies, double equal sign distance, purple] (F1) -- (F2) node[midway,above=.5mm, scale=.75]{NPS C4.4};
		\draw[implies-implies, double equal sign distance, purple] (F1) -- (F3) node[midway,below=.5mm, scale=.75]{NPS C4.4};
		\draw[implies-implies, double equal sign distance, purple] (F2) -- (F3) node[midway,right=.5mm, scale=.75]{NPS C4.4};
		\draw[decorate, decoration={brace, amplitude=10pt}] (8.7,-.5) -- (8.7,-2.5);
		
	\end{tikzpicture} \vspace{.2in}
	\caption{On left, $\iff$ symbols (in purple) depict implications Corollary 4.4 of NPS, for parameters of the form $\mathbbm{E}[Y_i(t)|c(G_i)=1]$. On right, $\implies$ symbol (in black) depicts an implication of $\mathbbm{E}[Y_i(t)|c(G_i)=1] = \frac{\mathbbm{E}[f(Y_i(t))\cdot c(G_i)]}{P(c(G_i)=1)}$. \label{fig:implications1}}
\end{figure}

These results of NPS are summarized in Figure \ref{fig:implications1}. Taken together, they imply Theorem \ref{thm:suff} but not Theorem \ref{thm:necc} of this paper. Since $\mathbbm{E}[Y_i(t)\cdot c(G_i)]$ and $P(c(G_i)=1)$ are both identified when $c$ is in the rowspace of $A^{[t]}$, the results of NPS readily establish that $\mathbbm{E}[Y_i(t)|c(G_i)=1]$ is identified provided that $P(c(G_i)=1)>0$, in the case of a binary valued function $c$. However, their results do not establish that the conditional expectation $\mathbbm{E}[Y_i(t)|c(G_i)=1]$ is \textit{only} identified when $c \in rs(A^{[t]})$ holds. Instead, they show that $\mathbbm{E}[Y_i(t)\cdot c(G_i)]$ and $P(c(G_i)=1)$ can only be identified separately if $c \in rs(A^{[t]})$ holds.

By contrast, Theorem \ref{thm:necc} establishes the necessary direction of $c \in rs(A^{[t]})$ for when $\mathbbm{E}[Y_i(t)|c(G_i)=1]$ is identified (given discrete instruments), as depicted in Figure \ref{fig:implications2} below. While Theorem \ref{thm:suff} establishes that $\mathbbm{E}[Y_i(t)\cdot c(G_i)]$, $P(c(G_i)=1)$ and $\mathbbm{E}[Y_i(t)|c(G_i)=1]$ are all identified if $c$ belongs to the rowspace of $A^{[t]}$, Theorem \ref{thm:necc} establishes that $\mathbbm{E}[Y_i(t)|c(G_i)=1]$ is \textit{only} identified if $c$ belongs to the rowspace of $A^{[t]}$. 
\begin{figure}[h!]
	\centering
	\begin{tikzpicture}
		\node (F1) at (1,-1.5) {$c \in rowspace(A^{[t]})$};
		\node (F2) at (6,-.5) [yshift=1.5ex] {$\mathbbm{E}[Y_i(t)\cdot c(G_i)]$ identified};
		\node (F3) at (6,-2.5) [yshift=-1.5ex] {$P(c(G_i)=1)$ identified};
		\node (F4) at (12.3,-1.5) {$\mathbbm{E}[Y_i(t)|c(G_i)=1]$ identified};
		
		\draw[-implies, double equal sign distance, orange] (2.75,-1.4) -- (9.5,-1.4) node[midway,above=.5mm, scale=.75]{Thm. 1};
		\draw[implies-, double equal sign distance, blue] (2.75,-1.6) -- (9.5,-1.6) node[midway,below=.5mm, scale=.75]{Thm. 2};
		\draw[-implies, double equal sign distance, orange] (F1) -- (F3) node[midway,left=.75mm,below=.5mm,scale=.75]{Thm. 1};
		\draw[-implies, double equal sign distance, orange] (F1) -- (F2) node[midway,left=.5mm,above=.5mm, scale=.75]{Thm. 1};
	\end{tikzpicture} \vspace{.2in}
	\caption{Implications of Theorems \ref{thm:necc} (in blue) and \ref{thm:suff} (in orange) of this paper. \label{fig:implications2}}
\end{figure}


Beyond Theorem \ref{thm:necc}, the present paper also differs from NPS by its exploration of the implications of $c \in rs(A^{[t]})$ for the identification of conditional average treatment effects, in the case that $c$ is binary-valued. This requires finding functions $c$ that belong to the \textit{intersection} of rowspaces of $A^{[t]}$ and $A^{[t']}$ for $t'\ne t$ together with the unit cube, as we saw in Section \ref{sec:geomTEs}. This analysis shows, in the positive direction (Theorem \ref{thm:suff}), how $c \in rs(A^{[t]})$ synthesizes many identification results for treatment effects from the literature (Appendix \ref{sec:examples}). In the other direction (Theorem \ref{thm:necc}), this allows one to exhaustively catalog identification results for a given support of $T_i$ and $Z_i$, as described in Section \ref{sec:bruteforce}.\\

\noindent \textit{An illustrative example:} To appreciate the difference between $\mathbbm{E}[Y_i(t)|c(G_i)=1]$ being identified and $\mathbbm{E}[Y_i(t)\cdot c(G_i)]$ being identified, consider a setting with a binary treatment and binary instrument in which $\mathcal{G}$ allows all four response types: always-takers, never-takers, compliers and defiers. The choice model $\mathcal{G}$ is represented by the matrix $A=A^{[1]}$:\\

\begin{table}[h!]
	\centering
	\begin{tabular}{c|ccccccc}
		&$n.t.$ & $comp.$ & $def.$ & $a.t.$\\
		\hline
		$\mathbf{z=0}$ & 0 & 0 & 1 & 1\\
		$\mathbf{z=1}$ & 0 & 1 & 0 & 1\\
		\hline
	\end{tabular}
\end{table} \vspace{-.25cm}
\noindent Since $c=(0,1,0,0)'$ does not belong to the rowspace of $A$, treatment effects or counterfactual means among compliers are not outcome-nonrestrictive identified. Similarly, the proportion of compliers is not identified.\footnote{The difference $\mathbbm{E}[D_i|Z_i=1]-\mathbbm{E}[D_i|Z_i=0]$ instead identifies $P(G_i=comp.)-P(G_i=def.)$. We can also identify the quantities $\{P(G_i=comp.)+P(G_i=a.t.)\}$, $\{P(G_i=def.)+P(G_i=a.t.)\}$, $\{P(G_i=comp.)+P(G_i=n.t.)\}$ and $\{P(G_i=def.)+P(G_i=n.t.)\}$, but not $P(G_i=comp.)+P(G_i=def.)$.} However, it is straightforward to see that if one maintains the assumption that compliers and defiers share the same average treatment effect, then the average treatment effect among compliers becomes identified and is equal to the conventional Wald ratio \citep{imbensangristrubin} $(\mathbbm{E}[Y_i|Z_i=1]-\mathbbm{E}[Y_i|Z_i=0])/(\mathbbm{E}[D_i|Z_i=1]-\mathbbm{E}[D_i|Z_i=0])$.
This example demonstrates that a parameter like $\mathbbm{E}[Y_i(1)-Y_i(0)|G_i=comp.]$ can in general be identified even when $P(G_i=comp.)$ and $\mathbbm{E}[\{Y_i(1)-Y_i(0)\} \cdot \mathbbm{1}(G_i=comp.)]$ are not, if restrictions are imposed on outcomes.\footnote{Note further that although we can also write $\mathbbm{E}[Y_i(1)-Y_i(0)|G_i=comp.]=\frac{\mathbbm{E}[Y_i(1) \cdot \mathbbm{1}(G_i=comp.)]}{\mathbbm{E}[\mathbbm{1}(G_i=comp.)]}-\frac{\mathbbm{E}[Y_i(0) \cdot \mathbbm{1}(G_i=comp.)]}{\mathbbm{E}[\mathbbm{1}(G_i=comp.)]}$
none of the quantities $\mathbbm{E}[Y_i(1) \cdot \mathbbm{1}(G_i=comp.)]$, $\mathbbm{E}[Y_i(0) \cdot \mathbbm{1}(G_i=comp.)]$, or $\mathbbm{E}[\mathbbm{1}(G_i=comp.)]$ are identified in isolation, even with the outcome restriction that $\mathbbm{E}[Y_i(1)-Y_i(0)|G_i=comp.]=\mathbbm{E}[Y_i(1)-Y_i(0)|G_i=def.]$.} Theorem \ref{thm:necc} shows that this however cannot occur when identification is outcome-nonrestrictive.

		\section{Illustrative examples from the brute force search} \label{app:examples}

\subsubsection{Binary treatment and binary instrument} \label{sec:cdmodel}
With a binary instrument and a binary treatment, the brute-force search reveals that there is exactly one additional choice model beyond the classic LATE model of \citet{Imbens2018} that admits of outcome-nonrestrictive identification of treatment effects. 

Instead of ruling out defiers, suppose that we rule out both always-takers and never-takers, i.e. all units are affected by the treatment. In this case the matrix $A$ becomes:
\begin{table}[h!]
	\centering
	\begin{tabular}{c|cc}
		&compliers & defiers \\
		\hline
		$\mathbf{z=0}$ & 0 & 1\\
		$\mathbf{z=1}$ & 1 & 0\\
		\hline
	\end{tabular}
\end{table}

\noindent The rowspaces of $A^{[1]}$ and $A^{[0]}$ are the same and both span $\mathbbm{R}^2$: $rs(A^{[1]})=rs(A^{[0]}) = span\left\{\begin{pmatrix}
	1 \\ 0
\end{pmatrix},\begin{pmatrix}
	0 \\ 1
\end{pmatrix}\right\}$. Thus, given the results of Section \ref{sec:discrete}, we know that binary collections correspond to any non-zero vertex of the unit cube in $\mathbbm{R}^2$, as depicted in Figure \ref{fig:cube2} below.
\begin{figure}[ht!]
	\begin{center}
		\hspace{1cm}\includegraphics[width=3.5in]{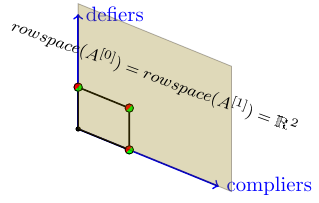}	
	\end{center}
	\caption{A geometric depiction of the model with compliers and defiers only. The vectors $c=(0,1)'$, $c=(1,0)'$ and $c=(1,1)'$ all belong to both $rs(A^{[1]})$ and $rs(A^{[0]})$ and hence the LATE for either response type or the ATE are identified. As in Figure \ref{fig:cube2}, the split-shading of a given vertex (red/green in color) of the unit square indicates that it lies in $rs(A^{[0]}) \cap rs(A^{[1]})$ and is not equal to the zero vector. \label{fig:cube2}}
\end{figure}

\noindent For example, we can identify the LATE for compliers by choosing $\alpha_1 = (0,1)$ $\alpha_0=(1,0)$ (using the notation $\alpha_0,\alpha_1$ introduced in Section \ref{sec:geomTEs}):
$$\mathbbm{E}[Y_i(1)-Y_i(0)|i \textrm{ is complier}] = \frac{\mathbbm{E}[Y_i \cdot D_i|Z_i=1]}{\mathbbm{E}[D_i|Z_i=1]}-\frac{\mathbbm{E}[Y_i \cdot (1-D_i)|Z_i=0]}{1-\mathbbm{E}[D_i|Z_i=0]}$$
using the typical notation $D_i = T_i$ for a binary treatment. An analagous construction identifies the average treatment effect among defiers.\footnote{\citet{leegoff} apply this choice model to study the effect of an NFL team deferring the kickoff on the game outcome, with the kickoff coin flip that decides which team is given the option to defer as the instrument. If it is common knowledge between the teams whether receiving the kickoff is beneficial in that particular game, then a simple model of optimizing play would predict that each game will either be a ``complier'' or a ``defier''.} Note that by Theorem \ref{thm:suff}, $\mathbbm{E}[D_i|Z_i=1]$ and $1-\mathbbm{E}[D_i|Z_i=0]$ are both measures for the same population parameter $P(c(G_i)=1) = P(i \textrm{ is complier})$. Thus $\mathbbm{E}[D_i|Z_i=1]=1-\mathbbm{E}[D_i|Z_i=0]$ can be used as an overidentification restriction for this choice model (there is no such restriction for the LATE model that simply rules out defiers).

\subsubsection{3 treatments, binary instrument} \label{sec:klinewalters}
Now suppose that the instrument is binary and $\mathcal{T} = \{0,1,2\}$. With no restrictions on selection behavior, there are $3^{|\mathcal{Z}|}=9$ conceivable response types. Table \ref{table:bruteforceresults} reports that in this case there are five selection models that afford a total of five distinct outcome-nonrestrictive identification results. These results are all listed in Appendix \ref{sec:catalog}.

As an example that may be empirically relevant, consider a selection model that has $T(z)$ increasing in $z$, but rules out always-1 takers and individuals that skip from $t=0$ to $t=2$ when $z$ is increased from 0 to 1. This restriction leaves four response types: always-0 takers, always-2 takers, individuals who move from treatment 0 to treatment 1, and individuals who move from treatment 1 to treatment 2. SM.3.2.1 reported in Appendix \ref{sec:catalog} reveals that two treatment effects $\Delta^{t,t'}_c$ are identified in this choice model. For example, the quantity
\begin{align*}
	\frac{\mathbbm{E}[Y_i\cdot D_{i}^{[1]}|Z_i=1]}{\mathbbm{E}[D_{i}^{[1]}|Z_i=1]}-\frac{\mathbbm{E}[Y_i\cdot D_{i}^{[0]}|Z_i=0]-\mathbbm{E}[Y_i\cdot D_{i}^{[0]}|Z_i=1]}{\mathbbm{E}[D_{i}^{[0]}|Z_i=0]-\mathbbm{E}[D_{i}^{[0]}|Z_i=1]}
\end{align*}
\noindent corresponds to the binary collection with $\alpha_0 = (0,1)'$ and $\alpha_1 = (1,-1)'$, and identifies $\mathbbm{E}[Y_i(1)-Y_i(0)|T_i(0)=0, T_i(1)=1]$. The treatment effect $\mathbbm{E}[Y_i(2)-Y_i(1)|T_i(0)=1, T_i(1)=2]$ is identified by a similar estimand.

This selection model could represent a setting in which $t=2$ represents passing a test outright, $t=1$ passing the test ``provisionally'', and $t=0$ failing. Suppose that a reform implemented for some schools lowers the score threshold $\tau_o$ for an outright pass to the old threshold $\tau_p$ for a provisional pass, while further lowering $\tau_p$, as depicted below:
\tikzstyle{line} = [draw]
\begin{center}
	\begin{tikzpicture}
		\node (F1) at (0,1) {$\mathbf{(z=0)}$};
		\node (F2) at (4.85,1) {$\tau_p(0)$};
		\node (F3) at (9,1) {$\tau_o(0)$};
		\node (F4) at (13,1) {};
		
		\node (G1) at (0,0.5) {$\mathbf{(z=1)}$};
		\node (G3) at (4.85,0.5) {$\tau_o(1)$};
		\node (G2) at (2.5,0.5) {$\tau_p(1)$};
		\node (G4) at (13,0.5) {};
		
		\draw[<-, red, very thick] (F1) -- (F2);
		\draw[-, orange, very thick] (F2) -- (F3);
		\draw[->, green, very thick] (F3) -- (F4);
		
		\draw[<-, red, very thick] (G1) -- (G2);
		\draw[-, orange, very thick] (G2) -- (G3);
		\draw[->, green, very thick] (G3) -- (G4);
	\end{tikzpicture}
\end{center}
\noindent Students will then belong to one of the four types described above, depending on their test score. The quantity $\mathbbm{E}[Y_i(1)-Y_i(0)|T_i(z)=z]$ represents the average effect of moving from a fail to a provisional pass among the students who are brought in to a provisional pass by the grading reform.\\

\noindent \textit{The selection model of \citet{klinewalters}:} Another observation from the $|\mathcal{T}|=3, |\mathcal{Z}|=2$ setting is that the selection model of \citet{klinewalters} (KW) does not appear in the catalog of Appendix \ref{sec:catalog}. KW study a setting in which the binary instrument is an offer to choose $t=2$, while treatments $t=0$ and $t=1$ are always available even if $z=0$. On the grounds of revealed preference, KW impose that $T_i(1) \ne T_i(0) \implies T_i(1)=2$ which results in a selection model with five response types. KW show that the parameter $\mathbbm{E}[Y_i(2)-Y_i(T_i(0))|T_i(1)=2,T_i(0)\ne 2]$ is then identified. The quantity $T_i(0)$ represents an individual's next preferred alternative to $t=2$, which may vary across those $i$ for whom $T_i(1)=2,T_i(0)\ne 2$. As a result, this parameter does not fit the form of the general family of treatment effect parameters $\Delta^{t,t'}_c$ introduced in Section \ref{sec:oaid}. Indeed, the brute force search confirms that unfortunately \textit{no} parameters of the form $\Delta^{t,t'}_c$ between two fixed treatmets $t$ and $t'$ are identified in the KW selection model.

\subsubsection{Binary treatment, 3 instrument values}
Suppose now that treatment is binary and $\mathcal{Z} = \{0,1,2\}$. Table \ref{table:bruteforceresults} reports that in this case there are 11 selection models that afford a total of 30 distinct outcome-nonrestrictive identification results, listed in Appendix \ref{sec:catalog}. One observation that emerges when we extend the analysis to instruments that take more than two values is that, there now exist binary collections that require coefficients $\alpha_z$ that do not belong to the set $\{-1,0,1\}$. Although this is entirely consistent with Proposition \ref{prop:limitsearch} for $|\mathcal{Z}|\ge 3$, a reasonable conjecture ex-ante might have been that $\alpha_z \in \{-1,0,1\}$ always holds, given the preponderance of this pattern in known identification results (see for example all of the results surveyed in Appendix \ref{sec:examples}).

For the sake of exposition, let us consider a ``judge-IV'' setting in which defendants $i$ receive a bail decision from a randomly-assigned judge $z$, with $t=1$ indicating that the defendent remains incarcerated and $t=0$ that they are released on bail. Suppose that the defendants are one of three types $g \in \mathcal{G}$ (typically unobserved to the researcher), comprising the columns of the table below. ``Prepared'' defendants dress formally and speak politely in their bail hearing, perhaps also presenting evidence that they are not a danger to the community. ``Unprepared'' defendants do not make such efforts. A third category of ``flight-risk'' defendants are thought to be particularly capable of and likely to fail to appear for trial if they are granted bail (while this is not true of the first two groups, e.g. due to strong personal ties to the jurisdiction or insufficient financial means to leave town).\\

\begin{table}[h!]
	\centering
	\begin{tabular}{c|ccccccc}
		&prepared & unprepared & flight-risk\\
		\hline
		$\mathbf{z=0}$ (standard) & 0 & 0 & 1\\
		$\mathbf{z=1}$ (character) & 0 & 1 & 0\\
		$\mathbf{z=2}$ (skeptics) & 1 & 0 & 1\\
		\hline
	\end{tabular}
\end{table}

\noindent The above table summarizes selection behavior when the judges also belong to one of three types, represented across rows. ``Standard'' judges are only concerned with failure to appear, and keep only the flight-risk defendants incarcerated. ``Character'' judges instead attempt to infer the risk of a defendant to public safety on the basis of the defendant's presentation and arguments to their character made in the bail hearing, but do not attempt to assess whether the defendant is likely to skip town. ``Skeptic'' judges are also sensitive to judgments about presentation, but in the opposite direction: they are suspicious of defendants precisely when they seem to be making a case that they are not dangerous. They deny bail for the prepared defendants, and also deny bail for flight-risk defendants.\footnote{This selection model is equivalent to SM.2.3.4 in Appendix \ref{sec:catalog}, after permuting instrument/treatment labels. Note that this model is merely illustrative: more types and nuance in their definitions could add some realism.}


Note that this model does not satisfy the strong LATE monotonicity assumption typically invoked in judge-IV settings, which has been challenged on empirical grounds \citep{frandsenetal2023,sigstad}. If there were no skeptic $(z=2)$ judges, then this model would instead consist of compliers, defiers, and never-takers, which we have already seen permits no outcome-nonrestrictive identification results for treatment effects. However, the presence of the skeptics aids here in identification, as we can then identify the average effect of incarceration among two groups $\mathbbm{E}[Y_i(1)-Y_i(0)|G_i \in \{unprepared, flight\textrm{-}risk\}]$ by $\frac{\mathbbm{E}[Y_iD_i|Z_i=0]+\mathbbm{E}[Y_iD_i|Z_i=1]}{\mathbbm{E}[D_i|Z_i=0]+\mathbbm{E}[D_i|Z_i=1]} - \frac{-\mathbbm{E}[Y_iD_i|Z_i=0]+\mathbbm{E}[Y_iD_i|Z_i=1]+2\cdot \mathbbm{E}[Y_iD_i|Z_i=2]}{-\mathbbm{E}[D_i|Z_i=0]+\mathbbm{E}[D_i|Z_i=1]+2\cdot \mathbbm{E}[D_i|Z_i=2]}$, corresponding to the binary collection with $\alpha_1 = (1,1,0)$ and $\alpha_0=(-1,1,2)'$.\footnote{A coefficient of two is inevitable for $t=0$, since we need $\alpha_1 = 1$ (using the $\alpha_z$ notation) for $c({flight\-risk})=0$, $\alpha_0 = -\alpha_1$ to get $c(prepared)=0$, but can only achieve $c(unprepared)=1$ if $\alpha_2+\alpha_1 = 1$.} Note that using this result requires judge types to be observable, or estimable from judges each seeing many cases.

\subsubsection{3 treatment values, 3 instrument values} \label{sec:3by3}
In the case of $\mathcal{Z} = \mathcal{T} = \{0,1,2\}$, the brute force approach returns 251 distinct binary collections spread across 251 unique selection models. These results nest for example two identification results presented in \citet*{kirkeboenleuvenmogstad} (KLM). KLM consider an unordered treatment which represents a student's field of study, where students are ``assigned'' to a given field, i.e $Z_i=j$ represents an incentive to choose field $j$. Proposition 2 of KLM presents three special cases in which a two stage least squares estimand with indicators for treatments $1$ and $2$ instrumented by indicators for $Z_i=1$ and $Z_i=2$ recovers causally interpretable coefficients. While their first result (restricting treatment effects to be homogeneous) is not outcome-nonrestrictive, the other two results are.

For example, the second result in KLM Proposition 2 shows that if preferences are further restricted so that $D^{[2]}_i(1)=D^{[2]}_i(0)$ and $D^{[1]}_i(2)=D^{[1]}_i(0)$ for all $i$ (an offer to one program does not affect whether or not the student chooses the other program), then $\mathbbm{E}[Y_i(1)-Y_i(0)|D^{[1]}_i(1)>D^{[1]}_i(0)]$ and $\mathbbm{E}[Y_i(2)-Y_i(0)|D^{[2]}_i(2)>D^{[2]}_i(0)]$ are each identified.\footnote{Throughout, KLM also maintain a version of unordered monotonicity (cf. \citealt{Heckman2018}) in which $D_i^{[1]}(1) \ge D_i^{[1]}(0)$ and $D_i^{[2]}(2) \ge D_i^{[2]}(0)$: an offer of admission never causes a student to select out of that field.} Let us consider how the first of these results appears in the comprehensive search (the second result proceeds similarly). Upon a relabeling of treatment/instrument values and removing one response type,\footnote{In particular, label the treatments $(0,1,2)$ as $(1,0,2)$, swap instrument values 1 and 2, and drop column 6. All results for the $3 \times 3$ case are enumerated in a working paper version of this paper: \url{https://arxiv.org/abs/2406.02835}.} selection model SM.3.3.63 in the catalog amounts to the following: $ A= \begin{bmatrix}
	0 & 0 & 0 & 0 & 1 & 2 & 0\\
	0 & 1 & 0 & 1 & 1 & 2 & 2\\
	0 & 0 & 2 & 2 & 1 & 2 & 1\\
\end{bmatrix}$. This selection model has seven response types, whereas the choice model considered by KLM contains only the first six columns of A. In the larger selection model with all seven groups, the treatment effect $\mathbbm{E}[Y_i(1)-Y_i(0)|T_i(1) \ne T_i(0)]$ is identified by the binary collection with $\alpha_0 = (-2,1,1)'$ and $\alpha_1=(1,-1,0)'$.\footnote{Accordingly, $\mathbbm{E}[Y_i(1)-Y_i(0)|T_i(1) \ne T_i(0)]= \frac{\mathbbm{E}[Y_i\cdot D_i^{[1]}|Z_i=2] + \mathbbm{E}[Y_i\cdot D_i^{[1]}|Z_i=1] - 2 \cdot \mathbbm{E}[Y_i\cdot D_i^{[1]}|Z_i=0]}{\mathbbm{E}[D_i^{[1]}|Z_i=2] + \mathbbm{E}[D_i^{[1]}|Z_i=1] - 2 \cdot \mathbbm{E}[ D_i^{[1]}|Z_i=0]}-\frac{\mathbbm{E}[Y_i\cdot D_i^{[0]}|Z_i=1] - \mathbbm{E}[Y_i\cdot D_i^{[0]}|Z_i=0]}{\mathbbm{E}[D_i^{[0]}|Z_i=1] - \mathbbm{E}[D_i^{[0]}|Z_i=0]}$. Identification of $\mathbbm{E}[Y_i(2)-Y_i(0)|T_i(2) \ne T_i(0)]$ is analogous.} Thus, we have seen that KLM's choice model can be relaxed to allow an additional response type, with the same estimand that identifies $\mathbbm{E}[Y_i(1)-Y_i(0)|D^{[1]}_i(1)>D^{[1]}_i(0)]$ in their more restrictive model identifying $\mathbbm{E}[Y_i(1)-Y_i(0)|T_i(1) \ne T_i(0)]$ more generally. Code available from the author allows one to check in general whether a given selection model can be relaxed in this way, using the catalog of identification results (available for $|\mathcal{T}|,|\mathcal{Z}| \le 3$).
		\subsection{4 treatment values, 4 instrument values: spillover effects within pairs} \label{sec:spillovers}
Although the $|\mathcal{T}|=|\mathcal{Z}|=4$ case is not included in the brute-force search of Table \ref{table:bruteforceresults} (due to the computational burden), it remains easy to check for outcome-nonrestrictive identification in any given choice model using the results of Section \ref{sec:discrete}. This section presents an alternative application in the $4 \times 4$ case to supplement the study of interaction effects from Section \ref{sec:empirical}.

Consider a setting in which each unit $i$ has one ``neighbor'' $n(i)$, and we allow for violations of SUTVA within neighbor pairs $(i,n(i))$. This can be accommodated by expanding the set of treatments $\mathcal{T}$ to accommodate values of such pairs, so that $Y_i = Y_i(T_{i},T_{n(i)})$ where $T_{n(i)}$ is the treatment of the neighbor of unit $i$, indexed by $n(i)$. I consider the case in which treatment $T$ itself is binary, so that $\tilde{T}_i:=(T_{i},T_{n(i)})$ may take one of four values $(0,0),(1,0),(0,1),(1,1)$. Following the notation in Section \ref{sec:empirical}, we denote these pair-level ``treatments'' as $0,A,B,C$, where $\tilde{T}_i=0$ indicates that neither unit is treated, $\tilde{T}_i=A$ that only unit $i$ is treated, $\tilde{T}_i=B$ that only their neighbor is treated, and $\tilde{T}_i=C$ that both $i$ and their neighbor is treated.

For each $i$, let $Z_i$ be a binary instrument that reflects whether $i$ is ``assigned'' to receive treatment. See \citet{kang2016peerencouragementdesignscausal} and \citet{Vazquez-Bare03072023} for related setups. Let $\tilde{T}_i(z,z')$ reflect the treatments for the pair as a function of treatment assignments $(z,z')$ for the pair. Let $\tilde{Z}_i:=(Z_i,Z_{n(i)})$ be the pairs realized treatment assignment, which can take any of four counterfactual values $z \in \tilde{\mathcal{Z}}=\{0,A,B,C\}$. I maintain throughout two assumptions about selection: i) first, that the $T_i$ component of $\tilde{T}_i(z,z')$ only depends on $z$, and that the $T_{n(i)}$ component of $\tilde{T}_i(z,z')$ only depends on $z'$; and ii) secondly, that each selection uptake is monotonic such that $T_i(1) \ge T_i(0)$, where we write $T_i(z,z')$ as $(T_i(z),Y_{n(i)}(z'))$. 

These restrictions leave nine response types, enumerated in the table below:
\begin{table}[h!]
	\centering \small
	\begin{tabular}{c|ccccccccccc}
		\textbf{assigned} $\downarrow$ & (nt,nt) & (cm,cm) & (cm,nt) & (nt,cm) & (at,at) & (at,cm) & (cm,at) & (nt,at) & (at,nt)\\
		\hline
		0=(0,0) & 0 & 0 & 0 & 0& C & A & B & B & A \\
		A=(1,0) & 0 & A & A & 0& C & A & C & B & A \\
		B=(0,1) & 0 & B & 0 & B& C & C & B & B & A \\
		C=(1,1) & 0 & C & A & B& C & C & C & B & A \\
		\hline
	\end{tabular} \vspace{.25cm}
\end{table}

\noindent Given a function $c(\cdot)$, the local average direct effect of one's own treatment on their outcome is:
\begin{align*}
	LADTE&:=\mathbbm{E}[Y_i(1,Z_{n(i)})-Y_i(0,Z_{n(i)})|c(G_i)=1]\\
	&=P(Z_{n(i)}=1)\cdot \mathbbm{E}[Y_i(C)-Y_i(B)|c(G_i)=1] + P(Z_{n(i)}=0)\cdot \mathbbm{E}[Y_i(A)-Y_i(0)|c(G_i)=1]
\end{align*}
using that $P(Z_{n(i)}=1|c(G_i)=1)=P(Z_{n(i)}=1)$ by independence.

Similarly, the local average spillover (indirect) effect is
\begin{align*}
	LASTE&:=\mathbbm{E}[Y_i(Z_i,1))-Y_i(Z_i,0)|c(G_i)=1]\\
	&=P(Z_{i}=1)\cdot \mathbbm{E}[Y_i(C)-Y_i(A)|c(G_i)=1] + P(Z_{i}=0)\cdot \mathbbm{E}[Y_i(B)-Y_i(0)|c(G_i)=1]
\end{align*}

\noindent Since the distributions of $Z_i$ and $Z_{n(i)}$ are identified, we can then point identify the LADTE and the LASTE provided that we can identify $\mu_{t}^c$ for all of $t \in \{0,A,B,C\}$. An application of Theorems \ref{thm:suff} and \ref{thm:necc} shows that this is possible without restricting outcomes in the above selection model if and only if $c(g) = \mathbbm{1}(g = cm,cm)$. 

This can be shown by direct enumeration of the $2^9$ possible functions $c: \mathcal{G} \rightarrow \{0,1\}$. The vector forms $\alpha_t$ of the resulting coefficient functions $\alpha^{[t]}(z)$ are:
\begin{align}
	\alpha_0 &= (+1, -1, -1, +1)', \quad \alpha_A = (-1, +1, +1, -1)' \label{eq:bcs2}\\
	\alpha_B &= (-1, +1, +1, -1)', \quad \alpha_C = (+1, -1, -1, +1)' \nonumber
\end{align}
Similar to the case of complementarities, the selection model therefore implies the overidentification restriction that
\normalsize
\begin{align*}
	p&:=P(T_i=0|Z_i=(1,1))-P(T_i=0|Z_i=(0,1))-P(T_i=0|Z_i=(1,0))+P(T_i=0|Z_i=(0,0))\\
	&=-P(T_i=A|Z_i=(1,1))+P(T_i=A|Z_i=(0,1))+P(T_i=A|Z_i=(1,0))-P(T_i=A|Z_i=(0,0))\\
	&=-P(T_i=B|Z_i=(1,1))+P(T_i=B|Z_i=(0,1))+P(T_i=B|Z_i=(1,0))-P(T_i=B|Z_i=(0,0))\\
	&=P(T_i=C|Z_i=(1,1))-P(T_i=C|Z_i=(0,1))-P(T_i=C|Z_i=(1,0))+P(T_i=C|Z_i=(0,0))
\end{align*}
\large
for some value $p \in [0,1]$, which identifies $P(g = \textrm{cm,cm})$.
		\section{Supplemental material for the application to interaction effects} \label{app:additionalinteraction}

\subsection{Motivating the restriction imposed by Proposition \ref{prop:interactionid}} \label{sec:sepchoice}
We can rationalize the restriction $\mathcal{G} \subseteq \mathcal{G}^{sep}$ made in Proposition \ref{prop:interactionid} by supposing that individuals choose \textit{separately} whether to receive treatment $A$ or $B$, rather than as a single joint decision. Let $S(z)$ denote the set of treatments among $\{A,B\}$ offered to an individual when their instrument realization is $z \in \{\textrm{neither}, \textrm{A}, \textrm{B}\, \textrm{both}\}$. That is, $S(\textrm{neither}) = \emptyset$, $S(\textrm{A})=\{A\}$, $S(\textrm{B})=\{B\}$, $S(\textrm{both})=\{A,B\}$.

\begin{definition*}
	We say that the population exhibits \textbf{separable choices} if their counterfactual selection satisfies for each $z \in \{\textrm{neither}, \textrm{A}, \textrm{B}\, \textrm{both}\}$:
	$$T_i(z) = \{t \in S(z): U_i(t) \ge 0\}$$
	where treatment $C$ is here understood as the set of treatments $\{A,B\}$, and treatment $0$ is understood as the null set $\emptyset$.
\end{definition*}
\noindent Separable (counterfactual) choices says that individuals choose treatment $A$ if and only if $U_i(A) \ge 0$ and $B$ if and only if $U_i(B) \ge 0$, subject to the options offered to them. This implies that $T_i(\textrm{both})=C \implies T_i(A)=A \textrm{ and } T_i(B)=B$, and similarly that $T_i(A)=A \textrm{ and } T_i(B)=B\implies T_i(\textrm{both})=C$. This eliminates exactly the remaining five groups displayed in gray in Table \ref{table:cross}.

\subsection{Identification with covariates} \label{app:covs}
Suppose that instead of \eqref{eq:independence} we have
\begin{equation} \label{eq:independencecovs}
	\{Z_i \indep (\tilde{Y}_i,G_i)\} | X_i
\end{equation}
where $X_i$ are observed covariates that are unaffected by treatment. This holds, for example, if the instruments are independent of these covariates jointly with the latent heterogeneity $(\tilde{Y}_i,G_i)$ across individual: $Z_i \indep (\tilde{Y}_i,G_i,X_i)$.

Consider a binary combination $(t,\alpha)$ such that $\sum_k \alpha_k D^{[t]}_i(z_k) = c^{[t,\alpha]}(G_i)$ for all $i$ where $c^{[t,\alpha]}(G_i) \in \{0,1\}$ for all $g \in \mathcal{G}$. I do not consider the case in which $\sum_k \alpha_k D^{[t]}_i(z_k) = c^{[t,\alpha]}(G_i,X_i)$ for some function $c^{[t,\alpha]}$ that depends both on $G_i$ and $X_i$, though such an extension would be possible. By the steps that establish Eq. \eqref{idresult} in the uncondional case, \eqref{idresult} generalizes to
\begin{equation} \label{idresultcovs}
	\mathbbm{E}\left[Y_i(t)\left|c^{[t,\alpha]}(G_i)=1,X_i=x\right.\right] = \frac{\sum_{k=1}^K\alpha_{k}\cdot \mathbbm{E}\left[Y_i\cdot D^{[t]}_i|Z_i=z_k,X_i=x\right]}{\sum_{k=1}^K\alpha_{k}\cdot \mathbbm{E}\left[D^{[t]}_i|Z_i=z_k,X_i=x\right]}
\end{equation}
for any value $x$. Notice that although $P(c^{[t,\alpha]}(G_i)=1|X_i=x)=\mathbbm{E}[c^{[t,\alpha]}(G_i)|X_i=x]$ might vary with $x$, it is identified by the denominator of the above for each: $P(c^{[t,\alpha]}(G_i)=1|X_i=x)=\sum_{k=1}^K\alpha_{k}\cdot \mathbbm{E}\left[D^{[t]}_i|Z_i=z_k,X_i=x\right]$. Consequently, the overall counterfactual mean that does not condition on $x$ is identified as
\begin{align*}
	\mathbbm{E}[Y_i(t)|c^{[t,\alpha]}(G_i)=1] &= \int dF_{X|c^{[t,\alpha]}(G)=1}(x) \cdot \mathbbm{E}[Y_i(t)|c^{[t,\alpha]}(G_i)=1,X_i=x]\\
	&= \int dF_{X|c^{[t,\alpha]}(G)=1}(x) \cdot\frac{\sum_{k=1}^K\alpha_{k}\cdot \mathbbm{E}\left[Y_i\cdot D^{[t]}_i|Z_i=z_k,X_i=x\right]}{P(c^{[t,\alpha]}(G_i)=1|X_i=x)}\\
	&= \int dF_{X}(x) \cdot\frac{\sum_{k=1}^K\alpha_{k}\cdot \mathbbm{E}\left[Y_i\cdot D^{[t]}_i|Z_i=z_k,X_i=x\right]}{P(c^{[t,\alpha]}(G_i)=1)}\\
	&= \frac{\sum_{k=1}^K\alpha_{k}\cdot \mathbbm{E}\left[\mathbbm{E}\left[Y_i\cdot D^{[t]}_i|Z_i=z_k,X_i\right]\right]}{P(c^{[t,\alpha]}(G_i)=1)}\\
	&= \frac{\sum_{k=1}^K\alpha_{k}\cdot \mathbbm{E}\left[\mathbbm{E}\left[Y_i\cdot D^{[t]}_i|Z_i=z_k,X_i\right]\right]}{\sum_{k=1}^K\alpha_{k}\cdot \mathbbm{E}\left[\mathbbm{E}[D^{[t]}_i|Z_i=z_k,X_i]\right]}
\end{align*}
applying Bayes' rule, echoing an argument for the LATE model by \citet{Frolich2007}. See also Appendix A of \citet{goff2024vector}. Given a binary collection, we can use these results to identify treatment effects that either do or do not condition on $X_i$.

Note that the conditional independence assumption \ref{eq:independencecovs} further allows us to identify the distribution of covariates $X_i$ among ``compliers'' for whom $c^{[t,\alpha]}(G_i)=1$ given a binary combination $(t,\alpha)$. Suppose that $X_i$ has $M$ components so that $X_i \in \mathbbm{R}^M$. Then for any Borel set $\mathcal{B}$ of $\mathbbm{R}^M$ we have that, by \eqref{eq:independencecovs}:
\begin{align*}
	\sum_k \alpha_k \cdot \mathbbm{E}[\mathbbm{1}(X_i \in \mathcal{B}) \cdot P(T_i=t|Z_i=z_k,X_i)] &= \sum_k \alpha_k \cdot \mathbbm{E}[\mathbbm{1}(X_i \in \mathcal{B})\cdot \mathbbm{E}[D_i^{[t]}(z_k)|X_i,Z_i=z_k]]\\
	&= \sum_k \alpha_k \cdot \mathbbm{E}[\mathbbm{1}(X_i \in \mathcal{B})\cdot \mathbbm{E}[D_i^{[t]}(z_k)|X_i]]\\
	&= \mathbbm{E}\left[\mathbbm{1}(X_i \in \mathcal{B})\cdot \mathbbm{E}\left[\left.\sum_k \alpha_k \cdot D_i^{[t]}(z_k)\right|X_i\right]\right]\\
	&= \mathbbm{E}[\mathbbm{1}(X_i \in \mathcal{B})\cdot \mathbbm{E}[c^{[t,\alpha]}(G_i)|X_i]]\\
	&= \mathbbm{E}[\mathbbm{E}[\mathbbm{1}(X_i \in \mathcal{B})\cdot c^{[t,\alpha]}(G_i)|X_i]]\\
	&= \mathbbm{E}[\mathbbm{1}(X_i \in \mathcal{B})\cdot c^{[t,\alpha]}(G_i)]\\
	&= P(c^{[t,\alpha]}(G_i)=1) \cdot P(X_i \in \mathcal{B}|c^{[t,\alpha]}(G_i)=1)
\end{align*}
Meanwhile
\begin{align*}
	\sum_k \alpha_k \cdot \mathbbm{E}[P(T_i=t|Z_i=z_k,X_i)] &= \sum_k \alpha_k \cdot \mathbbm{E}[\mathbbm{1}(X_i \in \mathcal{B})\cdot \mathbbm{E}[D_i^{[t]}(z_k)|X_i,Z_i=z_k]]\\
	&= \sum_k \alpha_k \cdot \mathbbm{E}[\mathbbm{E}[D_i^{[t]}(z_k)|X_i]] = \mathbbm{E}\left[\mathbbm{E}\left[\left.\sum_k \alpha_k \cdot D_i^{[t]}(z_k)\right|X_i\right]\right]\\
	&= \mathbbm{E}[\mathbbm{E}[c^{[t,\alpha]}(G_i)|X_i]] = \mathbbm{E}[c^{[t,\alpha]}(G_i)]= P(c^{[t,\alpha]}(G_i)=1)
\end{align*}
And thus 
\begin{equation} \label{eq:covsborel}
	P(X_i \in \mathcal{B}|c^{[t,\alpha]}(G_i)=1) = \frac{\sum_k \alpha_k \cdot \mathbbm{E}[\mathbbm{1}(X_i \in \mathcal{B}) \cdot P(T_i=t|Z_i=z_k,X_i)]}{\sum_k \alpha_k \cdot \mathbbm{E}[P(T_i=t|Z_i=z_k,X_i)]}
\end{equation}
This implies, for example, that we can identify the mean of $X_i$ among the $c^{[t,\alpha]}(G_i)=1$ sub-population as
$$\mathbbm{E}[X_i \in \mathcal{B}|c^{[t,\alpha]}(G_i)=1] = \frac{\sum_k \alpha_k \cdot \mathbbm{E}[X_i\cdot P(T_i=t|Z_i=z_k,X_i)]}{\sum_k \alpha_k \cdot \mathbbm{E}[P(T_i=t|Z_i=z_k,X_i)]}$$
which generalizes the seminal result of \citet{Abadie2003} for the case of the binary treatment, binary instrument LATE model.

If we have a binary collection $\{(t,\alpha^{[t]})\}_{t \in \psi}$, then Eq. \eqref{eq:covsborel} yields overidentification restrictions since it implies that
$$\frac{\sum_k \alpha^{[t]}_z \cdot \mathbbm{E}[\mathbbm{1}(X_i \in \mathcal{B}) \cdot P(T_i=t|Z_i=z_k,X_i)]}{\sum_k \alpha_k^{[t]} \cdot \mathbbm{E}[P(T_i=t|Z_i=z_k,X_i)]}=\frac{\sum_k \alpha^{[t']}_z \cdot \mathbbm{E}[\mathbbm{1}(X_i \in \mathcal{B}) \cdot P(T_i=t'|Z_i=z_k,X_i)]}{\sum_k \alpha_k^{[t']} \cdot \mathbbm{E}[P(T_i=t'|Z_i=z_k,X_i)]}$$
for any $t,t' \in \psi$. Note that this restriction is trivially satisfied for the binary collection that isolates compliers in the binary instrument, binary treatment LATE model.

\subsection{Details on empirical estimates including strata covariates} \label{sec:covsempirical}
Consider a binary combination $(t,\alpha)$ for a given treatment $t$, with associated function $c$. As shown in Appendix \ref{app:covs}, when Equation \ref{eq:independence} holds conditional on covariates $X_i$ we have:
\begin{align}
	\mathbbm{E}[Y_i(t)|c(G_i)=1] &= \frac{\sum_{k=1}^K\alpha_{k}\cdot \mathbbm{E}\left[\mathbbm{E}\left[Y_i\cdot D^{[t]}_i|Z_i=z_k,X_i\right]\right]}{P(c(G_i)=1)} \nonumber\\
	&\hspace{1in}=\frac{ \mathbbm{E}\left[\sum_{k=1}^K\alpha_{k}\cdot\mathbbm{E}\left[Y_i\cdot D^{[t]}_i|Z_i=z_k,X_i\right]\right]}{P(c(G_i)=1)} \label{eq:ytcov}
\end{align}
where 
\begin{equation} \label{eq:pccov}
	P(c(G_i)=1)=\sum_{k=1}^K\alpha_{k}\cdot \mathbbm{E}\left[\mathbbm{E}[D^{[t]}_i|Z_i=z_k,X_i]\right] =  \mathbbm{E}\left[\sum_{k=1}^K\alpha_{k}\cdot\mathbbm{E}[D^{[t]}_i|Z_i=z_k,X_i]\right]
\end{equation}
In the empirical application of \citet{depression}, randomization is performed within nine strata, which represents a discrete $X_i$ taking on nine values. To simplify estimation, I assume that the expectations $\mathbbm{E}[Y_i \cdot D_i^{[t]}|Z_i,X_i]$ and $\mathbbm{E}[D_i^{[t]}|Z_i,X_i]$ additively separable in $Z_i$ and $X_i$:
\begin{equation} \label{eq:ydz}
	\mathbbm{E}[Y_i \cdot D_i^{[t]}|Z_i,X_i] = \beta^{[t]}_{\textrm{both}} \cdot \mathbbm{1}(Z_i=\textrm{both}) + \beta^{[t]}_A \cdot \mathbbm{1}(Z_i=\textrm{just A}) + \beta^{[t]}_B \cdot \mathbbm{1}(Z_i=\textrm{just B}) + \sum_{s=1}^9 \lambda^{[t]}_s \cdot \mathbbm{1}(X_i=s) 
\end{equation}
and
\begin{equation} \label{eq:dz}
	\mathbbm{E}[D_i^{[t]}|Z_i,X_i] = \gamma^{[t]}_{\textrm{both}} \cdot \mathbbm{1}(Z_i=\textrm{both}) + \gamma^{[t]}_A \cdot \mathbbm{1}(Z_i=\textrm{just A}) + \gamma^{[t]}_B \cdot \mathbbm{1}(Z_i=\textrm{just B}) + \sum_{s=1}^9 \rho^{[t]}_s \cdot \mathbbm{1}(X_i=s) 
\end{equation}
i.e. linear regression equations with a full set of strata fixed effects (with none omitted) and instead omitting a dummy variable for $\mathbbm{1}(Z_i=\textrm{neither})$.

The four estimates of $p:=P(G_i=\textrm{complier})$ based on the choice model $\mathcal{G}^{sep}$ given in \eqref{eq:fourps} then become, using Eqs \eqref{eq:pccov} and \eqref{eq:dz}:
\begin{align}
	p &= \left\{\gamma_{\textrm{both}}^{[C]} + \sum_{s=1}^9 \rho^{[C]}_s \cdot P(X_i=s)\right\}  = \left\{\gamma_A^{[A]} - \gamma_{\textrm{both}}^{[A]}\right\}= \left\{\gamma_A^{[A]} - \gamma_{\textrm{both}}^{[A]}\right\}= \left\{\gamma_{\textrm{both}}^{[0]}- \gamma_A^{[0]} - \gamma_B^{[0]}\right\}  \label{eq:overid} 
\end{align}
Treatment effect estimates are then based on the following expressions using \eqref{eq:ydz}-\eqref{eq:dz}:
\begin{align}
	&\mathbbm{E}[Y_i(C)|i \textrm{ is  complier}] = \frac{\beta_{\textrm{both}}^{[C]} + \sum_{s=1}^9 \lambda^{[C]}_s \cdot P(X_i=s)}{\gamma_{\textrm{both}}^{[C]} + \sum_{s=1}^9 \rho^{[C]}_s \cdot P(X_i=s)}, \quad \mathbbm{E}[Y_i(A)|i \textrm{ is  complier}] = \frac{\beta_A^{[A]} - \beta_{\textrm{both}}^{[A]}}{\gamma_A^{[A]} - \gamma_{\textrm{both}}^{[A]}} \nonumber\\
	&\mathbbm{E}[Y_i(B)|i \textrm{ is  complier}] = \frac{\beta_B^{[B]} - \beta_{\textrm{both}}^{[B]}}{\gamma_B^{[B]} - \gamma_{\textrm{both}}^{[B]}}, \quad \quad \quad  \mathbbm{E}[Y_i(0)|i \textrm{ is  complier}] = \frac{\beta_{\textrm{both}}^{[0]}- \beta_A^{[0]} - \beta_B^{[0]}}{\gamma_{\textrm{both}}^{[0]}- \gamma_A^{[0]} - \gamma_B^{[0]}} \label{eq:muest}
\end{align}
and the local average interaction effect among compliers $LAIE$ is estimated accordingly. Some involved algebra shows that the expressions in \eqref{eq:muest} recover the results for complier average treatment effects in Theorem 1 of \citet{blackwell2017}, given one-sided noncompliance.

\subsection{GMM estimation} \label{sec:gmm}
Note that given the overidentification of $p:=P(G_i=\textrm{complier})$, any of the local counterfactual means $\eqref{eq:muest}$ could be estimated by swapping out an alternative estimate of $p$ in the denominator. In principle, we can increase efficiency by estimating treatment effects as well as $LAIE$ while imposing Eq. \eqref{eq:overid}, in a generalized method of moments (GMM) estimation approach. Column (4) of Table \ref{table:interaction} implements this. Given the logic of Corollary \ref{corr:itt}, GMM estimation of $LAIE$ combines the ITT regression \eqref{eq:ittinteractioncovs} with the first-stage regressions \eqref{eq:dz}, and imposing \eqref{eq:overid} as additional moments. For the treatment effect estimates $\mathbbm{E}[Y_i(t)-Y_i(0)|i \textrm{ is  complier}]$ for $t \in \{0,A,B\}$, GMM estimation combines regressions \eqref{eq:ydz} for treatments $t$ and $0$ with the first-stage regressions and \eqref{eq:overid}. All GMM estimates use the two-step GMM estimator, starting from an initial identity weight-matrix, and requesting a cluster robust final weight-matrix and standard errors.

\subsection{Deriving the expression  $\theta^{ITT}/p$ for local average interaction effect} \label{sec:ittexpression}
Consider first the case with no covariates. We have using Eqs. \eqref{idresult} and \eqref{eq:bcs}:
\begin{align*}
	&LAIE=\mathbbm{E}[Y_i(C)|c(G_i)=1]-\mathbbm{E}[Y_i(A)|c(G_i)=1]-\mathbbm{E}[Y_i(B)|c(G_i)=1]+\mathbbm{E}[Y_i(0)|c(G_i)=1]\\
	&\quad \quad =\frac{\mathbbm{E}[Y_i\cdot D_i^{[C]}|Z_i=\textrm{both}]}{\mathbbm{E}[D_i^{[C]}|Z_i=\textrm{both}]}-\frac{\mathbbm{E}[Y_i\cdot D_i^{[A]}|Z_i=\textrm{just A}]-\mathbbm{E}[Y_i\cdot D_i^{[A]}|Z_i=\textrm{both}]}{\mathbbm{E}[D_i^{[A]}|Z_i=\textrm{just A}]-\mathbbm{E}[D_i^{[A]}|Z_i=\textrm{both}]}\\
	&\quad \quad -\frac{\mathbbm{E}[Y_i\cdot D_i^{[B]}|Z_i=\textrm{just A}]-\mathbbm{E}[Y_i\cdot D_i^{[B]}|Z_i=\textrm{both}]}{\mathbbm{E}[D_i^{[B]}|Z_i=\textrm{just B}]-\mathbbm{E}[D_i^{[B]}|Z_i=\textrm{both}]}\\
	&+\frac{\mathbbm{E}[Y_i\cdot D_i^{[0]}|Z_i=\textrm{both}]-\mathbbm{E}[Y_i\cdot D_i^{[0]}|Z_i=\textrm{just A}]-\mathbbm{E}[Y_i\cdot D_i^{[0]}|Z_i=\textrm{just B}]+\mathbbm{E}[Y_i\cdot D_i^{[0]}|Z_i=\textrm{neither}]}{\mathbbm{E}[D_i^{[0]}|Z_i=\textrm{both}]-\mathbbm{E}[D_i^{[0]}|Z_i=\textrm{just A}]-\mathbbm{E}[D_i^{[0]}|Z_i=\textrm{just B}]-\mathbbm{E}[D_i^{[0]}|Z_i=\textrm{neither}]}\\
	&=\frac{1}{p}\cdot \left\{\mathbbm{E}[Y_i\cdot D_i^{[C]}|Z_i=\textrm{both}]-\mathbbm{E}[Y_i\cdot D_i^{[A]}|Z_i=\textrm{just A}]+\mathbbm{E}[Y_i\cdot D_i^{[A]}|Z_i=\textrm{both}]\right. \nonumber \\
	& \quad \quad \quad \left. -\mathbbm{E}[Y_i\cdot D_i^{[B]}|Z_i=\textrm{just B}]+\mathbbm{E}[Y_i\cdot D_i^{[B]}|Z_i=\textrm{both}] + \mathbbm{E}[Y_i\cdot D_i^{[0]}|Z_i=\textrm{both}] \right. \nonumber \\
	& \quad \quad \quad \left. -\mathbbm{E}[Y_i\cdot D_i^{[0]}|Z_i=\textrm{just A}]-\mathbbm{E}[Y_i\cdot D_i^{[0]}|Z_i=\textrm{just B}]+\mathbbm{E}[Y_i\cdot D_i^{[0]}|Z_i=\textrm{neither}] \right\} \nonumber \\
	&=\frac{1}{p}\cdot \left\{\mathbbm{E}[Y_i\cdot (D_i^{[0]}+D_i^{[A]}+D_i^{[B]}+D_i^{[C]})|Z_i=\textrm{both}]-\mathbbm{E}[Y_i\cdot (D_i^{[0]}+D_i^{[A]})|Z_i=\textrm{just A}]\right. \nonumber \\
	& \quad \quad \quad \left. -\mathbbm{E}[Y_i\cdot (D_i^{[0]}+D_i^{[B]})|Z_i=\textrm{just B}]+\mathbbm{E}[Y_i\cdot D_i^{[0]}|Z_i=\textrm{neither}] \right\} \nonumber \\
	&=\frac{1}{p}\cdot \left\{\mathbbm{E}[Y_i|Z_i=\textrm{both}]-\mathbbm{E}[Y_i|Z_i=\textrm{just A}]-\mathbbm{E}[Y_i|Z_i=\textrm{just B}]+\mathbbm{E}[Y_i|Z_i=\textrm{neither}] \right\} = \frac{\theta^{ITT}}{p}
\end{align*}
where $\theta^{ITT}:=\gamma_3-\gamma_1-\gamma_2$ from the ITT regression Eq. \eqref{eq:ittinteraction}. In the above I have used Eq. \eqref{eq:fourps} in the second step, then combined terms, and finally using that $(D_i^{[0]}+D_i^{[A]}+D_i^{[B]}+D_i^{[C]})=1$, that $(D_i^{[0]}+D_i^{[A]})$ conditional on $Z_i=\textrm{just A}$ (given one-sided noncompliance), that $(D_i^{[0]}+D_i^{[B]})$ conditional on $Z_i=\textrm{just B}$, and that $D_i^{[0]}$ conditional on $Z_i=\textrm{neither}$.

With covariates $X_i$, the standard intent-to-treat regression generalizes \eqref{eq:ittinteraction} by adding a linear function in the covariates that includes a constant:
\begin{equation} \label{eq:ittinteractioncovs}
	Y_i = \gamma_1 \cdot \mathbbm{1}(Z_i=A)+\gamma_2 \cdot \mathbbm{1}(Z_i=B) +\gamma_3 \cdot \mathbbm{1}(Z_i=C) + \pi'X_i + \nu_i
\end{equation}
In this case, $\theta^{ITT}:=\gamma_3-\gamma_1-\gamma_2$ is equal to
$$\mathbbm{E}[Y_i|Z_i=\textrm{both},X_i]-\mathbbm{E}[Y_i|Z_i=\textrm{just A},X_i]-\mathbbm{E}[Y_i|Z_i=\textrm{just B},X_i]+\mathbbm{E}[Y_i|Z_i=\textrm{neither},X_i]$$
with probability one (i.e. for all $X_i$). The same steps as above show that, using Eqs. \eqref{eq:ytcov} and \eqref{eq:pccov}:
\begin{align*}
	&LAIE=\frac{1}{p}\cdot \mathbbm{E}\left\{\mathbbm{E}[Y_i\cdot D_i^{[C]}|Z_i=\textrm{both},X_i]-\mathbbm{E}[Y_i\cdot D_i^{[A]}|Z_i=\textrm{just A},X_i]\right.\\
		&\quad \quad \left.+\mathbbm{E}[Y_i\cdot D_i^{[A]}|Z_i=\textrm{both},X_i]-\mathbbm{E}[Y_i\cdot D_i^{[B]}|Z_i=\textrm{just A},X_i]-\mathbbm{E}[Y_i\cdot D_i^{[B]}|Z_i=\textrm{both},X_i] \right.\\
		&\quad \quad \left.-\mathbbm{E}[Y_i\cdot D_i^{[0]}|Z_i=\textrm{both},X_i]+\mathbbm{E}[Y_i\cdot D_i^{[0]}|Z_i=\textrm{just A},X_i]+\mathbbm{E}[Y_i\cdot D_i^{[0]}|Z_i=\textrm{just B},X_i]\right.\\
		&\quad \quad \left.-\mathbbm{E}[Y_i\cdot D_i^{[0]}|Z_i=\textrm{neither},X_i]\right\} = \frac{1}{p}\cdot \mathbbm{E}[\gamma_3 - \gamma_1 - \gamma_2] = \frac{\theta^{ITT}}{p}
\end{align*}
provided that Eq. \eqref{eq:ittinteractioncovs} is correctly specified for the conditional mean $\mathbbm{E}[Y_i|Z_i,X_i]$.

\subsection{Setting up the linear program to test for offending types} \label{sec:lp}
This section considers the identification of bounds on the proportion of the population that belongs to a certain set of response types $\mathcal{G}^*$, within a larger selection model $\mathcal{G}$. This method is implemented in Section \ref{sec:empirical} to discuss whether first stage selection information is consistent with the choice model $\mathcal{G} \subseteq \mathcal{G}^{sep}$, under the maintained assumption that $\mathcal{G} \subseteq \mathcal{G}^{WARP}$. Thus for the remainder of this section we assume that $\mathcal{G} = \mathcal{G}^{WARP}$ defined in Section \ref{sec:empirical}. This section also ignores the randomization strata $X_i$, which is valid for testing ``first-stage'' restrictions if the response-type distribution is common across strata.

For any set of response types $\mathcal{G}^* \subseteq \mathcal{G}^{WARP}$, we can partially identify $P(G_i \in \mathcal{G}^*)$ as $P(G_i \in \mathcal{G}^*) \in [LB^*,UB^*]$ where
\begin{equation} \label{eq:lplb}
	LB^* = \min\limits_{x \in \mathbbm{R}^9} w'x \quad  \textrm{ subject to } \mathcal{A}x=\beta \textrm{ and } x \ge 0
\end{equation}
\begin{equation} \label{eq:lpub}
	UB^* = \max\limits_{x \in \mathbbm{R}^9} w'x \quad  \textrm{ subject to } \mathcal{A}x=\beta \textrm{ and } x \ge 0
\end{equation}
with $w$ a $9 \times 1$ vector (where $|\mathcal{G}^{WARP}|=9$) with components $w_g = \mathbbm{1}(g \in \mathcal{G}^*)$, and the constraint $x \ge 0$ is read as all components of the vector $x$ must be weakly positive. If $LB^*$ were found to be strictly positive with $\mathcal{G}^*$ chosen to be $\mathcal{G}^{WARP}-\mathcal{G}^{sep}$, this would constitute evidence that the restriction $\mathcal{G} \subseteq \mathcal{G}^{sep}$ is not satisfied, assuming that $\mathcal{G} \subseteq \mathcal{G}^{WARP}$.

The $16 \times 9$ matrix $\mathcal{A}$ can be obtained from the matrices $A^{[t]}$, and the $16$-component vector $\beta$ estimated from the data:
$$\mathcal{A} = \begin{bmatrix} A^{[0]} \\ A^{[A]} \\ A^{[B]} \\ A^{[both]} \end{bmatrix} = 
\begin{bmatrix} 
	1 & 1 & 1 & 1& {1} & {1} & {1} & {1} & {1}\\
	1 & 0& 0 & 1 & {1} & {0} & {1} & {0}& {0}\\
	1 & 0 & 1 & 0& {1} & {1} & {0}& {0} & {0}\\
	1 & 0 & 0 & 0& {0} & {0} & {0} & {0} & {0}\\
	\hline
	0 & 0 & 0 & 0& {0} & {0} & {0} & {0} & {0}\\
	0 &1& 1 & 0 & {0} & {1} & {0} & {1}& {1}\\
	0 & 0 & 0 & 0& {0} & {0} & {0}& {0} & {0}\\
	0 & 0 & 1 & 0& {0} & {0} & {0} & {1} & {0}\\
	\hline
	0 & 0 & 0 & 0& {0} & {0} & {0} & {0} & {0}\\
	0 &0& 0 & 0 & {0} & {0} & {0} & {0}& {0}\\
	0 & 1 & 0 & 1& {0} & {0} & {1}& {1} & {1}\\
	0 & 0 & 0 & 1& {0} & {0} & {0} & {0} & {1}\\
	\hline
	0 & 0 & 0 & 0& {0} & {0} & {0} & {0} & {0}\\
	0 & 0 & 0 & 0& {0} & {0} & {0} & {0} & {0}\\
	0 & 0 & 0 & 0& {0} & {0} & {0} & {0} & {0}\\
	0 & 1 & 0 & 0& {1} & {1} & {1} & {0} & {0}\\
\end{bmatrix} \quad \quad \quad \quad \beta = \begin{pmatrix}
	P(T_i=0|Z_i=\textrm{neither}) \\
	P(T_i=0|Z_i=\textrm{just A}) \\
	P(T_i=0|Z_i=\textrm{just B}) \\
	P(T_i=0|Z_i=\textrm{both}) \\ 
	\hline
	P(T_i=A|Z_i=\textrm{neither}) \\
	P(T_i=A|Z_i=\textrm{just A}) \\
	P(T_i=A|Z_i=\textrm{just B}) \\
	P(T_i=A|Z_i=\textrm{both}) \\
	\hline
	P(T_i=B|Z_i=\textrm{neither}) \\
	P(T_i=B|Z_i=\textrm{just A}) \\
	P(T_i=B|Z_i=\textrm{just B}) \\
	P(T_i=B|Z_i=\textrm{both}) \\
	\hline
	P(T_i=C|Z_i=\textrm{neither}) \\
	P(T_i=C|Z_i=\textrm{just A}) \\
	P(T_i=C|Z_i=\textrm{just B}) \\
	P(T_i=C|Z_i=\textrm{both})
\end{pmatrix}$$
Point estimates of the bounds $LB^*$ and $UB^*$ are readily obtained by solving the linear programs \eqref{eq:lplb} and \eqref{eq:lpub} with the sample estimator $\hat{\beta}$.

Given sampling error in $\hat{\beta}$ however, we would like to construct a valid confidence interval for the partially identified parameter $P(G_i \in \mathcal{G}^*) = w'x$ given its representation as a solution to the linear program $Ax=\beta, x \ge 0$. This problem is considered by \citet{fsst}, and I use the \texttt{fsst} command in the \texttt{lpinfer} package in $\texttt{R}$ to generate a confidence interval for the parameters $P(G_i \in \mathcal{G}^*)$ considered in the main text. The required inputs for $\texttt{fsst}$ are the matrix $\mathcal{A}$, the vector $\beta$ (specified as a function of the data, as the FSST procedure makes use of estimates of $\beta$ in bootstrap samples).

\subsubsection{Results for \citet{depression}}
In addition to the results for $\mathcal{G}^* = \mathcal{G}^{WARP}-\mathcal{G}^{sep}$ reported in the main text, I here present some further estimates. A 90\% confidence interval using the method of \citet*{fangsantos} (FSST) does rule out zero but is otherwise similar at $[0, 0.8281]$ (as opposed to $[0, 0.8297]$ for the 95\% interval). However, the p-value for the null hypothesis that  $P(G_i \in \mathcal{G}^{WARP}-\mathcal{G}^{sep})=0$ puts it just on the margin of being included in the 90\% confidence interval. The \texttt{lpinfer} package in \texttt{R} also allows for statistical inference on solutions to problems \eqref{eq:lplb} and \eqref{eq:lpub} using methods introduced by \citet{romanoshaikh} and \citet{chorussell}. The method of \citet{chorussell} yields $[0.02, 0.86]$ as a 95\% confidence interval. The method of \citet{romanoshaikh} yields $[0, 0.84]$ as a 95\% confidence interval. Confidence intervals for the share of the favor-B type (which the point estimates suggest may be the largest offending type) only are similar to the confidence intervals for all offending types in $\mathcal{G}^{WARP}-\mathcal{G}^{sep}$. Overall, the results offer little evidence against the assumption that $\mathcal{G}^{sep}$ represents the true choice model, and any violations that cannot be ruled out appear to be minor. This supports the conclusion that complementarities between the two treatments are identified among compliers without restricting outcomes, in line with Proposition \ref{prop:interactionid}.

In the above calculations, I do not condition on the nine strata used by \citet{depression} for randomization. This could be implemented by expanding $\mathcal{A}$ and $\beta$ to have $16 \times 9$ rows each, rather than $16$. However the above results are valid if the response-type distribution is common across strata, and under this assumption allow for a much more efficient use of the available sample.

\subsection{Financial incentives and support for academic achievement} \label{sec:star}
\citet*{angristlangoreopoulos} (ALO) report results from an the Project STAR intervention that cross-randomized academic support and financial incentives on academic achievement among first-year students at a large Canadian university. In this setting, I let treatment A represent the Student Support Program (SSP): a program which gave students access to peer advisers and supplemental instruction. I let treatment B represent the Student Fellowship Program (SFP), which made students eligible for merit scholarships based on good performance during the first year courses.

The STAR intervention randomized 250 students into an arm that was offered access to the SSP only ($Z_i=\textrm{just A}$), another 250 students to be offered access to the SFP only ($Z_i=\textrm{just B}$), and a third group of 150 students that was offered access to both programs ($Z_i=\textrm{both}$). A control group of 1,006 students were offered neither ($Z_i=\textrm{neither}$).

I use the replication data from ALO, which tracks program takeup as well as student performance among those students included in Project STAR. Treatment uptake for treatment A (SSP) is observable, and I define it as having attended a facilitated study groups or having met with an advisor. For treatment B, I follow ALO in defining treatment takeup as having responded to their invitation to sign up for the assigned treatment. ALO define compliance with respect to SSP (treatment A) similarly as having given their consent by simply signing up for their assigned treatment. With this definition however, no individual offered both treatments could opt for one treatment alone.\footnote{Given WARP, this would then limit the choice model to the groups n.t., complier, only both, A+, and B+ from Table \ref{table:cross}. This group yields a rather uninteresting selection model when intersected with $\mathcal{G}^{sep}$ (leaving just never takers and compliers) in order to afford outcome non-restrictive identification of complementarity between the treatments. However, defining compliance as ALO do also rejects the overidentification restriction \eqref{eq:overid}, with a chi-squared statistic (with 2 degrees of freedom) of 28.66.}  Since further information is available on whether individuals actually take part in SSP activities, I make use of this additional information.

I test the overidentification restriction of $\mathcal{G} \subseteq \mathcal{G}^{sep}$ in this setting as described in Eq. \eqref{eq:muest}, however note that in the present setting there are no randomization strata that need to be controlled for. The four point estimates for $p=P(i \textrm{ is complier})$ are 41\%, 21\%, 51\%, and 34\%, respectively. A test for equality of the four estimates returns a chi-squared statistic (with 2 degrees of freedom) of 26.16, a p-value of $0.0000$.\footnote{Inferential methods for the linear program described in \ref{sec:lp} at the 95\% level suggest that at least about 15\% of the population belongs to groups in $\mathcal{G}^{WARP}-\mathcal{G}^{sel}$, provided that $WARP$ holds.} Thus in contrast to the application of \citet{depression}, we find in the ALO context that we can clearly reject the choice model $\mathcal{G} \subseteq \mathcal{G}^{sep}$ that is required to identify complementarity between the two treatment effects in an outcome-agnostic manner.

		\section{Recovering existing identification results as binary combinations and collections} \label{sec:examples}

The notions of binary combinations and binary collections reveals a common structure among several existing IV point identification results. 

\subsection{Example 1: LATE monotonicity and marginal treatment effects}
Here I extend my analysis of the monotonicity assumption of \citet{Imbens2018} to cases with more than two instrument values. Treatment remains binary $\mathcal{T} = \{0,1\}$. Since $D^{[0]}_{i}(z) = 1-D^{[1]}_{i}(z)$, we can focus on the single treatment indicator $D_i(z):=D^{[1]}_i(z)$. \citet{Imbens2018} assume that:
\begin{assumption*}[IAM] For all $z, z' \in \mathcal{Z}$: $D_i(z) \ge D_i(z')$ for all $i$ or $D_i(z) \le D_i(z')$ all $i$.
\end{assumption*}
\noindent Suppose $z, z'$ are a pair such that the former case of assumption IAM obtains, and define a binary combination with $K=2$, $z_{1}=z'$, $\alpha_1 = 1$, $z_{2}=z$, $\alpha_2 = -1$. Then Eq. (\ref{idresult}) from the main paper yileds
\begin{equation}
	\mathbbm{E}[Y_i(1)|D_i(z')>D_i(z)] = \frac{\mathbbm{E}\left[Y_i\cdot D_{i}|Z_i=z'\right]-\mathbbm{E}\left[Y_i\cdot D_{i}|Z_i=z\right]}{ \mathbbm{E}\left[D_{i}|Z_i=z'\right]-\mathbbm{E}\left[D_{i}|Z_i=z\right]} \label{zzprime}
\end{equation}	
and similarly
\begin{equation*}
	\mathbbm{E}[Y_i(0)|D_i(z')>D_i(z)] = \frac{\mathbbm{E}\left[Y_i\cdot (1-D_{i})|Z_i=z'\right]-\mathbbm{E}\left[Y_i\cdot (1-D_{i})|Z_i=z\right]}{ \mathbbm{E}\left[(1-D_{i})|Z_i=z'\right]-\mathbbm{E}\left[(1-D_{i})|Z_i=z\right]}
\end{equation*}
Combining, we have that $\mathbbm{E}[Y_i(1)-Y_i(0)|D_i(z')>D_i(z)] = \frac{\mathbbm{E}\left[Y_i|Z_i=z'\right]-\mathbbm{E}\left[Y_i|Z_i=z\right]}{ \mathbbm{E}\left[D_{i}|Z_i=z'\right]-\mathbbm{E}\left[D_{i}|Z_i=z\right]}$, which is Theorem 1 of \citet{Imbens2018}.

Suppose that $\mathcal{Z}$ is continuous and for all $u \in [0,1]$ there exists a $z \in \mathcal{Z}$ such that $P(z):=P(D_i=1|Z_i=z) = u$. Let $U_i = \inf_{z\in \mathcal{Z}}\{P(z): D_i(z) = 1\}$. Given IAM, $U_i$ plays the role of $G_i$, indicating the ``first'' instrument value (as ordered by the propensity score function $P(z)$) at which $i$ would take treatment. 
For any given $u$, let $z$ be a point in $\mathcal{Z}$ such that $P(z)=u$ and take a sequence $z_j$ in $\mathcal{Z}$ such that $z_j \rightarrow z$ as $j \rightarrow \infty$. Taking the limit of Eq. (\ref{zzprime}) we have that:
$$\mathbbm{E}[Y_i(1)|U_i=u] = \lim_{j \rightarrow \infty} \frac{\mathbbm{E}\left[Y_i\cdot D_{i}|Z_i=z_j\right]-\mathbbm{E}\left[Y_i\cdot D_{i}|Z_i=z\right]}{ \mathbbm{E}\left[D_{i}|Z_i=z_j\right]-\mathbbm{E}\left[D_{i}|Z_i=z\right]} = \frac{d}{du}\mathbbm{E}\left[Y_i\cdot D_{i}|P(Z_i)=u\right]$$
and similarly for $Y_i(0)$, allowing us to identify marginal treatment effects (cf. \citet{Heckman2004}).

\subsection{Example 2: Vector monotonicity (\citealt{goff2024vector})}

\citet{goff2024vector} considers a binary treatment and finite $\mathcal{Z} \subseteq \mathcal{Z}_1 \times \mathcal{Z}_2 \times \dots \mathcal{Z}_J$, and the following monotonicity assumption.
\begin{assumption*}[2 (vector monotonicity)]
	There exists an ordering $\ge_j$ on $\mathcal{Z}_j$ for each $j \in \{1\dots J\}$ such that for all $z,z' \in \mathcal{Z}$, if $z \ge z'$ component-wise according to the $\{\ge_j\}$, then $D_i(z) \ge D_i(z')$ for all $i$.
\end{assumption*}

\noindent Theorem 1 of \citet{goff2024vector} shows that average counterfactual means are identified under vector monotonicity for groups defined by the condition $c(G_i,Z_i)=1$, where $c$ satisfies a condition called ``Property M'' and $Z_i$ has full rectangular support. His Proposition 6 shows that Property M is equivalent to $c(G_i,Z_i) = \sum_{k=1}^K\alpha_{k}\cdot D^{[t]}_i(z_k(Z_i))$ where $K$ is an even number no greater than $J/2$ and $\alpha_k = (-1)^k$ with $z_{k+1}(z) \ge z_{k}(z)$ component-wise according to the orders $\ge_j$, for all $k$. In what follows I for simplicity focus on the special case of target parameters in which $c$ depends on $G_i$ only, and not additionally on $Z_i$. See Appendix \ref{app:deponz} for a discussion of other parameters.

In the case of two binary instruments $J=2$, there are six selection types compatible with vector monotonicity, with names introduced by \citealt{Mogstad2020a}:

\begin{table}[h!]
	\centering
	\begin{tabular}{c|ccccccc}
		&$Z_1$ comp. & $Z_2$ comp. & eager-comp. & reluctant-comp. & n.t. & a.t. \\
		\hline
		$\mathbf{z=(0,0)'}$ & 0 & 0 & 0 & 0 & 0 & 1\\
		$\mathbf{z=(1,0)'}$ & 1 & 0 & 1 & 0 & 0 & 1\\
		$\mathbf{z=(0,1)'}$ & 0 & 1 & 1 & 0 & 0 & 1\\
		$\mathbf{z=(1,1)'}$ & 1 & 1 & 1 & 1 & 0 & 1\\
		\hline
	\end{tabular}
\end{table}
\noindent With this table defining matrix $A^{[1]}$, some algebra shows that for example $c=(1, 0, 0, 1, 0, 0)'$ occurs in the rowspace of both $A^{[1]}$ and $A^{[0]}$. One way to see this is to work out the row reduced echelon forms of $A^{[1]}$ and $A^{[0]}$, which preserve their row-spaces and are:
$$rref(A^{[1]}) = \begin{bmatrix}
		1 & 0 & 0 & 1 & 0 & 0\\
		0 & 1 & 0 & 1 & 0 & 0\\
		0 & 0 & 1 & -1 & 0 & 0\\
		0 & 0 & 0 & 0 & 1& 0
	\end{bmatrix} \quad \quad rref(A^{[0]}) = \begin{bmatrix}
		1 & 0 & 0 & 1 & 0 & 0\\
		0 & 1 & 0 & 1 & 0 & 0\\
		0 & 0 & 1 & -1 & 0 & 0\\
		0 & 0 & 0 & 0 & 0& 1
	\end{bmatrix}$$
From the first row of each of the reduced echelon forms, we can see immediately that e.g. the average treatment effect among $Z_1$ and reluctant compliers is outcome-nonrestrictive identified. Adding all four rows we see that the average treatment effect among all of the four compliers types is outcome-nonrestrictive identified, what \citet{goff2024vector} calls the ``all-compliers LATE''. \citet{goff2024vector} shows how these and similar point identification results generalize to any number of instruments under vector monotonicity. 

\subsection{Example 3: Unordered Monotonicity, \citet{Heckman2018}} \label{sec:HP}
\citet{Heckman2018} (HP) consider a finite $\mathcal{Z}$ and assume what they call \textit{unordered monotonicity} (UM) for a multi-valued treatment:
\begin{assumption*}[UM]
	For any $t \in \mathcal{T}$ and $z, z' \in \mathcal{Z}$, either $D^{[t]}_i(z) \ge D^{[t]}_i(z')$ for all $i$ or $D^{[t]}_i(z') \ge D^{[t]}_i(z)$ for all $i$.
\end{assumption*}
\noindent Given UM, let us fix a $t \in \mathcal{T}$ and label the points in $\mathcal{Z}$ as $z_m$ for $m=1,2 \dots |\mathcal{Z}|$, where the points are labeled in increasing order of the propensity score for treatment $t$: $P_t(z_{m+1}) \ge P_t(Z_m)$, where we define $P_t(z) = \mathbbm{E}[D^{[t]}_i|Z_i=z]$. The order is not important in the case of ties. Let $\Sigma_{ti}=|\{z \in \mathcal{Z}:T_i(z)=t\}|$ be the number of $z\in \mathcal{Z}$ for which $i$ takes treatment $t$. Note that $\Sigma_{ti}$ is exactly equal to $|\mathcal{Z}|-m+1$ for the smallest $m$ such that $D^{[t]}_i(z_m)=1$. Thus we have a binary combination for any treatment $t$ and value $s \in \{0,1, \dots |\mathcal{Z}|\}$: in particular $D_i^{[t]}(z_{m}) - D_i^{[t]}(z_{m-1}) \in \{0,1\}$ for all $i$, and is equal to $i$ for those units having $\Sigma_{ti}=s$.

It then follows immediately from Eq. (\ref{idresult}), as in the IAM case with a binary treatment (cf. Eq. \ref{zzprime}), that $E[Y_i(t)|\Sigma_{ti}=s]$ is identified for any $s = 1 \dots |\mathcal{Z}|$ as:
\begin{align}
	\mathbbm{E}[Y_i(t)|\Sigma_{ti}=s] = \begin{cases}
		\frac{\mathbbm{E}\left[Y_i\cdot D^{[t]}_i|Z_i=z_m\right]-\mathbbm{E}\left[Y_i\cdot D^{[t]}_i|Z_i=z_{m-1}\right]}{ \mathbbm{E}\left[D^{[t]}_i|Z_i=z_m\right]-\mathbbm{E}\left[D^{[t]}_i|Z_i=z_{m-1}\right]} & \textrm{ if } s < |\mathcal{Z}|\\
		\frac{\mathbbm{E}\left[Y_i\cdot D^{[t]}_i|Z_i=z_1\right]}{\mathbbm{E}\left[D^{[t]}_i|Z_i=z_1\right]}& \textrm{ if } s = |\mathcal{Z}|\end{cases} \label{UMresult}
\end{align}
where $m = |\mathcal{Z}| - s + 1$. 

This provides a simple proof of HP's Theorem T-6, which shows that $E[Y_i(t)|\Sigma_{ti}=s]$ is identified. HP show that
\begin{equation} \label{UMresultHP}
	E[Y_i(t)|\Sigma_{ti}=s] = \frac{c'{A^{[t]}}^+ Q_Z}{c'{A^{[t]}}^+ P_Z}
\end{equation}
where $B^+$ is the Moore-Penrose pseudo-inverse of a matrix $B$, $Q_Z$ is a vector of $\mathbbm{E}[Y_iD^{[t]}_i|Z_i=z]$ across $z$ and $P_Z$ is a vector of $\mathbbm{E}[D^{[t]}_i|Z_i=z]$ across $z$. Here $c$ corresponds to our parameter of interest (indexed by the pair $(t,s)$), with an entry of one if $\Sigma_{tg}=s$ for that selection type and zero otherwise.

To see the equivalence between this result and (\ref{UMresult}), we can take advantage of the structure of $A^{[t]}$ under UM to replace it with a smaller matrix whose inverse is very simple. Note that any two selection types $g$ sharing a value of $\Sigma_{ti}$ will have identical entries in $c$, and will have identical corresponding columns in the matrix ${A^{[t]}}$. This implies that they will have identical rows in ${A^{[t]}}^+$. We can remove the redundant columns of ${A^{[t]}}$ by indexing columns by values of $\Sigma_{ti}$ rather than by full response vectors $g$, and similarly indexing elements of $c$ by values of $\Sigma_{ti}$. This yields the same vector $c'{A^{[t]}}^+$ as before, up to a scalar factor that counts the number of values of $g$ such that $\Sigma_{ti}=s$. However, this factor cancels out in the numerator and denominator of (\ref{UMresultHP}). With this modification, $A^{[t]}$ is now a $|\mathcal{Z}|$ by $|\mathcal{Z}|+1$ matrix and $c$ is now a standard basis vector equal to one in its $s^{th}$ element (and zero elsewhere).

Let us now order the rows of this modified $A^{[t]}$ according to $z_1$, $z_2$, etc, and it's columns in decreasing order of $\Sigma_{ti}$. With this ordering, $A^{[t]}$ is simply a lower triangular matrix of ones, appended to the right by a single column of zeros. It can then be verified that rows $s=2\dots(|\mathcal{Z}|-1)$ of ${A^{[t]}}^+$ are of the form $(0,\dots,-1,1,\dots 0)'$ with $s-2$ zeroes on the left (while the first row is composed of a single $1$ in the first column, and the last row is all zeros).\footnote{E.g. the modified forms with $|\mathcal{Z}|=4$ are ${A^{[t]}} = \begin{pmatrix}
	1 & 0 & 0 & 0 & 0\\
	1 & 1 & 0 & 0 & 0\\
	1 & 1 & 1 & 0 & 0\\
	1 & 1 & 1 & 1 & 0\\
\end{pmatrix}$, ${A^{[t]}}^+ = \begin{pmatrix}
	1 & 0 & 0 & 0\\
	-1 & 1 & 0 & 0\\
	0 & -1 & 1 & 0\\
	0 & 0 & -1 & 1\\
	0 & 0 & 0 & 0
\end{pmatrix}$.} Note that given the definition of $c$, $c'{A^{[t]}}^+$picks out the $s^{th}$ row of ${A^{[t]}}^+$ in (\ref{UMresultHP}), and we have that (\ref{UMresult}) and (\ref{UMresultHP}) are equivalent.

Remark 7.1 of HP observes that given the above, treatment effects can be identified if (in my notation) for some $s,s'$ and $t,t' \in \mathcal{T}$, $\mathbbm{1}(\Sigma_{ti}=s)=\mathbbm{1}(\Sigma_{t'i}=s')$ almost surely, since then we can identifiy $\mathbbm{E}[Y_i(t')-Y_i(t)|\Sigma_{t'i}=s']=\mathbbm{E}[Y_i(t')|\Sigma_{t'i}=s']-\mathbbm{E}[Y_i(t)|\Sigma_{ti}=s]$. The idea of binary collections can be though of as a generalization of this type of result beyond the case of unordered monotonicity.

\subsection{Example 4: \citet{Lee2018a}}
\citet{Lee2018a} (LS) consider a class of models in which unit $i$'s selection type depends upon a $J$-dimensional vector $V_i \in [0,1]^J$ and a vector valued function $Q: \mathcal{Z} \rightarrow \mathcal{Q}$ where $\mathcal{Q} \subseteq \mathbbm{R}^J$. Selection is assumed to follow:
\begin{equation} \label{eq:leesalanieselectionmodel}
	D^{[t]}_i(z) = \sum_{l \subseteq \{1\dots J\}} c^t_{l} \cdot \prod_{j \in l} \mathbbm{1}(V_{ji} \le Q_j(z))
\end{equation}
for some set of coefficients $c^t_l$ defined over the subsets of $\{1\dots J\}$, for each $t \in \mathcal{T}$, and where $V_{ji}$ is the $j^{th}$ component of $V_i$, and $Q_j$ the $j^{th}$ component of $Q$. This model nests the marginal treatment effects (MTE) framework when we have a binary treatment and $J=1$, in which case we may let $Q(z) = \mathbbm{E}[D_i|Z_i=z]$ be the propensity score function.

The second part of LS's Theorem 3.1 shows that under support/regularity conditions:
\begin{equation} \label{LSresult} E[Y_i(t)|V_i=q]= \frac{\frac{\partial^J}{\partial_{q_1}\dots \partial_{q_J}} \mathbbm{E}[Y_i D^{[t]}_i|Q(Z_i)=q]}{\frac{\partial^J}{\partial_{q_1}\dots \partial_{q_J}} \mathbbm{E}[D^{[t]}_i|Q(Z_i)=q]}\end{equation}
Now let's see how this result can also be obtained through Theorem \ref{thm:suff}.
For any vector $q \in \mathbbm{R}^J$, let $S_i(q):=\{j \in \{1 \dots J\}: V_{ji} \le q_j\}$ be the set of indices for which $V_{ji} \le q_j$. Then $D^{[t]}_i(z) = \sum_{l \subseteq S_i(Q(z))} c_l^t$. Note that $D^{[t]}_i(z)$ only depends on $z$ through $Q(z)$. Thus, we could for each $q$ consider an arbitrary value $z \in \mathcal{Z}$ such that $Q(z)=q$, call it $Q^{-1}(q)$, and think of $D^{[t]}_i$ as a function $D^{[t]}_i(Q^{-1}(q))$ of $q$.

Let us consider a binary combination constructed to capture all units such that $V_i$ belongs to a rectangle $(q,q+h_1] \times (q,q+h_2]\dots \times (q,q+h_J]$ in $\mathbbm{R}^J$ for some ``corner'' location $q \in \mathbbm{R}^J$ and widths $h_1 \dots h_J$. For any $s \subseteq \{1 \dots J\}$, let $h_s:=\sum_{j \in s}h_j \mathbf{e}_j$, where $\mathbf{e}_j$ is the $j^{th}$ standard basis vector. This takes the form of a binary combination $(\alpha,t)$ having $K = 2^{J}$ and coefficients $\alpha_k = (-1)^{|s_k|}/\lambda$ for a certain scalar $\lambda$. The corresponding instrument values are $z_k = Q^{-1}(q + h_{s})$ given an arbitrary ordering $s_1 \dots s_K$ on the $K$ distinct subsets of $\{1 \dots J\}$. Below we will verify that the corresponding linear combination of the $D^{[t]}_i(Q^{-1}(q))$ is equal to $c(G_i)$ with probability one, where we let $c(G_i)$ be an indicator for $V_{i} \in (q,q+h_1] \dots \times (q,q+h_J]$. Via Eq. (\ref{idresult}) in the main text, we can thus identify $\mathbbm{E}[Y_i(t)|V_{i} \in (q,q+h_1] \dots \times (q,q+h_J]]$ as:
\begin{align}
	\mathbbm{E}[Y_i(t)|c(G_i)=1] = \frac{\sum_{s \subseteq \{1 \dots J\}}(-1)^{|s|}\cdot \mathbbm{E}\left[Y_i\cdot D^{[t]}_i|Z_i=Q^{-1}(q + h_{s})\right]}{\sum_{s \subseteq \{1 \dots J\}}(-1)^{|s|}\cdot \mathbbm{E}\left[D^{[t]}_i|Z_i=Q^{-1}(q + h_{s})\right]} \label{eq:rectangleid}
\end{align}
The scalar $\lambda$ depends on the selection mechanism (\ref{eq:leesalanieselectionmodel}) and is $\lambda:=\sum_{s \subseteq \{1 \dots J\}} (-1)^{|s|} \sum_{l \subseteq s} c_l^t$.\footnote{We can simplify this expression of $\lambda$ as follows. Note that given \ref{eq:leesalanieselectionmodel}) the coefficients $c_l^t$ must be such that $\sum_{l \subseteq s} c^t_{l} \in \{0,1\}$ for any $s \subseteq \{1 \dots J\}$. Let $S_{t}$ be the collection of $s$ such that it is equal to one. This is the collection of subsets of the thresholds that when crossed correspond to taking treatment $t$. Then $\lambda=\sum_{s \in S_{t}} (-1)^{|s|}$. We can derive an alternative expression for $\lambda$ by making use of the identity that for any $\sum_{f \subseteq S} (-1)^{|f|} = 0$ for any $S \ne \emptyset$. Then: 
	\begin{align*}
		\lambda &= \sum_{l \subseteq \{1 \dots J\}} c_l^t \sum_{s \supseteq l} (-1)^{|s|} = \ \sum_{l \subseteq \{1 \dots J\}} c_l^t \left(\cancel{\sum_{s \subseteq \{1 \dots J\}} (-1)^{|s|}}-\cancel{\sum_{s \subseteq l} (-1)^{|s|}} + (-1)^{|l|}\right) =\sum_{l \subseteq \{1 \dots J\}} (-1)^{|l|} \cdot c_l^t
	\end{align*}}

We now verify that with this notation $c(G_i)=\sum_{k=1}^{2^J} \alpha_k \cdot D^{[t]}_i(z_k)$. That is:
\begin{align} \label{eq:firstci}
	c(G_i)=\frac{1}{\lambda}\sum_{s \subseteq \{1\dots J\}} (-1)^{|s|} \cdot D^{[t]}_i\left(Q^{-1}(q+\sum_{j \in s}h_j\mathbf{e}_j)\right) &= \frac{1}{\lambda}\sum_{s \subseteq \{1\dots J\}} (-1)^{|s|} \sum_{l \subseteq S_i(q+h_S)} c_l^t \nonumber \\
\end{align}
Note that for any $S' \subseteq S$, $S_i(q+h_S') \subseteq S_i(q+h_S)$. Thus $S_i(q+h_{\{1\dots J\}})$ is the ``largest'' $S_i(q+h_{s})$ and $S_i(q)$ is the smallest.  Define: $$A_i = S_i(q+h_{\{1\dots J\}})-S_i(q) = \{j \in \{1 \dots J\}: q < V_{ji} \le q+h_{\{1\dots J\}}\}$$
$A_i$ is simply the set of dimensions in which $V_{ji}$ falls within the rectangle starting at $q$ with widths $h_{\{1 \dots J\}}$. Now comes the crucial step: we'll now show that (\ref{eq:firstci}) is zero for any individual $i$ for which $A_i$ does not contain all of $\{1 \dots J\}$. Indeed, if there were any $j \notin A_i$, each set $S$ in the first summation of (\ref{eq:firstci}) that did not contain $j$ would be canceled out by the set $S \cup j$, because $(-1)^{|S \cup j|} = -(-1)^{|S|}$, while $S_i(q+h_S)=S_i(q+h_{S\cup j})$. Pairing all sets in this way, we see that evaluates to zero unless $A_i = \{1 \dots J\}$. Now, $A_i = \{1 \dots J\}$ implies that $S_i(q)=\emptyset$, and we can now write $c(G_i)$ as:
$$ c(G_i) = \mathbbm{1}(A_i=\{1 \dots J\}) \cdot \frac{1}{\lambda} \cdot \left(\sum_{s \subseteq \{1 \dots J\}} (-1)^{|s|} \sum_{l \subseteq s} c_l^t\right) = \mathbbm{1}(A_i=\{1 \dots J\})$$
observing that we've defined $c$ to be equal to the quantity in parentheses, which depends on the selection model but not on $i$.

LS's Theorem 3.1 considers the $J^{th}$ order derivative
\begin{align*}
	&\frac{\partial^J}{\partial_{q_1}\dots \partial_{q_J}} \mathbbm{E}[Y_i D^{[t]}_i|Q(Z_i)=q]\\
	&\hspace{1in} = \frac{\partial^J}{\partial_{q_2}\dots \partial_{q_J}} \lim_{h_1 \downarrow 0}\frac{1}{h_1}\left(\mathbbm{E}[Y_i D^{[t]}_i|Q(Z_i)=q+h_1\mathbf{e}_1]-\mathbbm{E}[Y_i D^{[t]}_i|Q(Z_i)=q]\right)\\
	&\hspace{1in} = \lim_{h_1 \dots h_J \downarrow 0}\frac{1}{\prod_{j=1}^J h_j}\cdot \sum_{s \subseteq \{1\dots J\}} (-1)^{|s|} \cdot \mathbbm{E}\left[\left. Y_i D^{[t]}_i \right| Q(Z_i)=q+\sum_{j \in s}h_j\mathbf{e}_j\right]
\end{align*}
and takes the ratio:
\begin{align*}
	\frac{\frac{\partial^J}{\partial_{q_1}\dots \partial_{q_J}} \mathbbm{E}[Y_i D^{[t]}_i|Q(Z_i)=q]}{\frac{\partial^J}{\partial_{q_1}\dots \partial_{q_J}} \mathbbm{E}[D^{[t]}_i|Q(Z_i)=q]} &= \lim_{h_1 \dots h_J \downarrow 0}\frac{\sum_{s \subseteq \{1\dots J\}} (-1)^{|s|} \cdot \mathbbm{E}\left[\left. Y_i D^{[t]}_i \right| Q(Z_i)=q+\sum_{j \in s}h_j\mathbf{e}_j\right]}{\sum_{s \subseteq \{1\dots J\}} (-1)^{|s|} \cdot \mathbbm{E}\left[\left. D^{[t]}_i \right| Q(Z_i)=q+\sum_{j \in s}h_j\mathbf{e}_j\right]}
\end{align*}
LS's result (\ref{LSresult}) thus considers the limit of Eq. (\ref{eq:rectangleid}) as the width of the rectangle goes to zero in all dimensions.

\subsection{Example 5: Unordered (generalized) partial monotonicity}
We can define a generalization of vector and partial monotonicity to settings with multi-valued treatments, also nesting unordered monotonicity:
\begin{assumption*}[UPM]
	For any $t \in \mathcal{T}$, there exists a partial order $\succeq_t$ on $\mathcal{Z}$, such that if $z' \succeq_t z$, $D^{[t]}_i(z') \ge D^{[t]}_i(z)$ for all $i$.
\end{assumption*}
\noindent Note that even in the case of a binary treatment, UPM represents a generalization of partial monotonicity (PM), defined by \citet{MTW} for settings with multiple instruments. UPM allows for an arbitrary partial order on $\mathcal{Z}$, while PM considers a partial order that is based on holding all instruments but one at fixed values.

Assumption UPM implies that for any such $z, z'$: $D^{[t]}_i(z') - D^{[t]}_i(z) \in \{0,1\}$ and thus
$$\mathbbm{E}[Y_i(t)|D^{[t]}_i(z') > D^{[t]}_i(z)] = \frac{\mathbbm{E}[Y_iD^{[t]}_i|Z_i=z']-\mathbbm{E}[Y_iD^{[t]}_i|Z_i=z]}{\mathbbm{E}[D^{[t]}_i|Z_i=z']-\mathbbm{E}[D^{[t]}_i|Z_i=z]} $$
UPM holds, for example, when instruments correspond to choice sets and agents choose rationally from them, as in \citet{aroragoffhjort}. In such a setting instrument values $z$ are subsets of the treatments $\mathcal{T}$ that are available to the agent, and $D^{[t]}_i(z) \ge D^{[t]}_i(z')$ whenever ($z/t \subseteq z'/t$ and $t \in z$ if $t \in z'$). In words, $D^{[t]}_i(z)$ is weakly increasing with respect to the inclusion of $t$ in $z$ (since $i$ can only choose $t$ if it is available), and weakly decreasing with respect to the inclusion of any $t'\ne t$ in $z$ (since $i$ may prefer $t'$ to $t$).

\subsection{Example 6: Pairwise notions of monotonicity}

\citet{sun2024pairwise} proposes a notion of IV-validity that is specific to two values $z,z'$ of the instrument (which may be a vector). This includes the standard LATE model assumptions (independence, exclusion, and monotonicity). However, if independence and exclusion are maintained,  the notion of pairwise valid instruments reduces to what we might call \textit{pairwise-monotonicity}, i.e. that $D^{[t]}_i(z') \ge D^{[t]}_i(z')$ almost surely, or vice versa. 

\citet*{cclate} consider a notion of ``limited monotonicity'' for settings with multiple binary instruments and a binary treatment, which in the notation above corresponds to a setting in which $z'=(1,\dots 1)$ and $z=(0,\dots 0)$. \citet{hoff2023identifying} extends this notion to ordered treatments that need not be binary.

In the context of ``judge designs'' where the instrument is a scalar continuous measure of ``leniency'' with respect to a binary treatment, \citet{sigstad} and \citet{sigstad2024marginal} introduce a notion of ``extreme-pair'' monotonicity $D_i(\bar{j}) \ge D_i(\underline{j})$ almost surely, where $\bar{j}$ is the strictest judge, and $\underline{j}$ the most lenient.

In the case of a binary treatment $D_i$, the above papers point out that under a limited version of ``monotonicity'' between a pair of values $z,z'$, a particular local average treatment effect can be identified from a simple Wald estimand:
$$\mathbbm{E}[Y_i(1)-Y_i(0)|D_i(z') > D_i(z)] = \frac{\mathbbm{E}[Y_i|Z_i=z']-\mathbbm{E}[Y_i|Z_i=z]}{\mathbbm{E}[D_i|Z_i=z']-\mathbbm{E}[D_i|Z_i=z]}$$
This corresponds to a binary collection in which $\alpha_{z'} = 1$ and $\alpha_{z} = -1$ for $t=1$ , while $\alpha_{z'} = -1$ and $\alpha_{z} = 1$ for $t=0$.
		\section{Letting local causal parameters depend on $Z_i$} \label{app:deponz}
Let $z_k: \mathcal{Z} \rightarrow \mathcal{Z}$ be a function that maps an instrument value $Z_i$ to some possibly different value in $\mathcal{Z}$. Non-constant functions $z_k(\cdot)$ will allow us to nest parameters such as the average treatment effect on the treated, as well as some parameters from \citet{goff2024vector}. In that paper $z_k(z)$ could for instance change one component of $z$, and the $\alpha_k$ and $z_k(\cdot)$ can be chosen so that $c(G_i,Z_i):=\sum_{k}\alpha_{k}\cdot D^{[t]}_i(z_k(Z_i))$ only takes values of zero or one, i.e. $\alpha_z = \left\{\alpha_k, z_k(z)\right\}_{k=1}^K$ yields a binary combination for any $z \in \mathcal{Z}$. Then, by the law of iterated expectations: $ \mathbbm{E}\left[Y_i(t)\left|c(G_i,Z_i)=1\right. \right] = \sum_{z \in \mathcal{Z}} P(Z_i=z) \cdot  \mathbbm{E}[Y_i(t)|c(G_i,Z_i)=1]$ where each term in the summand is identified by (\ref{idresult}) and the distribution of $Z_i$.

Let us maintain the assumption that the support of the instruments $\mathcal{Z}$ is discrete and finite. Consider any counterfactual mean of the form $\theta=\mathbbm{E}[Y_i(t)|c(G_i,Z_i)=1]$ where now $c: \mathcal{G} \times \mathcal{Z} \rightarrow \{0,1\}$. By the law of iterated expectations over $Z_i$ and independence Eq. \eqref{eq:independence}, we can write $\theta$ as:
\begin{align} 
	\theta &= \sum_{z \in \mathcal{Z}} P(Z_i = z|c(G_i,z)=1) \cdot \mathbbm{E}[Y_i(t)|c(G_i,z)=1,Z_i=z] \nonumber\\
	&= \sum_{z \in \mathcal{Z}} P(Z_i = z) \cdot \mathbbm{E}[Y_i(t)|c_z(G_i)=1] \label{eq:deponz}
\end{align}
where we let $c_z(g)$ denote $c(g,z)$. Eq \eqref{eq:deponz} shows that $\theta$ can be written as a convex combination of $|\mathcal{Z}|$ counterfactual means of the form $\mu^t_c$ considered by Theorems \ref{thm:suff} and \ref{thm:necc}, with complier groups $c_z(\cdot)$ that depend on $z$. It is clear then by Theorem \ref{thm:suff}, a sufficient condition for $\theta$ to be outcome-nonrestrictive identified is that $c_z$ lies in the rowspace of matrix $A^{[t]}$ for each $z \in \mathcal{Z}$.

Theorem \ref{thm:necc} similarly extends to the more general class of functions $c(G_i,Z_i)$, provided that the family $\mathscr{P}_Z$ of distributions over the instruments allows for degenerate distributions at each value of $Z_i$. Then $c_z$ must lie in the rowspace of $A^{[t]}$ for all $z \in \mathcal{Z}$. If it were not, then for some $z \in \mathcal{Z}$, $c_z \notin rs(A^{[t]})$ and hence $\mu^t_{c_z}$ is not outcome-nonrestrictive identified, by Theorem \ref{thm:necc}. For a degenerate distribution $P_\mathcal{Z}$ that sets $P(Z_i=z)=1$, $\theta = \mu^t_{c_z}$ and hence $\theta$ is not outcome-nonrestrictive identified if $c_z \notin rs(A^{[t]})$. With this extension Theorem \ref{thm:necc} of this paper nests Theorem 2 of \citet{goff2024vector} as a special case, and expands its reach even in the case that vector monotonicity is maintained, if the outcome variable is continuous.
		\section{Extended analysis of identification under NSOG} \label{app:nsog}

It is known that unconditional means $\mathbbm{E}[Y_i(t)]$ of a given potential outcome $Y_i(t)$ can be point-identified, given an order condition on the instruments, under an assumption of ``no-selection on gains'' (NSOG) (see e.g. \citet{Kolesar2013, aroragoffhjort} for versions of this result).\footnote{\citet{Kolesar2013} calls this ``constant average treatment effects'', and does not use the term NSOG.} Note that identification of $\mathbbm{E}[Y_i(t)]$ and $\mathbbm{E}[Y_i(t')]$ immediately implies identification of unconditional average treatment effects $\mathbbm{E}[Y_i(t')-Y_i(t)]$ as well.

NSOG says that treatment effects are mean independent of actual treatment, given any realization of the instruments:
\begin{assumption*}[NOSG (no selection on gains)]
	For any $t,t',t_1,t_2 \in \mathcal{T}$ and $z \in \mathcal{Z}$:
	$$\mathbbm{E}[Y_i(t')-Y_i(t)|T_i=t_1,Z_i=z] = \mathbbm{E}[Y_i(t')-Y_i(t)|T_i=t_2,Z_i=z]$$
\end{assumption*}
\noindent NSOG implies that if we consider any fixed treatment value $0 \in \mathcal{T}$, then $\mathbbm{E}[Y_i(t')-Y_i(0)|T_i=t,Z_i=z] = \mathbbm{E}[Y_i(t')-Y_i(0)|Z_i=z]$ for any $t,z$, which coupled with independence \eqref{eq:independence} in turn implies that $\mathbbm{E}[Y_i(t')-Y_i(0)|T_i=t,Z_i=z] = \mathbbm{E}[Y_i(t')-Y_i(0)] := \Delta_{t'}$, where note that $\Delta_{t'}$ does not depend on $z$ or $t$. This normalization against an arbitrary treatment $0 \in \mathcal{T}$ allows us to carry around one less index in our expressions.

\subsection{Identification under NSOG}
This subsection first shows that $\mathbbm{E}[Y_i(t)]$ can be point identified for each $t \in \mathcal{T}$ under NSOG, given rich enough support of the instruments. The proof essentially follows that of \citet{aroragoffhjort}, which adapts an argument from \citet{Kolesar2013} to cases in which the treatments $\mathcal{T}$ are not necessarily ordered.

NSOG implies that:
$$\mathbbm{E}[Y_i-Y_i(0)|T_i=t,Z_i=z] = \mathbbm{E}[Y_i(t)-Y_i(0)|T_i=t,Z_i=z] = \Delta_{t}$$
Averaging over the conditional distribution of $T_i$ given $Z_i=z$, we have by the law of iterated expectations that
\begin{equation} \label{eq:nsog}
	\mathbbm{E}[Y_i-Y_i(0)|Z_i=z] = \sum_{t \in \mathcal{T}} P(T_i=t|Z_i=z) \cdot \Delta_t
\end{equation}
To now see that $\mathbbm{E}[Y_i(t)]$ can be identified under NSOG given rich enough instrument support, let us assume that $|\mathcal{Z}| \ge |\mathcal{T}|$ and suppose that there exists a set of $|\mathcal{T}|$ instrument values $\tilde{\mathcal{Z}} \subseteq \mathcal{Z}$ such that the $|\mathcal{T}| \times |\mathcal{T}|$ matrix $\Sigma$ with entries $\Sigma_{zt} = P(Z_i=z, T_i=t)$ over all $z \in \tilde{\mathcal{Z}}$ is invertible, with $P(Z_i = z) > 0$ for each $z \in \tilde{\mathcal{Z}}$.

Eq. \eqref{eq:nsog} can be re-written by multiplying both sides by $P(Z_i=z)$ as
$$\mathbbm{E}[\{Y_i-Y_i(0)
\} \cdot \mathbbm{1}(Z_i=z)] = \sum_{t \in \mathcal{T}} \Sigma_{zt} \cdot \Delta_t$$
for each $z \in \tilde{\mathcal{Z}}$. Equivalently, using independence:
\begin{align*}
	\mathbbm{E}[Y_i \cdot \mathbbm{1}(Z_i=z)] &= P(Z_i=z)\cdot \mathbbm{E}[Y_i(0)] +  \sum_{t \in \mathcal{T}} \Sigma_{zt} \cdot \Delta_t\\
	&= P(Z_i=z)\cdot \mathbbm{E}[Y_i(0)] +  \sum_{t \in \mathcal{T}, t \ne 0} \Sigma_{zt} \cdot \Delta_t\\
	&= \left\{P(Z_i=z,T_i=0)+\sum_{t \in \mathcal{T}, t \ne 0} \Sigma_{zt}\right\}\cdot \mathbbm{E}[Y_i(0)] +  \sum_{t \in \mathcal{T}, t \ne 0} \Sigma_{zt} \cdot \Delta_t\\
	&= P(Z_i=z,T_i=0)\cdot \mathbbm{E}[Y_i(0)] + \sum_{t \in \mathcal{T}, t \ne 0} \Sigma_{zt} \cdot \mathbbm{E}[Y_i(0)] + \sum_{t \in \mathcal{T}, t \ne 0} \Sigma_{zt} \cdot \Delta_t\\
	&= P(Z_i=z,T_i=0)\cdot \mathbbm{E}[Y_i(0)] + \sum_{t \in \mathcal{T}, t \ne 0} \Sigma_{zt} \cdot \mathbbm{E}[Y_i(t)]\\
	& = \sum_{t \in \mathcal{T}} \Sigma_{zt} \cdot \mathbbm{E}[Y_i(t)]
\end{align*}
using that $\Delta_0 = 0$ in the second equality. This yields a system of $|\mathcal{T}|$ equations in the $|\mathcal{T}|$ unknowns $\mathbbm{E}[Y_i(t)]$ with identified coefficients $\Sigma_{zt}$. Given that $\Sigma^{-1}$ is invertible, we have then that 
\begin{equation} \label{eq:nsogresult}
	\mathbbm{E}[Y_i(t)] = \sum_{z \in \tilde{\mathcal{Z}}} \Sigma^{-1}_{tz} \cdot \mathbbm{E}[Y_i \cdot \mathbbm{1}(Z_i=z)]
\end{equation}
Note that if $|\mathcal{Z}| \ge |\mathcal{T}|$ there may be overidentification restrictions implied by NSOG, that the RHS of \eqref{eq:nsogresult} is the same for different possible choices of $\tilde{\mathcal{Z}} \subset \mathcal{Z}$ (note that $\Sigma$ also depends on the choice of $\tilde{\mathcal{Z}}$). Furthermore, the RHS of \eqref{eq:nsogresult} is the estimand of a two-stage least squares regression of $Y_i$ on indicators for the mutually-exclusive treatments in $\mathcal{T}$ (and no constant), instrumented by indicators for the mutually-exclusive instrument values in $\tilde{\mathcal{Z}}$.

\subsection{How Theorem \ref{thm:necc} does not cover NSOG}
Since the result of the last section makes no assumption about which response types can show up in the population, it is compatible with any selection model $\mathcal{G} \subseteq \{0,1\}^{\mathcal{T}^\mathcal{Z}}$, including for example the full powerset $\{0,1\}^{\mathcal{T}^\mathcal{Z}}$ of possible response types $\mathcal{T}^\mathcal{Z}$.

Whatever $\mathcal{G}$ is, unconditional means like $\mathbbm{E}[Y_i(t)]$ correspond to the choice $c = (1, \dots, 1)'$ in $\mathbbm{R}^{|\mathcal{G}|}$. As long as $\mathcal{G}$ allows never-takers with respect to treatment $t$, this choice of $c$ will not lie in the rowspace of $A^{[t]}$. The unrestricted selection model $\mathcal{G} = \{0,1\}^{\mathcal{T}^\mathcal{Z}}$, for example, features such never-takers for any $t \in \mathcal{T}$. Thus the result of the last section demonstrates that it is possible to achieve point identification of $\mu_c^t$ without $c \in rs(A^{[t]})$, if we impose NSOG and that $\Sigma^{-1}$ exists.

Note that the imposing of NSOG makes this identification \textit{not} outcome-nonrestrictive. However, it is illustrative to see where the proof of Theorem \ref{thm:necc} breaks down in the case of the NSOG identification result. Let $NSOG$ denote the set of distributions $\mathcal{P}$ for which $P_{latent}(\mathcal{P})$ satisfies NSOG. In this notation, the last section establishes that $\{\theta(\mathcal{P}): \mathcal{P} \in (M \cap NSOG) \textrm{ and } \phi(\mathcal{P}) = \mathcal{P}_{obs}\}$ is a singleton for all $\mathcal{P}_{obs}$ that satisfy the rich support condition that $\Sigma^{-1}$ exists, which requires that there be no $\mathcal{P},\mathcal{P}' \in M \cap NSOG$ such that $\phi(\mathcal{P})=\phi(\mathcal{P}')$ but $\theta(\mathcal{P}) \ne \theta(\mathcal{P}')$ and such that $\Sigma^{-1}$ exists under $\mathcal{P}$ or $\mathcal{P}'$.

To see that there is no contradiction with Theorem \ref{thm:necc}, I below show that given a $\mathcal{P} \in (M \cap NSOG)$, the alternative distribution $\mathcal{P}'$ defined from it in the proof of Theorem \ref{thm:necc} does not lie within $NSOG$ when $(1, \dots 1)' \notin rs(A^{[t]})$. In partcular, the remainder of this section shows that if $(1, \dots 1)' \notin rs(A^{[t]})$ for any given $t \in \mathcal{T}$, the construction $\mathcal{P}'$ utilized in the proof of Theorem \ref{thm:necc} cannot lie in $NSOG$. If on the other hand $(1, \dots 1)' \in rs(A^{[t]})$, then Eq. \eqref{eq:finalcondtion} in the proof of Theorem \ref{thm:necc} shows that $\theta(\mathcal{P})=\theta(\mathcal{P}')$, consistent with $\theta$ being identified.

Recall that the way in which the proof of Theorem \ref{thm:necc} builds a candidate $\mathcal{P}'$ from the actual distribution $\mathcal{P}$ is to construct from the set of true potential outcome CDFs $\mathbf{G}^*$\: $[\mathbf{G}^*(y)]_{t,g}:=P(G_i=g)\cdot F_{Y(t)|G=g}(y)$ a new set of such CDFs $\mathbf{G}^\lambda$. For continuity with the notation used in this discussion so far, let $\mathcal{P}'$ correspond to the collection of CDFs $\mathbf{G}^\lambda$, and let us make explicit whether outcome expectations are with respect to the distribution $\mathcal{P}$ or $\mathcal{P}'$,\footnote{The response type probabilities $P(G_i=g)$ are the same for both $\mathcal{P}$ and $\mathcal{P}'$ so I leave this implicit for ease of exposition.}. Then we have by integrating Eq. \eqref{eq:Fstarnew} that:
\begin{align*}
	P(G_i=g)\cdot &\mathbbm{E}_\mathcal{P'}[Y_i(t)|G_i=g]= P(G_i=g)\cdot \mathbbm{E}_\mathcal{P}[Y_i(t)|G_i=g]\\
	&\hspace{-0.5in}+\lambda \cdot \sum_{g'} [I-(A^{[t]})^+ A^{[t]}]_{g,g'}\cdot P(G_i=g')\cdot \left\{\mathbbm{E}_\mathcal{P}[Y_i(t)|G_i=g']-\mathbbm{E}_\mathcal{P}[Y_i(t)|G_i=g^*]\right\}
\end{align*}
Then using independence \eqref{eq:independence}:
\begin{align*}
	\mathbbm{E}_\mathcal{P'}[&Y_i(t)-Y_i(0)|G_i=g, Z_i=z] = \mathbbm{E}_\mathcal{P'}[Y_i(t)|G_i=g]-\mathbbm{E}_\mathcal{P'}[Y_i(0)|G_i=g]\\
	&=\mathbbm{E}_\mathcal{P}[Y_i(t)-Y_i(0)|G_i=g]\\
	&+\lambda \cdot \sum_{g'} [I-(A^{[t]})^+ A^{[t]}]_{g,g'}\cdot \frac{P(G_i=g')}{P(G_i=g)}\cdot \left\{\mathbbm{E}_\mathcal{P}[Y_i(t)|G_i=g']-\mathbbm{E}_\mathcal{P}[Y_i(t)|G_i=g^*]\right\}\\
	&-\lambda \cdot \sum_{g'} [I-(A^{[0]})^+ A^{[0]}]_{g,g'}\cdot \frac{P(G_i=g')}{P(G_i=g)}\cdot \left\{\mathbbm{E}_\mathcal{P}[Y_i(0)|G_i=g']-\mathbbm{E}_\mathcal{P}[Y_i(0)|G_i=g^*]\right\}
\end{align*}
Therefore, for any $t_1 \in \mathcal{T}$: \normalsize
\begin{align}
	&\mathbbm{E}_\mathcal{P'}[Y_i(t)-Y_i(0)|T_i=t_1, Z_i=z] = \mathbbm{E}_\mathcal{P'}[Y_i(t)-Y_i(0)|A^{[t_1]}_{z,G_i} = 1, Z_i=z]  \nonumber \\
	&= \sum_g P(G_i=g|A^{[t_1]}_{z,G_i} = 1)\cdot \mathbbm{E}_\mathcal{P'}[Y_i(t)-Y_i(0)|G_i=g, Z_i=z] \nonumber \\
	&= \frac{1}{P(A^{[t_1]}_{z,G_i} = 1)} \sum_g P(G_i=g)\cdot A^{[t_1]}_{z,g} \cdot \mathbbm{E}_\mathcal{P'}[Y_i(t)-Y_i(0)|G_i=g, Z_i=z] \nonumber \\
	&= \frac{1}{P(A^{[t_1]}_{z,G_i} = 1)} \cdot \sum_g P(G_i=g)\cdot A^{[t_1]}_{z,g}  \cdot \left[\mathbbm{E}_\mathcal{P}[Y_i(t)-Y_i(0)|G_i=g] \color{white}{\frac{1}{1}} \right. \nonumber\\
	& \left.+\lambda \cdot \sum_{g'} [I-(A^{[t]})^+ A^{[t]}]_{g,g'}\cdot \frac{P(G_i=g')}{P(G_i=g)}\cdot \left\{\mathbbm{E}_\mathcal{P}[Y_i(t)|G_i=g']-\mathbbm{E}_\mathcal{P}[Y_i(t)|G_i=g^*]\right\} \right. \nonumber \\
	&\left.-\lambda \cdot \sum_{g'} [I-(A^{[0]})^+ A^{[0]}]_{g,g'}\cdot \frac{P(G_i=g')}{P(G_i=g)}\cdot \left\{\mathbbm{E}_\mathcal{P}[Y_i(0)|G_i=g']-\mathbbm{E}_\mathcal{P}[Y_i(0)|G_i=g^*]\right\}\right] \nonumber \\
	&= \Delta_t+\frac{\lambda}{P(A^{[t_1]}_{z,G_i} = 1)} \cdot \left[\sum_{g'} [A^{[t_1]}(I-(A^{[t]})^+ A^{[t]})]_{z,g'}\cdot P(G_i=g')\cdot \left\{\mathbbm{E}_\mathcal{P}[Y_i(t)|G_i=g']-\mathbbm{E}_\mathcal{P}[Y_i(t)|G_i=g^*]\right\} \right. \nonumber \\
	&\hspace{.5cm}\left.- \sum_{g'} [A^{[t_1]}(I-(A^{[0]})^+ A^{[0]})]_{z,g'}\cdot P(G_i=g')\cdot \left\{\mathbbm{E}_\mathcal{P}[Y_i(0)|G_i=g']-\mathbbm{E}_\mathcal{P}[Y_i(0)|G_i=g^*]\right\}\right] \label{eq:nsogcondition}
\end{align} \large
where $\Delta_t:=\mathbbm{E}_{\mathcal{P}}[Y_i(t)-Y_i(0)]$. Note that we can simplify the denominator as $P(A^{[t_1]}_{z,G_i} = 1)= \sum_g P(G_i=g) \cdot A^{[t_1]}_{z,g}=[A^{[t_1]}P]_{z}$, where $P$ is a vector of response type probabilities $P_{g}=P(G_i=g)$. Since $\Sigma_{zt} = P(Z_i=z,T_i=t) = P(Z_i=z)\cdot P(T_i=t|Z_i=z) = P(Z_i=z)\cdot \sum_{g} P(G_i=g)\cdot A^{[t]}_{z,g} = P(Z_i=z)\cdot [A^{[t]}P]_z$. We can thus rewrite $P(A^{[t_1]}_{z,G_i} = 1)$ as $\Sigma_{zt}/P(Z_i=z)$.

For us to have $\mathcal{P'} \in NSOG$, it must be the case that the RHS of \eqref{eq:nsogcondition} does not depend on $z$ or $t_1$, and equals $\Delta_t'(\lambda):=\mathbbm{E}_{\mathcal{P}'}[Y_i(t)-Y_i(0)]$ for any $\mathcal{P} \in (M \cap NSOG)$. In the notation $\Delta_t'(\lambda)$ we make explicit that the value of $\mathbbm{E}_{\mathcal{P}'}[Y_i(t)-Y_i(0)]$ could depend on $\lambda$. In the case of $t_1=t$, expression  \eqref{eq:nsogcondition} for $\Delta_t'(\lambda)$ simplifies to \normalsize
$$\Delta_t - \frac{\lambda}{\Sigma_{zt}} \cdot P(Z_i=z) \cdot \sum_{g'} [A^{[t]}-A^{[t]}(A^{[0]})^+ A^{[0]}]_{z,g'}\cdot P(G_i=g')\cdot \left\{\mathbbm{E}_\mathcal{P}[Y_i(0)|G_i=g']-\mathbbm{E}_\mathcal{P}[Y_i(0)|G_i=g^*]\right\}$$ \large
using that $A^{[t]}(A^{[t]})^+ A^{[t]}=A^{[t]}$.
Similarly, taking $t_1=0$, we have that $\Delta_t'(\lambda)$ is equal to \normalsize
$$\Delta_t + \frac{\lambda}{\Sigma_{zt}} \cdot P(Z_i=z) \cdot \sum_{g'} [A^{[0]}-A^{[0]}(A^{[t]})^+ A^{[t]}]_{z,g'}\cdot P(G_i=g')\cdot \left\{\mathbbm{E}_\mathcal{P}[Y_i(t)|G_i=g']-\mathbbm{E}_\mathcal{P}[Y_i(t)|G_i=g^*]\right\}$$ \large
Note that for any $\mathcal{P}$, there exists a small enough $\lambda>0$ that $\mathcal{P}' \in M$. For the above equations to simultaneously hold for any such $\lambda > 0$, we must have for any $z$ such that $P(Z_i=z)>0$:
\begin{align} \label{eq:needforanylambda}
&\sum_{g'} [A^{[0]}(I-(A^{[t]})^+ A^{[t]})]_{z,g'}\cdot P(G_i=g')\cdot \left\{\mathbbm{E}_\mathcal{P}[Y_i(t)|G_i=g']-\mathbbm{E}_\mathcal{P}[Y_i(t)|G_i=g^*]\right\} \nonumber\\
&+\sum_{g'} [A^{[t]}(I-(A^{[0]})^+ A^{[0]})]_{z,g'}\cdot P(G_i=g')\cdot \left\{\mathbbm{E}_\mathcal{P}[Y_i(0)|G_i=g']-\mathbbm{E}_\mathcal{P}[Y_i(0)|G_i=g^*]\right\}=0
\end{align}
for all $\mathcal{P} \in M \cap REG \cap NSOG$. Consider a distribution $\mathcal{P}$ for which $\mathcal{P}_Z$ has full support $\mathcal{Z}$, and for which conditional average treatment effects take the separable form $\mathbbm{E}[Y_i(t)|G_i=g] = \lambda_g + \Delta_t$, where $\Delta_0:=0$. Defining $\tilde{\lambda}_g:=\lambda_g - \lambda_{g^*}$, Eq. \eqref{eq:needforanylambda} reads in this case: \normalsize
\begin{align*} \label{eq:needforanylambda2}
	\sum_{g'} [A^{[0]}(I-&(A^{[t]})^+ A^{[t]})]_{z,g'}\cdot P(G_i=g')\cdot \tilde{\lambda}_{g'}+\sum_{g'} [A^{[t]}(I-(A^{[0]})^+ A^{[0]})]_{z,g'}\cdot P(G_i=g')\cdot \tilde{\lambda}_{g'}=0
\end{align*} \large
Given that $\tilde{\lambda}_{g'}$ can be freely chosen such that $P(G_i=g')\cdot \tilde{\lambda}_{g'}=\mathbbm{1}(g'=g)$ for any $g \in \mathcal{G}$ and $\mathcal{P}_G$, this can only be true when $A^{[0]}(I-(A^{[t]})^+ A^{[t]})=A^{[t]}(I-(A^{[0]})^+ A^{[0]})$ entry by entry as matrices. We'll now see that this can only be true for all $t \in \mathcal{T}$ if $c=(1, \dots 1)' \in rs(A^{[t]})$ for all $t \in \mathcal{T}$.

Note that the matrix $(A^{[t]})^+ A^{[t]}$ is an orthogonal projector onto into $rs(A^{[t]})$, and $(A^{[0})^+ A^{[0]}$ is an orthogonal projector onto into $rs(A^{[0]})$, and the required condition is
$${A^{[t]}_z}'(I-(A^{[0})^+ A^{[0]}) = -{A^{[0]}_z}'(I-(A^{[t})^+ A^{[t]})$$
for all $z \in \mathcal{Z}$, where the row-vector ${A^{[t]}_z}'$ denotes row $z$ of the matrix $A^{[t]}$, and similarly for $A^{[0]}$. Note that the row-vector ${A^{[t]}_z}'(I-(A^{[0})^+ A^{[0]})$ belongs to the orthogonal complement of $rs(A^{[0]})$ in $\mathbbm{R}^{|\mathcal{G}|}$. It is thus orthogonal to any row of $A^{[0]}$, including ${A^{[0]}_z}'$. But $-{A^{[0]}_z}'(I-B)$ cannot be orthogonal to $c_z$ unless ${A^{[0]}_z}'(A^{[t})^+ A^{[t]} = {A^{[0]}_z}'$ so that $-{A^{[0]}_z}'(I-(A^{[t})^+ A^{[t]})$ is the zero vector. In that case, note that ${A^{[t]}_z}'(I-(A^{[0})^+ A^{[0]})$ is the zero vector as well, so we have that ${A^{[t]}_z}' \in rs(A^{[0]})$ and ${A^{[0]}_z}' \in rs(A^{[t]})$. Compiling over all $z \in \mathcal{Z}$, we have that $A^{[0]}$ and $A^{[t]}$ have the same row-space. Repeating this argument over all $t \in \mathcal{T}$, we have that $rs(A^{[t]})$ is the same for all $t \in \mathcal{T}$.

Now let us see that this in turn implies that $(1, \dots 1)' \in rs(A^{[t]})$. Note that $\sum_{t' \in \mathcal{T}} {A_z^{[t']}}'=(1, \dots 1)'$ for any $z$, because all response types take one and only one treatment when $Z_i=z$. But since ${A_z^{[t']}}' \in rs(A^{[t']})$, it must also be in the rowspace of $A^{[t]}$. Since ${A_z^{[t']}}' \in rs(A^{[t]})$ for each $t'$, the linear combination $\sum_{t' \in \mathcal{T}} {A_z^{[t']}}'=(1, \dots 1)'$ is also in $rs(A^{[t]})$. Thus we have shown that $\mathcal{P}' \in NSOG$ implies that $(1, \dots 1)' \in rs(A^{[t]})$ for all $t$.

		\section{Partial identification when $c \notin rowspace(A^{[t]})$} \label{sec:partialid}

\subsection{Relationship to \citet*{bai2024identifyingpowermonotonicityaverage}} \label{sec:bhmsv}

\citet{bai2024identifyingpowermonotonicityaverage} (BHMSV) study the identifying power for ATEs and unconditional counterfactual means of a restriction on selection that they call \textit{generalized monotonicity} (GM). In my notation, GM says that for a given $\mathcal{P}_{latent}$ and each $t \in \mathcal{T}$, there exists an instrument value $z^* = z^*(t,\mathcal{P}_{latent})$ such that
\begin{equation} \label{eq:gm}
	P(D_i(z^*) \ne t \textrm{ and } D_i(z)=t \textrm{ for some } z \in \mathcal{Z})=0
\end{equation}
according to $\mathcal{P}_{latent}$. That is, no individual takes treatment $t$ when $z \ne z^*$ unless they also do when $z = z^*$. BHMSV show that GM or any strengthening of it (that does not restrict outcomes) does not reduce the size of identified sets for unconditional parameters of the form $\mathbbm{E}[Y_i(t)]$, when the outcome variable has finite support $\mathcal{Y}$ and the instruments are also finite.

While GM nests many notions of monotonicity from the literature that have been used for positive point identification results, it generalizes them in a different way than the criterion $c \in rs(A^{[t]})$ of the present paper does. While $c \in rs(A^{[t]})$ ensures point identification of $\mathbbm{E}[Y_i(t)|c_{G_i}=1]$, GM represents a double-edged sword when the parameter of interest is an unconditional mean or ATE with $c = (1,1,\dots 1)'$. Using Theorem \ref{thm:necc} of this paper, we can see that GM is in fact sufficient to establish either that i) $\mathbbm{E}[Y_i(t)]$ is point identified in an outcome-nonrestrictive way; or ii) that it is \textit{not} point identified in an outcome-nonrestrictive way. Which of these cases i) or ii) holds can be determined by the observable distribution $\mathcal{P}_{obs}$, and does not depend on $\mathcal{G}$ beyond it satisfying GM.

Let $\tilde{\mathcal{G}}(\mathcal{P}_{latent})$ be the support of $G_i$ under $\mathcal{P}_{latent}$, and note that \eqref{eq:gm} is equivalent to:
\begin{equation} \label{eq:gm2}
	\textrm{For all } g \in \tilde{\mathcal{G}}(\mathcal{P}_{latent}): \quad A^{[t]}_{z^*,g} =0 \implies A^{[t]}_{z,g} =0 \textrm{ for all } z \in \mathcal{Z}
\end{equation}
Consider a given distribution of observables $\mathcal{P}_{obs}$. Either $P(T_i=t|Z_i=z^*) = 1$ or $P(T_i=t|Z_i=z^*) < 1$ according to $\mathcal{P}_{obs}$. If the first case holds, then $\mathbbm{E}[Y_i(t)]=\mathbbm{E}[Y_i|Z_i=z^*]$ and $\mathbbm{E}[Y_i(t)]$ is thus point-identified without requiring any restrictions on selection. Thus, assuming GM or any strengthening of it cannot reduce the identified set for $\mathbbm{E}[Y_i(t)]$ further, unless it results in model rejection.

If on the other hand $P(T_i=t|Z_i=z^*) < 1$ and GM holds, then there must exist a $g \in \tilde{\mathcal{G}}(\mathcal{P}_{latent})$ such that $A^{[t]}_{z^*,g} =0$. Therefore by \eqref{eq:gm2}, for this $g$ it must be that $A^{[t]}_{z,g} =0$ for all $z \in \mathcal{Z}$, i.e. there are never-takers with respect to treatment $t$. This in turn implies that $(1,1,\dots 1)' \notin rs(A^{[t]})$, precisely the case in which we know that $\mathbbm{E}[Y_i(t)]$ is \textit{not} outcome-nonrestrictive point identified, by Theorem \ref{thm:necc}.

By showing that the bounds on $\mathbbm{E}[Y_i(t)]$ in partially identified settings are not improved by imposing restrictions stronger than GM, BHMSV underscore the importance of: i) focusing on other parameters of interest beyond the ATE (i.e. $c \ne (1,1,\dots 1)'$) when one is willing to impose restrictions on selection; and ii) finding restrictions on selection that are outside of the scope of GM. Indeed, many of the selection models reported in Appendix \ref{sec:catalog} below do not satisfy GM, yet are sufficient for point identification of more localized treatment effect parameters than the ATE (and in some cases the ATE as well).

The remainder of this section provides more detail to build intuition about the connection between BHMSV's result and the proof of Theorem \ref{thm:necc} in this paper. For a given $\mathcal{P}_{obs}$, let us write the identified set for $\mathbbm{E}[Y_i(t)]$ under model $M$ as
$$\Theta(\mathcal{P}_{obs},M) = \{\theta(\mathcal{P}): \phi(\mathcal{P})=\mathcal{P}_{obs} \textrm{ and } \mathcal{P} \in M\}$$
Given BHMSV's assumption that $\mathcal{Y}$ is finite, let us for each $y \in \mathcal{Y}$ define $x^y$ to be a $|\mathcal{G}|$-component vector with components $x^y_g=P(Y_i(t)=y|G_i=g)$ and $\beta$ to be a $|\mathcal{Z}|$-component vector with components $\beta^y_z = P(Y_i=y,T_i=t|Z_i=z)$. The restriction $\phi(\mathcal{P})=\mathcal{P}_{obs}$ corresponds to the set of solutions to finite system of linear equations $A^{[t]}x^y=\beta^y$, for each $y\in \mathcal{Y}$. Given $|\mathcal{Y}| < \infty$, we can collect these into a single finite linear system $\mathcal{A}^{[t]}\tilde{x}=\tilde{\beta}$, where $\mathcal{A}^{[t]}$ is a block diagonal matrix of $A^{[t]}$ copied $|\mathcal{Y}|$ times, $\tilde{x}$ is a $|\mathcal{Y}| \times |\mathcal{G}|$ component vector, and $\tilde{\beta}$ is a $|\mathcal{Y}| \times |\mathcal{Z}|$ component vector. The set $\mathcal{X}:=\{\tilde{x}(\mathcal{P}): \phi(\mathcal{P})=\mathcal{P}_{obs}\}$ is thus a vector space, where we let $\tilde{x}(\mathcal{P})$ represent $\tilde{x}$ as a function of the distribution of model fundamentals $\mathcal{P}$.

Whether GM or any strengthening of it reduces the identified set for $\mathbbm{E}[Y_i(t)]$ thus depends upon whether the action of $\theta(\cdot)$ on the $\mathcal{P} \in M$ such that $\tilde{x}(\mathcal{P}) \in \mathcal{X}$ reduces $\Theta(\mathcal{P}_{obs},M)$ relative to a case with no restrictions on selection. Eq. \eqref{eq:finalcondtion} from the proof of Theorem \ref{thm:necc} suggests that $\Theta(\mathcal{P}_{obs},M)$ satisfies
\begin{equation} \label{eq:idsetexpression}
	\Theta(\mathcal{P}_{obs},M) \subseteq \left\{\mathbbm{1}'(A^{[t]})^+\beta + \sum_{g'} [\mathbbm{1}'(I-(A^{[t]})^+ A^{[t]})]_{g'} \cdot w_{g'}: w \in \mathbbm{R}^{|\mathcal{G}|} \right\}
\end{equation}
where $\mathbbm{1}:=(1,1,\dots 1)'$ and $\beta$ a $|\mathcal{Z}|$-component vector with components $\beta_z = \mathbbm{E}[Y_i \cdot \mathbbm{1}(T_i=t)|Z_i=z]$. GM implies that the set of the RGS is not a singleton if $P(T_i=t|Z_i=z^*) < 1$. The subset relation appearing in \eqref{eq:idsetexpression} reflects that, as in Theorem \ref{thm:necc}, some $\tilde{x}$ for which $\mathcal{A}^{[t]}\tilde{x}=\tilde{\beta}$ may not be attainable from $\mathcal{P}$ that are valid distributions and reflect any further assumptions of the model $M$, for example that $Y_i$ has bounded support.

BHMSV show that if $M$ does not restrict outcomes, $\Theta(\mathcal{P}_{obs},M)$ is in fact equal to the identified set under no selection restrictions, which is (given the finite support $\mathcal{Y}$): $$\left\{\beta_{z^*} - \min\{\mathcal{Y}\} \cdot P(T_i \ne t|Z_i=z^*),\beta_{z^*} + \max\{\mathcal{Y}\} \cdot P(T_i \ne t|Z_i=z^*) \right\}$$
An interesting question for further study is in what manner the result of BHMSV extends to the more general class target parameters indexed by vectors $c$ that may differ from $(1,1,\dots,1)'$.  A reasonable conjecture would be that if, given $\mathcal{P}_{obs}$, a class of restrictions on selection cannot change the fact that $c \notin rs(A^{[t]})$ , there is limited scope for such restrictions to reduce the size of the identified set for $\mu_c^t$.

\subsection{Partial identification in general}
Accordingly, consider an arbitrary $c \in \{0,1\}^{|\mathcal{G}}$ where we may have that $c \notin rs(A^{[t]})$. By similar logic as above, we can deduce that the identified set $\Theta(\mathcal{P}_{obs},M)$ for $\mu_c^t$ satisfies:
\begin{align*}
	\Theta(\mathcal{P}_{obs},M) \subseteq \frac{1}{P(c(G_i)=1)}\cdot \left\{c'(A^{[t]})^+\beta + \sum_{g'} [c'(I-(A^{[t]})^+ A^{[t]})]_{g'} \cdot w_{g'}: w \in \mathbbm{R}^{|\mathcal{G}|} \right\}
\end{align*}
The RHS may again be an outer set for $\Theta(\mathcal{P}_{obs},M)$, for example when $Y_i$ has bounded support. An added complication now, as compared to unconditional means, is that the probability $P(c(G_i)=1)$ is no longer known to be equal to one, and our only identifying information for it is that $\sum_{g \in \mathcal{G}} A^{[t]}_{gz} = d_z$ for all $z \in \mathcal{Z}$, where $d_g:=P(T_i=t|Z_i=z)$. 


		\section{Algorithms to enumerate outcome-nonrestrictive identification results} \label{app:algoone}
\begin{tcolorbox}[title=Algorithm 1:, colback=white, breakable=true]
	Begin with a given instrument support $\mathcal{Z}$ and set of treatments $\mathcal{T}$, and $t' \ne t$ in $\mathcal{T}$:
	\begin{enumerate}
		\item Loop over all possible choice models $\mathcal{G}$ given $\mathcal{Z}$ and $\mathcal{T}$. There are $2^{|\mathcal{G}^m|=2^{{|\mathcal{T}|}^{|\mathcal{Z}|}}}$ of these, where we let $\mathcal{G}^m$ denote the set of all $|\mathcal{T}|^{|\mathcal{Z}|}$ conceivable response types (mappings from $\mathcal{Z}$ to $\mathcal{T}$)
		\item Given the results of Section \ref{sec:geomTEs}, find a basis for the left null-space $ns(A^{[t',t]})$ of $A^{[t',t]}:=\begin{bmatrix}
			A^{[t']}\\
			A^{[t]}
		\end{bmatrix}$ via a QR decomposition of $A^{[t',t]}$. Represent this basis by a $k \times 2|\mathcal{Z}|$ matrix $N^{[t',t]}$, where $k$ is the dimension of $ns(A^{[t',t]})$. For any vector $\alpha \in ns(A^{[t',t]'})$, let $\alpha_1(\alpha) = [I_{|\mathcal{Z}|},\mathbf{0}_{|\mathcal{Z}| \times |\mathcal{Z}|}]\alpha$ be its first $|\mathcal{Z}|$ components, and let $\mathcal{C}^{[t,t']} = \{A^{[t']'}\alpha_1(\alpha): \alpha \in ns(A^{[t',t]'})\}$ be the subspace of $\mathbbm{R}^{|\mathcal{G}|}$ corresponding to these $\alpha$. $\mathcal{C}^{[t,t']}$ is a k-dimensional vector space with a basis represented by the $k \times |\mathcal{G}|$ matrix $B^{[t',t]}:=A^{[t']'}N^{[t',t]}[I_{|\mathcal{Z}|},\mathbf{0}_{|\mathcal{Z}| \times |\mathcal{Z}|}]$. 
		\item If $k \ge 1$, we now determine the intersection of $\mathcal{C}^{[t,t']}$ with the unit cube. This is done by looping over the $2^{|\mathcal{G}|}-1$ non-zero vectors $c$ in $\{0,1\}^{|\mathcal{G}|}$, and checking whether $c \in \mathcal{C}^{[t,t']}$ (when $B$ has full row rank, this can be done e.g. by checking that ${B^{[t',t]}}^+B^{[t',t]}c=c$, where $B^+$ is the Moore-Penrose pseudo-inverse of $B$).
	\end{enumerate}
\end{tcolorbox}
\noindent Note that since the computational problem as a whole is symmetric with respect to permutations of the (arbitrary) treatment labels, we can focus on binary collections containing the two treatment values $t'=1$ and $t=0$, and then generate new binary collections by then applying all re-labelings to the treatment values.

\begin{table}[h!]
	\begin{center}
		\begin{tabular}{cc||cccc}
			$|\mathcal{T}|$ & $|\mathcal{Z}|$ & $|\mathcal{C}_{|\mathcal{Z}|}|$ & \# $\alpha$'s (i.e. $(|\mathcal{C}_{|\mathcal{Z}|}|)^{2|\mathcal{Z}|}$) & $|\mathcal{G}^m|=|\mathcal{T}|^{|\mathcal{Z}|}$ & \# selection models (i.e. $2^{|\mathcal{G}^m|}$) \\ \hline\hline
			2 & 2 & 3           & 81       & 4           & 16                  \\
			3 & 2 & 3           & 81       & 8           & 256                 \\
			2 & 3 & 7           & 117,649   & 9           & 512                 \\
			3 & 3 & 7           & 117,649   & 27          & 1.34$\cdot 10^8$            \\
			4 & 3 & 7           & 117,649   & 81          & 2.42$\cdot 10^{24}$     \\
			4 & 4 & 16          & 4.29 $\cdot 10^9$ & 256         & 1.16$\cdot 10^{77}$           
		\end{tabular}
	\end{center}
	\caption{Comparison of the computational complexity of Algorithms 1 and 2 \label{table:complexity}}
\end{table}
Table \ref{table:complexity} compares the complexity of Algorithms 1 and 2. It does not account for the full computational cost of running each algorithm (e.g. computations within each choice of $\alpha$ in the case of Algorithm 2, or within a selection model in the case of Algorithm 1), but it is nevertheless clear that Algorithm 1 quickly becomes infeasible, while there remains hope for Algorithm 2 with $|\mathcal{Z}|=|\mathcal{T}|= 4$.\\

\begin{tcolorbox}[title=Algorithm 2:, colback=white, breakable=true]
	Begin with a given instrument support $\mathcal{Z}$ and set of treatments $\mathcal{T}$.\\
	
	\noindent \textbf{Part One: generate binary collections by }$\alpha$
	\begin{enumerate}
		\item Loop over all vectors $2\cdot |\mathcal{Z}|$-component vectors $\alpha$ having components in the set $\mathcal{C}_{|\mathcal{Z}|}$ (there are $(|\mathcal{C}_{|\mathcal{Z}|}|)^{2|\mathcal{Z}|}$ of these)
		\item With $t'=1$ and $t=0$ fixed (as with Algorithm 1), construct the matrix $A^{[t',t]}:=\begin{bmatrix}
			A^{[t']}\\
			A^{[t]}
		\end{bmatrix}$ where now each of $A^{[t']}$ and $A^{[t]}$ representing the full set of conceivable response types $\mathcal{G}^m$ (having $|\mathcal{G}^m|=|\mathcal{T}|^{|\mathcal{Z}|}$ columns). Compute for each $\alpha$ the row vector $\alpha'A^{[t',t]}$.
		\item Consider the columns $g$ of $\alpha'A^{[t',t]}$ that take the value of $0$, and call this set $\mathcal{G}^0(\alpha)$. Note that $\mathcal{G}^0(\alpha)$ is the set of $g$ for which $[\alpha_1(\alpha)'A^{[t']}]_g = [\alpha_0(\alpha)'A^{[t]}]_g$ (using the notation introduced in Algorithm 1).
		\item Now find the set $\mathcal{G}(\alpha) \subseteq \mathcal{G}^0(\alpha)$ such that $[\alpha_1(\alpha)'A^{[t']}]_g \in \{0,1\}$ for all $g \in \mathcal{G}(\alpha)$. Only response types $g$ in the set $\mathcal{G}^0(\alpha)$ can exist in a binary collection having $\alpha^{[t]} = \alpha_0(\alpha)$ and $\alpha^{[t']} = \alpha_1(\alpha)$. Further, the set $\mathcal{G}(\alpha)$ is \textit{maximal} (given $\alpha$) in the sense that we get a binary collection from $\alpha$ for $t,t'$ for any selection model $\mathcal{G} \subseteq \mathcal{G}(\alpha)$.
		\item Some of the binary collections (indexed by $\alpha$) constructed in this way will be redundant in the following sense. Define $c(\alpha)=\alpha_1(\alpha)'A^{[t']}$, and let vectors $\alpha$ and $\beta$ be two $2|\mathcal{Z}|$-component vectors such that $c(\alpha) = c(\beta)$ but $\mathcal{G}(\alpha) \subset \mathcal{G}(\beta)$. Then $\beta$ delivers the same largest complier group as $\alpha$ but while allowing for a strictly larger selection model. In this case remove $\alpha$, since the identification result for $\beta$ nests that of $\alpha$. If $c(\alpha)=c(\beta)$ as above and $\alpha$ and $\beta$ deliver \textit{the same} maximal selection model, i.e. $\mathcal{G}(\alpha) = \mathcal{G}(\beta)$, then drop whichever vector has more non-zero elements than the other, i.e. drop $\alpha$ if $||\alpha||_0 > ||\beta||_0$ where $||\cdot||_0$ indicates the $\ell_0$ norm. If $||\alpha||_0 = ||\beta||_0$, then keep whichever vector has a smaller $\ell_2$ norm is kept (this choice is arbitrary).
	\end{enumerate}
	
	\noindent \textbf{Part Two: organize by selection model and pare redundancies}
	\begin{enumerate}
		\item Extend the binary collections obtained in Part One of the algorithm for $(t',t) = (1,0)$ to all other choices of $t'>t$. Binary collections can now be indexed by the tuple $(t',t,\alpha)$. Any $(1,0,\alpha)$ obtained in Part One above yields a binary collection for $(t',t,\alpha)$ with the same vector $\alpha$, with the response types suitably re-defined based on relabeling the treatment values.
		\item Now collect all binary combinations that share a maximal selection model $\mathcal{G}$, which based on the last step may allow treatment effects that contemplate differing treatment contrasts (e.g. treatment value 2 vs. 0 or treatment 1 vs. 0) to be associated with the same selection model. 
		\item We now have a list of selection models $\mathcal{G}$ that admit of at least one binary collection, and for each such $\mathcal{G}$ a list of these binary collections. Recall that each selection model can be expressed by the matrix $A$. To distill out selection models with a unique structure, eliminate any redundancies where one selection model can be transformed into another by re-labeling treatment values, or by permuting the labels of the instrument values and re-ordering the columns of $A$.
	\end{enumerate}
\end{tcolorbox}
		\section{Catalog of outcome-nonrestrictive identification results} \label{sec:catalog}

The following results are for various small values of $|\mathcal{Z}|$ $|\mathcal{T}|$. Results for $|\mathcal{Z}|=|\mathcal{T}|=3$ are available upon request from the author (these add roughly 40 pages of output).

For a given $\mathcal{T}$ and $\mathcal{Z}$, binary collections are organized by selection models, given unique identifiers of the format \texttt{SM.}$|\mathcal{T}|$.$|\mathcal{Z}|$\texttt{.s}, where \texttt{s} is an index of the various selection models in that setting. Within each selection model, binary collections are enumerated by ascending numbers $i), ii)$ etc. Each binary collection is presented via the coefficient vectors $\alpha_{t'}$ and $\alpha_{t}$ (following the notation of Sec. \ref{sec:geomTEs} but keeping $t$ and $t'$ explicit).

Binary collections that share a common maximal selection model $\mathcal{G}(\alpha)$ organized under that selection model, and are not re-listed for $\mathcal{G} \subseteq \mathcal{G}(\alpha)$. Further, some binary collections for $\mathcal{G}$ might be listed under a $\mathcal{G}(\alpha)$ that nests $\mathcal{G}$ only after suitable re-labeling of the treatments and instruments. It is for this reason that the set of binary collections listed under a given selection model may not be closed under addition, even when adding the $c$ for two such collections results in another vector composed of all zeroes and ones.

For example, consider the $A$ matrices for SM.2.3.8 and SM2.3.1 below:
$$ \texttt{SM.2.3.8}: \begin{bmatrix}
	1 & 0 \\
	1 & 0 \\
	0 & 1 \\
\end{bmatrix} \quad \quad \quad \quad \texttt{SM.2.3.1.swapped}: \begin{bmatrix}
0 & 0 & 1\\
0 & 1 & 1\\
1 & 0 & 0\\
\end{bmatrix} $$
where by \texttt{SM.2.3.1.swapped} I indicate that I have swapped the first and third rows of the A matrix listed in the catalog that follows for SM.2.3.1. This swapping corresponds to a re-labelling of the instrument values.

SM.2.3.8 consists of two selection types, and the catalog shows that it admits of binary collection i) with $\alpha_1 = (0.5,0.5,0)'$ and $\alpha_0 = (0,0,1)'$ yielding $c=(1,0)'$ as well as binary collection ii) with $\alpha_1 = (0,0,1)'$ and $\alpha_0 = (0.5,0.5,0)'$ yielding $c=(1,0)'$. This implies that SM.2.3.8 also admits of a binary collection yielding $c=(1,1)'$, with $\alpha_0=\alpha_1 = (0.5,0.5,1)'$. 

The reason that this third binary collection is not listed under SM.2.3.8 is that SM.2.3.8 is not maximal for it: unlike collections i) and ii) which just include one of the two types in SM.2.3.8, identification of the average treatment effect for both of the types in SM.2.3.8 holds in the less restrictive selection model SM.2.3.1.swapped, which contains the selection types of SM.2.3.8 in its first and third columns. The sole binary collection listed under SM.2.3.1 corresponds to $c=(1,1)'$ in SM.2.3.8.

\subsection{2 treatments, 2 instrument values}
\subsubsection*{SM.2.2.1}
$$ A= \begin{bmatrix}
0 & 0 & 1\\
0 & 1 & 1\\
\end{bmatrix} $$
\begin{enumerate}[i)]
\item $(t',t)=(1,0)$; $\alpha_{t'}=(-1, 1)'; \alpha_{t}=(1, -1)'$; $c=(0, 1, 0)'$
\end{enumerate}
\subsubsection*{SM.2.2.2}
$$ A= \begin{bmatrix}
1 & 0\\
0 & 1\\
\end{bmatrix} $$
\begin{enumerate}[i)]
\item $(t',t)=(1,0)$; $\alpha_{t'}=(0, 1)'; \alpha_{t}=(1, 0)'$; $c=(0, 1)'$
\item $(t',t)=(1,0)$; $\alpha_{t'}=(1, 0)'; \alpha_{t}=(0, 1)'$; $c=(1, 0)'$
\item $(t',t)=(1,0)$; $\alpha_{t'}=(1, 1)'; \alpha_{t}=(1, 1)'$; $c=(1, 1)'$
\end{enumerate}

\subsection{3 treatments, 2 instrument values}
\subsubsection*{SM.3.2.1}
$$ A= \begin{bmatrix}
0 & 0 & 1 & 2\\
0 & 1 & 2 & 2\\
\end{bmatrix} $$
\begin{enumerate}[i)]
\item $(t',t)=(1,0)$; $\alpha_{t'}=(0, 1)'; \alpha_{t}=(1, -1)'$; $c=(0, 1, 0, 0)'$
\end{enumerate}
\subsubsection*{SM.3.2.2}
$$ A= \begin{bmatrix}
0 & 0 & 1 & 2\\
0 & 1 & 1 & 2\\
\end{bmatrix} $$
\begin{enumerate}[i)]
\item $(t',t)=(1,0)$; $\alpha_{t'}=(-1, 1)'; \alpha_{t}=(1, -1)'$; $c=(0, 1, 0, 0)'$
\end{enumerate}
\subsubsection*{SM.3.2.3}
$$ A= \begin{bmatrix}
1 & 2 & 0 & 1 & 2\\
0 & 0 & 1 & 2 & 2\\
\end{bmatrix} $$
\begin{enumerate}[i)]
\item $(t',t)=(1,0)$; $\alpha_{t'}=(0, 1)'; \alpha_{t}=(1, 0)'$; $c=(0, 0, 1, 0, 0)'$
\end{enumerate}
\subsubsection*{SM.3.2.4}
$$ A= \begin{bmatrix}
2 & 0 & 1 & 2\\
0 & 1 & 1 & 2\\
\end{bmatrix} $$
\begin{enumerate}[i)]
\item $(t',t)=(1,0)$; $\alpha_{t'}=(-1, 1)'; \alpha_{t}=(1, 0)'$; $c=(0, 1, 0, 0)'$
\end{enumerate}
\subsubsection*{SM.3.2.5}
$$ A= \begin{bmatrix}
1 & 0 & 2\\
0 & 1 & 2\\
\end{bmatrix} $$
\begin{enumerate}[i)]
\item $(t',t)=(1,0)$; $\alpha_{t'}=(1, 1)'; \alpha_{t}=(1, 1)'$; $c=(1, 1, 0)'$
\end{enumerate}

\subsection{2 treatments, 3 instrument values}
\subsubsection*{SM.2.3.1}
$$ A= \begin{bmatrix}
0 & 0 & 1\\
1 & 0 & 0\\
0 & 1 & 1\\
\end{bmatrix} $$
\begin{enumerate}[i)]
\item $(t',t)=(1,0)$; $\alpha_{t'}=(1, 1, 0)'; \alpha_{t}=(-1, 1, 2)'$; $c=(1, 0, 1)'$
\end{enumerate}
\subsubsection*{SM.2.3.2}
$$ A= \begin{bmatrix}
0 & 1 & 0 & 1\\
0 & 0 & 1 & 1\\
0 & 0 & 0 & 1\\
\end{bmatrix} $$
\begin{enumerate}[i)]
\item $(t',t)=(1,0)$; $\alpha_{t'}=(1, 1, -2)'; \alpha_{t}=(-1, -1, 2)'$; $c=(0, 1, 1, 0)'$
\end{enumerate}
\subsubsection*{SM.2.3.3}
$$ A= \begin{bmatrix}
1 & 1 & 0\\
1 & 0 & 1\\
0 & 1 & 1\\
\end{bmatrix} $$
\begin{enumerate}[i)]
\item $(t',t)=(1,0)$; $\alpha_{t'}=(0.5, 0.5, 0.5)'; \alpha_{t}=(1, 1, 1)'$; $c=(1, 1, 1)'$
\item $(t',t)=(1,0)$; $\alpha_{t'}=(1, 0, 0)'; \alpha_{t}=(0, 1, 1)'$; $c=(1, 1, 0)'$
\item $(t',t)=(1,0)$; $\alpha_{t'}=(0, 1, 0)'; \alpha_{t}=(1, 0, 1)'$; $c=(1, 0, 1)'$
\item $(t',t)=(1,0)$; $\alpha_{t'}=(0.5, 0.5, -0.5)'; \alpha_{t}=(0, 0, 1)'$; $c=(1, 0, 0)'$
\item $(t',t)=(1,0)$; $\alpha_{t'}=(0, 0, 1)'; \alpha_{t}=(1, 1, 0)'$; $c=(0, 1, 1)'$
\item $(t',t)=(1,0)$; $\alpha_{t'}=(0.5, -0.5, 0.5)'; \alpha_{t}=(0, 1, 0)'$; $c=(0, 1, 0)'$
\item $(t',t)=(1,0)$; $\alpha_{t'}=(-0.5, 0.5, 0.5)'; \alpha_{t}=(1, 0, 0)'$; $c=(0, 0, 1)'$
\end{enumerate}
\subsubsection*{SM.2.3.4}
$$ A= \begin{bmatrix}
0 & 1 & 0\\
1 & 0 & 1\\
0 & 1 & 1\\
\end{bmatrix} $$
\begin{enumerate}[i)]
\item $(t',t)=(1,0)$; $\alpha_{t'}=(2, 1, -1)'; \alpha_{t}=(0, 1, 1)'$; $c=(1, 1, 0)'$
\end{enumerate}
\subsubsection*{SM.2.3.5}
$$ A= \begin{bmatrix}
0 & 1 & 0 & 1\\
1 & 1 & 0 & 0\\
0 & 0 & 1 & 1\\
\end{bmatrix} $$
\begin{enumerate}[i)]
\item $(t',t)=(1,0)$; $\alpha_{t'}=(0, 1, 1)'; \alpha_{t}=(0, 1, 1)'$; $c=(1, 1, 1, 1)'$
\item $(t',t)=(1,0)$; $\alpha_{t'}=(1, 0, 0)'; \alpha_{t}=(-1, 1, 1)'$; $c=(0, 1, 0, 1)'$
\item $(t',t)=(1,0)$; $\alpha_{t'}=(0, 1, 0)'; \alpha_{t}=(0, 0, 1)'$; $c=(1, 1, 0, 0)'$
\item $(t',t)=(1,0)$; $\alpha_{t'}=(0, 0, 1)'; \alpha_{t}=(0, 1, 0)'$; $c=(0, 0, 1, 1)'$
\item $(t',t)=(1,0)$; $\alpha_{t'}=(-1, 1, 1)'; \alpha_{t}=(1, 0, 0)'$; $c=(1, 0, 1, 0)'$
\end{enumerate}
\subsubsection*{SM.2.3.6}
$$ A= \begin{bmatrix}
0 & 1 & 1 & 1\\
0 & 1 & 0 & 1\\
0 & 0 & 1 & 1\\
\end{bmatrix} $$
\begin{enumerate}[i)]
\item $(t',t)=(1,0)$; $\alpha_{t'}=(2, -1, -1)'; \alpha_{t}=(-2, 1, 1)'$; $c=(0, 1, 1, 0)'$
\end{enumerate}
\subsubsection*{SM.2.3.7}
$$ A= \begin{bmatrix}
1 & 0 & 1\\
0 & 0 & 1\\
0 & 1 & 1\\
\end{bmatrix} $$
\begin{enumerate}[i)]
\item $(t',t)=(1,0)$; $\alpha_{t'}=(1, -1, 0)'; \alpha_{t}=(0, 0, 1)'$; $c=(1, 0, 0)'$
\item $(t',t)=(1,0)$; $\alpha_{t'}=(0, -1, 1)'; \alpha_{t}=(1, 0, 0)'$; $c=(0, 1, 0)'$
\end{enumerate}
\subsubsection*{SM.2.3.8}
$$ A= \begin{bmatrix}
1 & 0\\
1 & 0\\
0 & 1\\
\end{bmatrix} $$
\begin{enumerate}[i)]
\item $(t',t)=(1,0)$; $\alpha_{t'}=(0.5, 0.5, 0)'; \alpha_{t}=(0, 0, 1)'$; $c=(1, 0)'$
\item $(t',t)=(1,0)$; $\alpha_{t'}=(0, 0, 1)'; \alpha_{t}=(0.5, 0.5, 0)'$; $c=(0, 1)'$
\end{enumerate}
\subsubsection*{SM.2.3.9}
$$ A= \begin{bmatrix}
1 & 0 & 0\\
0 & 1 & 0\\
0 & 0 & 1\\
\end{bmatrix} $$
\begin{enumerate}[i)]
\item $(t',t)=(1,0)$; $\alpha_{t'}=(1, 1, 0)'; \alpha_{t}=(0, 0, 1)'$; $c=(1, 1, 0)'$
\item $(t',t)=(1,0)$; $\alpha_{t'}=(1, 1, 1)'; \alpha_{t}=(0.5, 0.5, 0.5)'$; $c=(1, 1, 1)'$
\item $(t',t)=(1,0)$; $\alpha_{t'}=(1, 0, 0)'; \alpha_{t}=(-0.5, 0.5, 0.5)'$; $c=(1, 0, 0)'$
\item $(t',t)=(1,0)$; $\alpha_{t'}=(0, 1, 0)'; \alpha_{t}=(0.5, -0.5, 0.5)'$; $c=(0, 1, 0)'$
\item $(t',t)=(1,0)$; $\alpha_{t'}=(1, 0, 1)'; \alpha_{t}=(0, 1, 0)'$; $c=(1, 0, 1)'$
\item $(t',t)=(1,0)$; $\alpha_{t'}=(0, 1, 1)'; \alpha_{t}=(1, 0, 0)'$; $c=(0, 1, 1)'$
\item $(t',t)=(1,0)$; $\alpha_{t'}=(0, 0, 1)'; \alpha_{t}=(0.5, 0.5, -0.5)'$; $c=(0, 0, 1)'$
\end{enumerate}
\subsubsection*{SM.2.3.10}
$$ A= 
 $$
\begin{enumerate}[i)]
\item $(t',t)=(1,0)$; $\alpha_{t'}=(0, 0, 1)'; \alpha_{t}=(1, 1, -2)'$; $c=(0, 1, 1, 1, 1, 0, 0, 0, 0)'$
\end{enumerate}

	\end{appendices}
	
\end{document}